\tikzstyle{vecArrow} = [thick, decoration={markings,mark=at position
\tikzstyle{innerWhite} = [semithick, white,line width=1.4pt, shorten >= 4.5pt]
\newcommand{\op}[1]{\prescript{o}{}{#1}}
\newcommand{\otF}[1]{\prescript{\F}{}{#1}}
\newcommand{\oF}[1]{\prescript{\F}{}{#1}}
\newcommand{\oFX}[1]{\prescript{\F^X}{}{#1}}
\newcommand{\oG}[1]{\prescript{\G}{}{#1}}
\newcommand{\opQ}[1]{\prescript{Q,o}{}{#1}}
\newcommand{\optP}[1]{\prescript{\tilde P,o}{}{#1}}
\newcommand{\one}{\mathbbm 1}
\def\reals{\mathbb{R}}
\def\comp{\raise 1pt \hbox{$\scriptstyle\circ$}}
\def\upto{{\raise 1pt \hbox{$\scriptstyle \,\nearrow\,$}}}
\def\downto{{\raise 1pt \hbox{$\scriptstyle \,\searrow\,$}}}
\def\B{{\cal B}}
\def\C{{\cal C}}
\def\E{{\cal E}}
\def\bE{\mathbb E}
\def\F{{\cal F}}
\def\bF{{\mathbb F}}
\def\G{{\cal G}}
\def\bG{{\mathbb G}}
\def\bH{{\mathbb H}}
\def\M{{\cal M}}
\def\bN{\mathbb N}
\def\bP{{\mathbb P}}
\def\erf{\text{erf}}
\theoremstyle{plain}
\newtheorem{theorem}{Theorem}[section]
\newtheorem{lemma}[theorem]{Lemma}
\newtheorem{corollary}[theorem]{Corollary}
\newtheorem{proposition}[theorem]{Proposition}
\newtheorem{remark}[theorem]{Remark}
\theoremstyle{definition}
\newtheorem{example}[theorem]{Example}
\newtheorem{problem}[theorem]{Problem}
\newtheorem{notation}[theorem]{Notation}
\theoremstyle{empty}
\numberwithin{equation}{section}
\begin{document}
\title{Optional projection under equivalent local martingale measures}

\author{Francesca Biagini \and Andrea Mazzon \and Ari-Pekka Perkki\"o}

\maketitle

\begin{abstract}
Motivation for this paper is to understand the impact of information on asset price bubbles and perceived arbitrage opportunities. This boils down to study optional projections of $\bG$-adapted strict local martingales into a smaller filtration $\bF$ under equivalent martingale measures. We give some general results as well as analyze in details two specific examples given by the inverse three dimensional Bessel process and a class of stochastic volatility models.
\end{abstract}

\textbf{Keywords:}
 Local martingale, Optional projection, Local martingale measure, Filtration shrinkage, Bubbles

\textbf{AMS subject classications:}  	60G07, 60H30, 60G44

\section{Introduction}\label{sec:introduction}
The initial motivation for this paper is to study the impact of information on asset price bubbles and perceived arbitrage opportunities. In particular, we consider the case when a group of investors have access to restricted information. It is then an interesting question to ask whether these investors perceive the same bubbles seen by traders with full information, and whether they see \emph{illusionary} arbitrage opportunities,  as already investigated in \citep{jarrow2013positive} for a particular setting.

These questions boil down to study optional projections of $\bG$-adapted strict local martingales into a smaller filtration $\bF$ because, as we extensively explain in Section \ref{sec:setting}, bubbles are usually characterised as non-negative strict local martingales, and it is well known that the optional projection of a strict local martingale may fail to be a local martingale in the smaller filtration, see for example  \citep{follmer2011local} and \citep{larsson2014filtration}. Moreover, if the optional projection admits no equivalent local martingale measure, an illusionary arbitrage opportunity is perceived under the filtration $\bF$.

However, the impact of partial information cannot be disjoint to the choice of the underlying pricing measure, as the optional projection depends on the filtration and on the equivalent local martingale measure (ELMM in short in the sequel) with respect to which the conditional expectation is computed. For this reason, the study of the presence of bubbles as well as of perceived arbitrage opportunities under restricted information motivates an investigation of optional projections under changes of ELMMs.

In particular we study the relation among the set $\M_{loc}$ of ELMMs for a process $X$ and the set $\M_{loc}^o$ of measures $Q \sim P$ such that the optional projection under $Q$ is a $Q$-local martingale. We obtain full chacterization of the relations between $\M_{loc}$ and $\M_{loc}^o$: this allows to answer the original questions on asset price bubbles and arbitrages under partial information, as well as to provide new mathematical insights about optional projections. We focus on two main cases: the inverse three-dimensional Bessel process and an extension of the stochastic volatility model of \citep{sin1998complications}. We also consider the optional projection of a $\bG$-adapted process into the delayed filtration  $(\G_{t -\epsilon})_{t \ge 0}$, $\epsilon>0$: this is a case with interesting consequences for financial applications, as it represents the scenario where investors in the market have access to the information with a given positive time delay.

Moreover, we provide an invariance theorem about local martingales which are solution of a one-dimensional SDE in the natural filtration of an $n$-dimensional Brownian motion, see Theorem \ref{thm:invariance}. Specifically, we see that under mild conditions, such a local martingale $X$ has same law under $P$ as under every $Q \in \M_{loc}(X)$. This result is useful in our applications for two reasons: first, it implies that if $X$ is perceived as bubble under $P$, it is a bubble under any ELMM $Q$. Further, Theorem \ref{thm:solutionnolocmart}  gives a result about optional projections into a filtration $\bF \subseteq \bF^X$, with respect to an ELMM $Q$ such that $X$ has same law under $P$ as under $Q$: a class of local martingales $X$ such that all ELMM for $X$ have this characteristic is indeed provided by Theorem \ref{thm:invariance}. Important applications of the setting of Theorem \ref{thm:solutionnolocmart}, i.e., when $\bF \subseteq \bF^X$, are given by delayed information and by the model of \citep{cetin2004modeling}, where the market does not see the value of a firm but only knows when the firm has positive cash balances or when it has negative or zero cash balances.



The rest of the paper is organised as follows. In Section \ref{sec:setting} we motivate in details our analysis by presenting financial applications of optional projections with particular focus on some questions about financial asset bubbles and arbitrages under restricted information. We also describe our setting and summarize the aims of our study into five mathematical problems about optional projections of strict local martingales which we study in the sequel. Moreover, we here anticipate the main results of the paper.
In Section \ref{sec:genresults} we give some general results about optional projections of local martingales under equivalent local martingale measures, that will be used in Sections \ref{sec:bessel} and  \ref{sec:stocvol} in order to discuss the problems stated in Section \ref{sec:setting}.
More precisely, Section \ref{sec:bessel} is devoted to the inverse three-dimensional Bessel process, projected into different filtrations, whereas in Section  \ref{sec:stocvol} we focus on a class of two-dimensional stochastic volatility processes. We conclude the paper with Appendix \ref{app:opttransport}, where we characterize the martingale property of the  optional projection of a local martingale via optimal transport.  

\section{Motivation and setting}\label{sec:setting}

It is a well known fact that, when projecting a stochastic process into a filtration with respect to which it is not adapted, some basic properties can be lost. The most famous issue in this sense is that the optional projection of a local martingale may fail to be a local martingale, see for example \citep[Theorem 3.7]{follmer2011local} and \citep[Corollary 1]{larsson2014filtration}. 

When coming to financial applications, the possibly different nature of the optional projection implies that the market model with incomplete information (represented by the smaller filtration) may retain deeply different features with respect to the market model with full information.

As an example, an important financial implication of filtration shrinkage regarding credit risk is given by the models proposed in \citep{cetin2004modeling} and \citep{jarrow2007information}: taking inspiration from the work of \citep{jarrow2004structural}, the authors characterise reduced form models as optional projections of structural models into a smaller filtration. In particular, the cash balance of a firm, represented by a process $X=(X_t)_{t \ge 0}$, is adapted to the filtration $\bG$ of the firm's management, but not necessarily to the filtration $\bF$ representing the information available to the market. For this reason, the default time of the firm is a predictable stopping time for the firm's management looking at $\bG$, but it may be totally inaccessible for the market, based on $\bF$. Thus, the reduced information makes default a surprise to the market.

In \citep{jarrow2007information}, the authors introduce a zero-coupon bond issued by the firm under consideration, that promises to pay $1$ dollar at time $T$ and pays $\delta<1$ when defaulting at a time $\tau$ before $T$. The market price $v(t,T)$ of the bond at time $t<T $ depends on the expectation of $\one_{\{\tau <T\}}$ conditional to $\F_t$, under a martingale measure $P$ for $X$: this differs from the value of the bond $v^{mgmt}(t,T)$ computed by the firm's management, which depends on the expectation of $\one_{\{\tau <T\}}$ conditional to the larger information $\G_t$. In particular, if the spot rate is deterministic, we have $v(t,T)=\mathbb{E}^P[v^{mgmt}(t,T)|\F_t]$, that is, the market price of the bond is the optional projection of the value estimated by the management.

This feature of the model, where the optional projection of a process into the smaller filtration gives the traded market price of a financial product, may be extended besides credit risk: the management of a firm with a cash flow $X=(X_t)_{t \ge 0}$, for example, may be willing to hedge $X$ by selling a product in the market. However, it does not sell $X$ itself, but the optional projection $\op X$ of $X$ into the filtration $\bF$, which is typically smaller than the information $\bG$ containing also the value of the firm.

An interesting question to analyse is then if traders with partial information $\bF \subset \bG$ may perceive different characteristics of the market they observe with respect to the individuals with access to the complete information $\bG$. In particular, we focus on the perception of bubbles and possible arbitrages.

Bubbles are defined as the difference between the market wealth $W$ of an asset and its fundamental wealth $W^F$, usually seen as the expected sum of future discounted cash flows provided by the asset itself. In the context of the martingale theory of bubbles (see among the others \citep{CoxHobson}, \citep{LoewensteinWillard}, \citep{JarrowProtter2007},   \citep{JarrowProtter2010}, \citep{JarrowKchiaProtter}, \citep{Biagini}, \citep{Protter2013}), bubbles are non-negative, and in particular strictly positive if the price process of the asset is driven by a strict local martingale.

For this reason, an asset price process may be perceived as a bubble for a measure $Q$ under which $W$ is a strict local martingale, but not for a measure $R$ under which $W$ is a true martingale. To stress the fact that the perception of a bubble depends on the equivalent local martingale measure (ELMM) taken into consideration, which represents the pricing instrument chosen by the market, the concept of a $Q$-bubble is introduced, see \citep{Protter2013}. 

In the setting of a probability space $(\Omega,\F, P)$ endowed with a filtration $\bG=(\G_t)_{t \ge 0}$, the value of a $Q$-bubble at time $t$ is defined as 
$$
\beta^Q_t:=W_t-W^{F,Q}_t=W_t-\mathbb{E}^Q[W_T|\F_t], \quad t \ge 0,
$$
where $Q$ is an ELMM for $W$ and $T$ is the maturity or default time of the asset. 

The most interesting case is the one of incomplete markets (see \citep{JarrowProtter2010}, \citep{Protter2013}, \citep{Biagini}) where infinitely many ELMMs exist: here, the birth and the evolution of a bubble are then determined by a flow of different ELMMs that gives rise to a corresponding shifting perception of the fundamental value of the asset.

However, when coming to a setting where a dichotomy is present between a process $X$ adapted to a filtration $\bG$ and the optional projection $\op X$ into a filtration $\bF \subset \bG$, the bubble is not only characterised by the view of the market, identified by the ELMM, but also by the level of information, identified by the filtration. 

Let us consider our example above, and suppose that $X$ is a strict local martingale under the given measure $P$, so that the management of the firm with cash flow given by $X$ perceives a strictly $(P,\bG)$-bubble defined by
$$
\beta^{(P,\bG)}_t:=X_t-\mathbb{E}^P[X_T|\G_t], \quad t \ge 0.
$$
Consider then investors in the market with access to the restricted information $\bF$. If the optional projection $\op X$ under $P$ is a $(P,\bF)$-local martingale, then traders perceive a $(P,\bF)$-bubble
\begin{align}
\beta^{(P,\bF)}_t&:=\op X_t-\mathbb{E}^P[\op X_T|\F_t] = \op X_t-\mathbb{E}^P\left[\mathbb{E}^P[X_T|\F_T]|\F_t\right]  \notag \\ & = \mathbb{E}^P[X_t - X_T|\F_t], \quad t \ge 0.\label{eq:PFbubble}
\end{align}
However, it is well known that it is not always the case that the optional projection of a strict local martingale is a local martingale, see \citep[Theorem 3.7]{follmer2011local} and \citep[Corollary 1]{larsson2014filtration}. When the local martingale property is not preserved with respect to the optional projection under $P$, then $\beta^{(P, \bF)}$ in \eqref{eq:PFbubble} is not a $P$-bubble process. 

It is then natural to ask whether there exists an ELMM $Q$ for $X$ such that the $Q$-optional projection defined by $\opQ X_t = \mathbb{E}^Q[X_t|\F_t]$ is a $(Q,\bF)$-local martingale. In this case, indeed, it would be possible to define the $(Q,\bF)$-bubble 
\begin{align}\label{eq:QFbubble}
\beta^{(Q,\bF)}_t&:=\opQ X_t-\mathbb{E}^Q[\opQ X_T|\F_t] = \mathbb{E}^Q[X_t - X_T|\F_t], \quad t \ge 0,
\end{align}
and compare it with the $(Q,\bG)$-bubble
$$
\beta^{(Q,\bG)}_t:=X_t-\mathbb{E}^Q[X_T|\G_t], \quad t \ge 0.
$$
In particular, an interesting question is if both the management of the firm and the traders in the market perceive a strictly positive bubble under $Q$. This is the case if $X$ is a $(Q,\bG)$-strict local martingale and $\opQ X$ is a $(Q,\bF)$-strict local martingale. 

Other issues arise when considering the possible existence of perceived arbitrages. As already noted in \citep{jarrow2013positive}, traders with limited information may interpret the bubble's impact on the price process as an arbitrage opportunity. This happens if $X$ is a $(P,\bG)$-strict local martingale, $\op X$ fails to be a $(P,\bF)$-strict local martingale and, in addition, there exists no measure $Q \sim P$ under which $\op X$ is a local martingale. 

To summarise, four main questions may be asked about the possibly different perception of bubbles and arbitrage opportunities under limited information:
\begin{enumerate}
\item Does there exist an ELMM measure $Q$ for the original process $X$ such that $\beta^{(Q,\bF)}$ defined in \eqref{eq:QFbubble} is a bubble process with respect to $(Q,\bF)$?
\item Does there exist an ELMM measure $Q$ such that both the management of the firm with access to $\bG$ and traders with access to $\bF$ perceive a strictly positive bubble under $Q$?
\item Can a bubble under the probability $P$ and  filtration $\bG$ be perceived as an arbitrage opportunity under $P$ and $\bF$?
\item Is it  the case for every ELMM $Q$, that what is perceived as a bubble under the probability $Q$ and  filtration $\bG$, is perceived as an arbitrage opportunity under $Q$ and $\bF$?
\end{enumerate}

Goal of the present paper is to give some insights on the questions presented above, by providing an answer to five related mathematical problems about optional projections and ELMM we introduce in Section \ref{subsec:setting}.

\subsection{Mathematical setting}\label{subsec:setting}

Consider a probability space $(\Omega, \F, P)$ equipped with two filtrations $\bF=(\F_t)_{t \ge 0}$, $\bG=(\G_t)_{t \ge 0}$, satisfying the usual hypothesis of right-continuity and completeness, with $\bF \subset \bG$. Moreover, let $X$ be a positive c\`adl\`ag $(P,\bG)$-local martingale. Unless differently specified, we suppose in particular $X$ to be a strict $(P,\bG)$-local martingale.

For the rest of the paper, we adopt the following notation.

\begin{notation}
We denote by $\op X$ the optional projection of $X$ into $\bF$, i.e., the unique c\`adl\`ag process satisfying 
$$
\one_{\{\tau<\infty\}}\op X_{\tau} = \bE[\one_{\{\tau<\infty\}}X_{\tau}|\F_{\tau}] \quad a.s.,
$$
for every $\bF$-stopping time $\tau$. We also define $\opQ X$ to be the optional projection of $X$ under $Q \sim P$ into $\bF$, i.e.,  $(\opQ X)_t = \opQ X_t := \bE^Q[X_t|\F_t]$. We call $\opQ X$ the $Q$-optional  projection of $X$, whereas if we don't specify the measure, the optional projection is always meant to be with respect to $P$.

We call $\bF^X$ the natural filtration of $X$. Moreover, if $Q$ is a probability measure equivalent to $P$, we define $Z_{\infty}:=\frac{dQ}{dP}$ and denote by $\oF Z$, $\oFX Z$, $\oG Z$, the c\`adl\`ag processes characterised by 
\begin{equation}\label{eq:densities}
\oF Z_t = \bE[Z_{\infty}|\F_t], \quad \oFX Z_t = \bE[Z_{\infty}|\F^X_t],  \quad \oG Z_t = \bE[Z_{\infty}|\G_t], \quad t \ge 0,
\end{equation}
respectively. Moreover, for $\bH = \bF, \bF^X, \bG$, we denote
\begin{align}
\M_{loc}(X,\bH)&=\{Q \sim P, \quad \text{X is a $(Q,\bH)$-local martingale}\}, \notag \\
\M_{M}(X,\bH)&=\{Q \sim P, \quad \text{X is a $(Q,\bH)$-true martingale}\}, \notag \\
 \M_{L}(X,\bH)& =\{Q \sim P, \quad \text{X is a $(Q,\bH)$-strict local martingale}\}.\notag
\end{align}
We also set
{\small{
\begin{align}\notag
\M^o_{loc}(X, \bF):&=  \left\{Q \sim P, \quad \text{$\opQ X$ is a $(Q, \bF)$-local martingale} \right\}, \notag\\
\M_{loc}(X, \bG, \bF):&= \left\{Q \sim P, \quad \text{$X$ is a $(Q,\bG)$-local martingale}, \quad \left(\frac{dQ}{dP}|_{\G_t}\right)_{t \ge 0}  \text{ is $\bF$-adapted}\right\},\notag \\
\M^o_{loc}(X, \bG, \bF):&=  \left\{Q \sim P, \quad \text{$\opQ X$ is a $(Q, \bF)$-local martingale}, \quad \left(\frac{dQ}{dP}|_{\G_t}\right)_{t \ge 0}  \text{ is $\bF$-adapted}\right\}. \notag
\end{align}
}}
\end{notation}



We investigate the following properties of the model, related to the four questions formulated in Section \ref{sec:setting}. More precisely, we study when:
\begin{align}
& \M_{loc}(X,\bG) \cap \M^o_{loc}(X, \bF) \ne \emptyset;  \tag{P1} \label{prob:locmart} \\
& \M_{loc}(\op X,\bF) \ne \emptyset;\tag{P2} \label{prob:locmartgen}\\
& \M_{L}(X,\bG) \cap \M^o_{loc}(X, \bF) \ne \emptyset; \tag{P3}\label{prob:intersect}\\
& \M_{loc}(X, \bG, \bF)=\M^o_{loc}(X, \bG, \bF); \tag{P4} \label{prob:equal}\\
& \bigcup\limits_{Q \in \M_{loc}(X,\bG)} \M_{loc}(\opQ X, \bF) \ne \emptyset. \tag{P5}\label{prob:union}
\end{align}

%
%
%
%
%


Note that $\M_{L}(X,\bG), \M_{loc}(X,\bG) \ne \emptyset$, as $P \in \M_{L}(X,\bG)$ by hypothesis. Properties \eqref{prob:locmart}, \eqref{prob:locmartgen}, \eqref{prob:intersect} and \eqref{prob:union} trivially hold if $\op X$ is an $\bF$-local martingale, so the more interesting case is when the $P$-optional projection is not a local martingale. Under this hypothesis, properties \eqref{prob:locmart}-\eqref{prob:union} can hold or not depending on both the process $X$ and the filtration $\bF$, as illustrated in Sections \ref{sec:bessel} and \ref{sec:stocvol}. In particular, they are all related as we also discuss in the sequel. 

Note that if one of \eqref{prob:intersect} or \eqref{prob:equal} holds, \eqref{prob:locmart} is also true, and that \eqref{prob:locmart} trivially implies  \eqref{prob:locmartgen} . Moreover, the property \eqref{prob:union} is the weakest one: if any of \eqref{prob:locmart}, \eqref{prob:locmartgen}, \eqref{prob:intersect} or \eqref{prob:equal} holds, this implies that \eqref{prob:union} is true. 

This can be summarized in the following scheme:
\begin{center}
\begin{tikzpicture}[thick]
  \node[draw,rectangle] (P3) {P3};
  \node[inner sep=0,minimum size=0,right of=P3] (k) {}; 
  \node[draw,rectangle,right of=k] (P4) {P4};
  \node[draw,rectangle,below of=k] (P1) {P1};
  \node[draw,rectangle,below of=P1] (P2) {P2};
  \node[draw,rectangle,below of=P2] (P5) {P5};

  \draw[vecArrow] (P3) to (P1);
  \draw[vecArrow] (P4) to (P1);
  \draw[vecArrow] (P1) to (P2);
  \draw[vecArrow] (P2) to (P5);

\end{tikzpicture}
\end{center}

Properties \eqref{prob:locmart}-\eqref{prob:union} are strictly related to the questions introduced in Section \ref{sec:setting}. In particular, \eqref{prob:locmart}  implies the existence of a measure $Q \in \M_{loc}(X,\bG) \cap \M^o_{loc}(X, \bF)$, so that it is possible to define a $(Q,\bF)$-bubble as in \eqref{eq:QFbubble}. Moreover, if \eqref{prob:equal} holds, this implies that the measures $Q \sim P$ for which the $(Q,\bF)$-bubble in \eqref{prob:equal} is well defined and such that $(dQ/dP|_{\G_t})_{t \ge 0}  \text{ is $\bF$-adapted}$, are  the ELMMs for $X$ such that $(dQ/dP|_{\G_t})_{t \ge 0}  \text{ is $\bF$-adapted}$.

However, if we take a general measure  $Q \in \M_{loc}(X,\bG) \cap \M^o_{loc}(X, \bF)$, the $(Q,\bG)$-bubble as well as the $(Q,\bF)$-bubble can be zero. But if  \eqref{prob:intersect} holds, this means that there exists an ELMM $Q$ for $X$ such that a strictly positive $Q$-bubble is perceived both under $\bG$ and $\bF$, providing a positive answer to the second question of Section \ref{sec:setting}. 

Moreover, the third and the fourth question of Section \ref{sec:setting}, related to the perception of arbitrages under the smaller filtration $\bF$, are investigated in \eqref{prob:locmartgen} and \eqref{prob:union}, respectively.


In the rest of this section we anticipate the results we obtain. Note that properties \eqref{prob:locmart}, \eqref{prob:locmartgen}, \eqref{prob:intersect} and  \eqref{prob:union} trivially hold for the three-dimensional Bessel process projected into the filtration generated by $B^1$, $B^2$, as the optional projection is again a local martingale, see Section \ref{subsec:B1B2}. 

\begin{itemize}
\item \textbf{Property \eqref{prob:locmart}}: in Section \ref{sec:stocvol} we introduce a stochastic volatility process $S$, which is a strict local martingale under suitable conditions on the coefficients of its SDE, but whose optional projection into a specific sub-filtration is not a local martingale, see Theorem \ref{thm:nolocalized}. Property \eqref{prob:locmart} holds because $S$ admits a true martingale measure, see Proposition \ref{prop:sin}. 
On the contrary, \eqref{prob:locmart} is not true for the inverse three-dimensional Bessel process projected into a delayed filtration, i.e., $\bF = (\F_t)_{t \ge 0}$, $\F_t = \G_{t - \epsilon}$, $\epsilon>0$, see Remark \ref{rem:delayedextended}.

\item \textbf{Property \eqref{prob:locmartgen}}: a particular case of the stochastic volatility model introduced in Section \ref{sec:stocvol} permits to construct a strict local martingale  $X$ such that $\op X$ is not a local martingale but $\M_{loc}(\op X,\bF) \ne \emptyset$, see Example \ref{ex:problem2}, so that property \eqref{prob:locmartgen} holds. This is also the case for the optional projection into the delayed filtration of the process introduced in Example \ref{ex:invbesslminusbm}. 
For the inverse three-dimensional Bessel process projected into the delayed filtration, on the other hand, \eqref{prob:locmartgen} is not true, see Theorem \ref{thm:Mlocempty}.

\item \textbf{Property \eqref{prob:intersect}}: taking the stochastic volatility process of Section \ref{sec:stocvol} and adding it to a suitable strict local martingale, we get a strict local martingale $X$ and a filtration $\bF$ such that the optional projection of $X$ into $\bF$ is not a local martingale and \eqref{prob:intersect} holds, see Example \ref{ex:problem3}. On the contrary, this property is not true for the inverse three-dimensional Bessel process projected into the delayed filtration, see Section \ref{subsec:delayed}.

\item \textbf{Property \eqref{prob:equal}}: it holds for the  inverse three-dimensional Bessel process projected into the filtration generated by $B^1$ and $B^2$, see Theorem \ref{thm:invbessel12}. It is not true for all the examples when property \eqref{prob:locmart} does not hold, e.g., in the case of the inverse three-dimensional Bessel process projected into a delayed filtration.

\item \textbf{Property \eqref{prob:union}}: this is true for all the examples considered except for the inverse three-dimensional Bessel process projected into the delayed filtration, see Theorem \ref{thm:Mlocempty}.

\end{itemize}

 \section{General results}\label{sec:genresults}
 

We now give some general results about optional projections under changes of equivalent measures, in the setting introduced in Section \ref{subsec:setting}. Some of the examples that  we provide in Sections \ref{sec:bessel} and \ref{sec:stocvol} are based on these findings.

The following theorem provides a condition under which the $Q$-optional projection of $X$ is an $\bF$-local martingale under any ELMM $Q$.

\begin{theorem}\label{thm:samesequence}
 Assume that $X$ admits an $\bF$-localizing sequence which makes it a bounded $(P,\bG)$-martingale. Then $\opQ X$ is a $(Q, \bF)$-local martingale for every $Q\in \M_{loc}(X,\bG)$.
\end{theorem}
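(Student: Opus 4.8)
The plan is to prove that, for each $Q\in\M_{loc}(X,\bG)$, the \emph{same} $\bF$-localizing sequence $(\tau_n)$ supplied by the hypothesis is a $(Q,\bF)$-localizing sequence for $\opQ X$, i.e.\ that $(\opQ X)^{\tau_n}$ is a $(Q,\bF)$-martingale for every $n$; since $\tau_n\uparrow\infty$ $P$-a.s.\ (hence $Q$-a.s., as $Q\sim P$), this gives the claim. The first and conceptually decisive step is to observe that $(\tau_n)$ is automatically a $(Q,\bG)$-localizing sequence turning $X$ into a \emph{uniformly integrable} $(Q,\bG)$-martingale. Indeed, $\bF\subseteq\bG$ makes each $\tau_n$ a $\bG$-stopping time, so $X^{\tau_n}$ is a $(Q,\bG)$-local martingale (a local martingale stopped at a stopping time is again a local martingale); being bounded by hypothesis, $X^{\tau_n}$ is therefore a uniformly integrable $(Q,\bG)$-martingale, closed by some $\xi_n:=X^{\tau_n}_\infty\in L^\infty(Q)$. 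In particular, by optional sampling, $X_{\rho\wedge\tau_n}=X^{\tau_n}_\rho=\bE^Q[\xi_n\mid\G_\rho]$ for every $\bG$-stopping time $\rho$.

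Next I would identify the stopped $Q$-optional projection. Fix $t\ge 0$ and set $\sigma:=t\wedge\tau_n$, an $\bF$-stopping time with $\sigma<\infty$. Using that the càdlàg $Q$-optional projection satisfies $\opQ X_\sigma=\bE^Q[X_\sigma\mid\F_\sigma]$ for every $\bF$-stopping time $\sigma$, together with $(\opQ X)^{\tau_n}_t=\opQ X_{t\wedge\tau_n}$, substituting $X_{t\wedge\tau_n}=\bE^Q[\xi_n\mid\G_{t\wedge\tau_n}]$ from the previous step and applying the tower property with $\F_{t\wedge\tau_n}\subseteq\G_{t\wedge\tau_n}$ yields
$$
(\opQ X)^{\tau_n}_t=\bE^Q[\,\bE^Q[\xi_n\mid\G_{t\wedge\tau_n}]\mid\F_{t\wedge\tau_n}\,]=\bE^Q[\xi_n\mid\F_{t\wedge\tau_n}]\qquad\text{for all }t\ge 0.
$$
By optional sampling, the right-hand side is the value at time $t$ of the uniformly integrable $(Q,\bF)$-martingale $Y^n:=\bE^Q[\xi_n\mid\F_\cdot]$ stopped at $\tau_n$; by the optional stopping theorem $(Y^n)^{\tau_n}$ is a uniformly integrable $(Q,\bF)$-martingale. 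Hence $(\opQ X)^{\tau_n}$, being càdlàg and coinciding with $(Y^n)^{\tau_n}$ at every fixed time, is itself a $(Q,\bF)$-martingale, and letting $n\to\infty$ finishes the proof.

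The one subtlety to keep in mind is that the optional projection does \emph{not} in general commute with stopping at an $\bF$-stopping time — one cannot simply assert $\opQ{(X^{\tau_n})}=(\opQ X)^{\tau_n}$ — so the identity $(\opQ X)^{\tau_n}=\bE^Q[\xi_n\mid\F_{\cdot\wedge\tau_n}]$ must be derived directly via the tower property, exploiting that after stopping $X^{\tau_n}$ has actually become a \emph{closed} $(Q,\bG)$-martingale. Apart from this measurability bookkeeping, the genuine content of the argument is the opening observation that the given $P$-localizing sequence is simultaneously a $Q$-localizing sequence producing a uniformly integrable martingale, which rests only on the elementary fact that a bounded local martingale is a uniformly integrable martingale — a property insensitive to the choice of equivalent measure. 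I do not expect any serious obstacle beyond handling these points carefully.
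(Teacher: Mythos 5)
Your proof is correct and hinges on the same key observation that the paper uses: boundedness of $X^{\tau_n}$ means the $\bF$-localizing sequence $(\tau_n)$ localizes $X$ under every $Q\in\M_{loc}(X,\bG)$, not just under $P$. The difference is what happens next. The paper simply cites Theorem~3.7 of F\"ollmer and Protter (2011), which is exactly the statement that a $\bG$-local martingale admitting a localizing sequence of $\bF$-stopping times has an $\bF$-local martingale as optional projection, and stops there. You instead reprove that implication inline: you identify $(\opQ X)^{\tau_n}_t = \bE^Q[\xi_n\mid\F_{t\wedge\tau_n}]$ using the defining property of the $Q$-optional projection at the $\bF$-stopping time $t\wedge\tau_n$, the tower property, and optional sampling of the closed $(Q,\bG)$-martingale $X^{\tau_n}$, and then recognize the right-hand side as a stopped, uniformly integrable $(Q,\bF)$-martingale. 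This buys you a self-contained argument at the price of roughly a paragraph of extra work; the paper's version is shorter precisely because it outsources this step to the cited theorem. Your closing remark that optional projection does not commute with stopping in general, and that one must therefore argue through the defining property $\opQ X_\sigma = \bE^Q[X_\sigma\mid\F_\sigma]$ at $\bF$-stopping times $\sigma$ rather than asserting $\opQ{(X^{\tau_n})} = (\opQ X)^{\tau_n}$, is well placed and is indeed the correct way around that pitfall.
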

\begin{proof}
Let $Q\in \M_{loc}(X,\bG)$, and $(\tau_n)_{n \in \bN}$ be the assumed localizing sequence. Since $X^{\tau_n}$ is bounded for every $n \in \bN$, $(\tau_n)_{n \in \bN}$ localizes $X$ under $Q$ as well, and the result follow  by Theorem 3.7 in \citep{follmer2011local}.
\end{proof}

We now give a theorem which provides a class of local martingales whose law under $P$ is invariant under change of any equivalent local martingale measure. This result is of independent interest and also useful in our context, see 
Theorem \ref{thm:solutionnolocmart}.

\begin{theorem}\label{thm:invariance}
Let $\bG$ be the natural filtration of an $n$-dimensional Brownian motion $B=(B_t)_{t \ge 0}$, $n \in \mathbb{N}$. Moreover, let $X=(X_t)_{t \ge 0}$ be a $(P,\bG)$-local martingale, unique strong solution of the SDE 
\begin{align}
dX_t & = \sigma(t,X_t)dW_t,\quad  t \ge 0, \label{eq:Xforinv} 
\end{align} 
where $W$ is a one-dimensional $(P,\bG)$-Brownian motion, and the function $\sigma(\cdot, \cdot)$ is such that there exists a unique strong solution to \eqref{eq:Xforinv}. Suppose also that $\sigma(t,X_t) \ne 0$ a.s. for almost every $t \ge 0$.

Thus $X$ has the same law under $P$ as under any $Q \in \M_{loc}(X,\bG)$. In particular,  if $X$ is a $(P,\bG)$-strict local martingale, it is a $(Q,\bG)$-strict local martingale under any $Q \in \M_{loc}(X,\bG)$, and if it is a $(P,\bG)$-true martingale, it is a $(Q,\bG)$-true martingale under any $Q \in \M_{loc}(X,\bG)$.
\end{theorem}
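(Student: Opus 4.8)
The plan is to exploit the fact that, by Girsanov's theorem, passing from $P$ to any $Q \in \M_{loc}(X,\bG)$ only changes the drift of the $\bG$-Brownian motion $B$, and hence of $W$; since $X$ is a $(Q,\bG)$-local martingale with the same diffusion coefficient $\sigma(t,X_t)$, the added drift cannot touch the martingale part of $X$. Concretely, fix $Q \in \M_{loc}(X,\bG)$ and let $\oG Z$ be the density process from \eqref{eq:densities}. Since $\bG$ is the Brownian filtration, $\oG Z$ is a stochastic exponential $\oG Z_t = \exp(\int_0^t \theta_s \cdot dB_s - \tfrac12 \int_0^t |\theta_s|^2 ds)$ for some $\bG$-predictable $\reals^n$-valued $\theta$, and by Girsanov $\tilde B_t := B_t - \int_0^t \theta_s\, ds$ is an $n$-dimensional $(Q,\bG)$-Brownian motion. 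Writing $W_t = \int_0^t \phi_s \cdot dB_s$ for a predictable $\reals^n$-valued $\phi$ with $|\phi_s| = 1$ (such a representation exists because $W$ is a $(P,\bG)$-Brownian motion adapted to the Brownian filtration), we get $W_t = \tilde W_t + \int_0^t \phi_s \cdot \theta_s\, ds$ where $\tilde W_t := \int_0^t \phi_s \cdot d\tilde B_s$ is a one-dimensional $(Q,\bG)$-Brownian motion.

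Next I substitute this into the SDE: under $Q$,
\begin{align}\notag
dX_t = \sigma(t,X_t)\, d\tilde W_t + \sigma(t,X_t)\, \phi_t \cdot \theta_t\, dt.
\end{align}
But $X$ is a $(Q,\bG)$-local martingale by assumption, so its finite-variation part must vanish, i.e. $\sigma(t,X_t)\, \phi_t\cdot\theta_t = 0$ for $dt\otimes dQ$-a.e.\ $(t,\omega)$; using $\sigma(t,X_t)\neq 0$ a.s.\ for a.e.\ $t$, we conclude $\phi_t \cdot \theta_t = 0$ a.e., so in fact $dX_t = \sigma(t,X_t)\, d\tilde W_t$ under $Q$. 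Thus $X$ satisfies the \emph{same} SDE \eqref{eq:Xforinv} under $Q$, now driven by the $(Q,\bG)$-Brownian motion $\tilde W$. Since \eqref{eq:Xforinv} has a unique strong solution and $X_0$ is a.s.\ constant (measurable w.r.t.\ the trivial $\G_0$), uniqueness in law for the SDE — a consequence of strong existence and uniqueness, e.g.\ via a Yamada--Watanabe-type argument — gives that the law of $X$ under $Q$ equals the law of the solution driven by a Brownian motion on any stochastic basis, which is the same as the law of $X$ under $P$. This proves the first assertion.

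Finally, the statements about the strict local martingale versus true martingale dichotomy follow immediately: whether a nonnegative local martingale $X$ with $X_0$ constant is a true martingale is determined by whether $\bE[X_t] = X_0$ for all $t$, and $\bE^Q[X_t]$ depends only on the one-dimensional marginal law of $X_t$, which we have just shown is the same under $P$ and under $Q$. Hence $X$ is a true $(Q,\bG)$-martingale iff it is a true $(P,\bG)$-martingale, and a strict $(Q,\bG)$-local martingale iff it is a strict $(P,\bG)$-local martingale. The main obstacle I anticipate is the care needed in the measure-change step: one must justify the Brownian representation of $W$ and the predictable representation of $\oG Z$, confirm that the drift-removal argument is valid without a priori integrability of $\theta$ (handled by localization), and invoke the right form of ``strong uniqueness implies uniqueness in law'' for a possibly non-Lipschitz $\sigma$ — so the hypothesis that \eqref{eq:Xforinv} has a unique strong solution is doing essential work here.
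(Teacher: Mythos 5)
Your argument follows the paper's proof essentially verbatim: represent $W$ as a stochastic integral against $B$ via martingale representation, write $\frac{dQ}{dP}$ as a stochastic exponential, apply Girsanov, and use that $X$ remains a $Q$-local martingale together with $\sigma(t,X_t)\neq 0$ to conclude the drift term $\phi_t\cdot\theta_t$ vanishes, so $W$ (equivalently your $\tilde W=W$) is still a $(Q,\bG)$-Brownian motion and $X$ solves the same SDE under $Q$. The only difference is that you make explicit the final step — that strong uniqueness gives uniqueness in law and hence the equality of the laws of $X$ under $P$ and $Q$ — which the paper compresses into "the result follows."
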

\begin{proof}
The Martingale representation Theorem applied to the filtration $\bG$ implies that there exists a unique $\reals^n$-valued process  $\sigma^W= (\sigma^W_t)_{t \ge 0}$, progressive and such that $\int_0^t (\sigma^W_s)^2ds <\infty$ a.s. for all $t \ge 0$, such that
\begin{equation}\label{eq:reprW}
W_t = \int_0^t \sigma^W_s \cdot dB_s, \quad a.s., \quad t \ge 0.
\end{equation}
Consider a probability measure $Q \in \M_{loc}(X,\bG)$, defined by a density
\begin{equation}\label{eq:densityQ2}
\frac{dQ}{dP}|_{\G_t}=\mathcal{E}\left(\int_0^{t} \alpha_s \cdot dB_s\right), \quad t \ge 0,
\end{equation}
where $\alpha=(\alpha_t)_{t \ge 0}$ is a suitably integrable, $\bG$-adapted processes. Girsanov's Theorem implies that the dynamics of $W$ under $Q$ are given by
\begin{align}
W_t & = \int_0^t \sigma^W_s \cdot d \tilde B_s + \int_0^t (\sigma^W_s \cdot \alpha_s) ds, \quad  t \ge 0, 
\end{align}
where the process $\tilde B = (\tilde B_t)_{t \ge 0}$ defined by
 \begin{align}
\tilde B_t  = B_t - \int_0^t \alpha_s ds, \quad t \ge 0,
\end{align}
is a $(Q,\bG)$-Brownian motion. But since $Q \in \M_{loc}(X,\bG)$ and $\sigma(t,X_t) \ne 0$ a.s. for almost every $t \ge 0$, by equation \eqref{eq:Xforinv} we obtain that
$$
\sigma^W_t \cdot \alpha_t = 0, \quad t \ge 0,
$$
and $W$ is a $(Q,\bG)$-local martingale. 

Since $W$ is a $(P,\bG)$-Brownian motion, L\'evy's Characterization Lemma of the one-dimensional Brownian motion implies that $W$ is also a $(Q,\bG)$-Brownian motion, and the result follows.
\end{proof}


The next results regard the case when $P$ is the unique equivalent local martingale measure for $X$ with respect to $\bF^X$, i.e., when there are no ELMMs defined by a non trivial density adapted to $\bF^X$. In particular, Corollary \ref{cor:sameprojection} provides a proof to the fact that property \eqref{prob:locmart} does not hold in the setting of Remark \ref{rem:delayedextended}.

\begin{proposition}\label{prop:projectionGZ1}
Let $X$ be a $(P,\bG)$-local martingale, and suppose that  
$$
\M_{loc}(X, \bF^X) = \{P\}.
$$
Let $Q$ be a probability measure with $Q \in \M_{loc}(X,\bG)$, and $\oG Z$ be the density process defined in \eqref{eq:densities}.
Then it holds
\begin{equation}\notag
\bE[\oG Z_t|\F^X_t] = 1, \quad a.s., \quad t \ge 0.
\end{equation}
\end{proposition}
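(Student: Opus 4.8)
The plan is to show that the $\bF^X$-optional projection of $\oG Z$, call it $Y_t := \bE[\oG Z_t \mid \F^X_t]$, defines (up to normalization) an equivalent local martingale measure for $X$ with respect to $\bF^X$, and then invoke the hypothesis $\M_{loc}(X,\bF^X)=\{P\}$ to conclude $Y\equiv 1$. First I would recall that since $Q\in\M_{loc}(X,\bG)$, the process $\oG Z$ is a strictly positive $(P,\bG)$-martingale with $\oG Z_0 = 1$, and the pair $(X,\oG Z X)$ consists of $(P,\bG)$-local martingales; equivalently $\oG Z X$ is a $(P,\bG)$-local martingale. I then take $\bF^X$-optional projections of both $\oG Z$ and $\oG Z X$. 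Because $\bF^X\subseteq\bG$ and $X$ is $\bF^X$-adapted, the optional projection of $\oG Z X$ should equal $Y X$ (pulling the $\F^X_t$-measurable factor $X_t$ out of the conditional expectation, after the usual localization argument to justify this for the projected processes rather than just pointwise). By the standard fact that the optional projection of a nonnegative $(P,\bG)$-local martingale onto a subfiltration is a $(P,\bF^X)$-supermartingale — or, where a localizing sequence survives the projection, a local martingale — one gets that $Y$ is a nonnegative $(P,\bF^X)$-supermartingale (or local martingale) with $Y_0=1$, and similarly $YX$ is a $(P,\bF^X)$-supermartingale.

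The cleaner route, which I would actually carry out, avoids worrying about whether the projected processes are true local martingales versus merely supermartingales: I would argue directly on the time-$t$ level. Fix $t\ge 0$. The measure $R^t$ on $\F^X_t$ with density $\oF Z^{\,X}_t := \bE[\oG Z_t\mid\F^X_t] = Y_t$ relative to $P|_{\F^X_t}$ is a probability measure equivalent to $P$ on $\F^X_t$ provided $\bE[Y_t]=1$, which holds since $\bE[Y_t]=\bE[\oG Z_t]=1$. One checks that under the flow $(Y_t)_{t\ge 0}$ the process $X$ remains a local martingale in $\bF^X$: for $\bF^X$-stopping times $\sigma\le\tau$ in the localizing sequence, $\bE[Y_\tau X_\tau \mid \F^X_\sigma] = \bE[\bE[\oG Z_\tau X_\tau\mid \F^X_\tau]\mid\F^X_\sigma] = \bE[\oG Z_\tau X_\tau\mid\F^X_\sigma] = \bE[\oG Z_\sigma X_\sigma\mid\F^X_\sigma] = Y_\sigma X_\sigma$, using optional sampling for the $(P,\bG)$-local martingale $\oG Z X$ and the tower property through $\F^X_\sigma\subseteq\G_\sigma$. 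Hence $(Y_t)_{t\ge0}$ is (the density process of) an element of $\M_{loc}(X,\bF^X)$. By hypothesis this set is the singleton $\{P\}$, forcing $Y_t = \oF Z^{\,X}_t = 1$ a.s. for every $t\ge 0$, which is exactly the claim $\bE[\oG Z_t\mid\F^X_t]=1$.

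The main obstacle I anticipate is the measurability/integrability bookkeeping needed to turn the pointwise identities into genuine statements about processes: one must verify that $\oG Z X$ admits an $\bF^X$-localizing sequence (equivalently, control the stopping times so that $\oG Z X$ stopped is a true $(P,\bG)$-martingale and optional sampling applies), and that taking $\bF^X$-optional projections commutes with pulling out the adapted factor $X_t$. A subtlety is that $\oG Z$ itself need only be a $(P,\bG)$-local martingale a priori if $Q$ is merely an ELMM in the local sense — but as the Radon–Nikodým density process of an equivalent probability measure it is automatically a uniformly integrable $(P,\bG)$-martingale with $\oG Z_t = \bE[Z_\infty\mid\G_t]$, so $Y_t=\bE[\oG Z_t\mid\F^X_t]=\bE[Z_\infty\mid\F^X_t]=\oF Z^{\,X}_t$ is likewise a uniformly integrable $(P,\bF^X)$-martingale, which removes most of the difficulty and makes $\bE[Y_t]=1$ immediate. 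With that observation, the only real work is the optional-sampling computation above showing $X$ stays a local martingale under the density flow $Y$, and the rest follows from the uniqueness assumption.
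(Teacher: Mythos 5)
Your proposal identifies the same core idea as the paper: show that $\oFX Z_t := \bE[Z_\infty\mid\F^X_t]$ is the density process of a measure in $\M_{loc}(X,\bF^X)$, then invoke the uniqueness hypothesis to force $\oFX Z\equiv 1$, and finish with the tower property. Where the two proofs diverge is in how they establish that this candidate measure actually lies in $\M_{loc}(X,\bF^X)$. The paper argues entirely inside $\bF^X$: since $X$ is a $(Q,\bG)$-local martingale and is $\bF^X$-adapted, it is a $(Q,\bF^X)$-local martingale (this is the classical filtration-shrinkage result for adapted local martingales, as in Stricker and F\"ollmer--Protter), which supplies an $\bF^X$-localizing sequence; then the standard Bayes/Girsanov lemma within $\bF^X$ gives that $\oFX Z\,X$ is a $(P,\bF^X)$-local martingale.

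Your ``cleaner route'' instead tries to run optional sampling against the $(P,\bG)$-local martingale $\oG Z\,X$ at $\bF^X$-stopping times $\sigma\le\tau$ ``in the localizing sequence.'' This is where the genuine gap sits. The localizing sequence that comes for free with the $(P,\bG)$-local martingale $\oG Z\,X$ consists of $\bG$-stopping times, and there is no a priori reason that $\bF^X$-stopping times reduce it; the identity $\bE\bigl[\oG Z_\tau X_\tau\mid\F^X_\sigma\bigr]=\oG Z_\sigma X_\sigma$ is exactly what fails in general when the optional projection of a local martingale is only a supermartingale (a point you correctly flag in your first paragraph, and again at the end as ``the main obstacle,'' but never close). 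The way to close it is precisely the paper's step: pass first to $X$ being a $(Q,\bF^X)$-local martingale (getting $\bF^X$-stopping times $\tau_n$ with $X^{\tau_n}$ a $(Q,\bF^X)$-martingale), and only then change measure within $\bF^X$ via $\bE^P[\oFX Z_{\tau_n} X_{\tau_n}\mid\F^X_\sigma]=\oFX Z_\sigma\,\bE^Q[X_{\tau_n}\mid\F^X_\sigma]=\oFX Z_\sigma X_\sigma$. With that one ingredient supplied, your computation becomes the paper's proof; without it, the optional-sampling step is unjustified.
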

\begin{proof}
If $Q \in \M_{loc}(X,\bG)$, it follows that $X$ is  also a $(Q,\bF^X)$-local martingale since it is obviously adapted to $\bF^X$.
This implies that $X \cdot \oFX Z$ is a $(P,\bF^X)$-local martingale, where $\oFX Z$ is defined in \eqref{eq:densities}. By the hypothesis $\M_{loc}(X, \bF^X) = \{P\}$, for every $t \ge 0$ we have
\begin{equation}\notag
1 = \bE[Z_{\infty}|\F^X_t] = \bE\left[\bE[Z_{\infty}|\G_t] |\F^X_t\right] = \bE[\oG Z_t|\F^X_t], \quad a.s., \quad t \ge 0.
\end{equation}
\end{proof}
\begin{corollary}
Let $X$ be a $(P,\bG)$-local martingale, and suppose that  
$$
\M_{loc}(X, \bF^X) = \{P\}.
$$
Thus if $X$ is a $(P,\bG)$-strict local martingale, it is a $(Q,\bG)$-strict local martingale under any $Q \in \M_{loc}(X,\bG)$, and if it is a $(P,\bG)$-true martingale, it is a $(Q,\bG)$-true martingale under any $Q \in \M_{loc}(X,\bG)$.
\end{corollary}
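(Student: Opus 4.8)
The plan is to read the corollary off Proposition \ref{prop:projectionGZ1} together with the elementary characterization of when a positive local martingale is a true martingale. Throughout, $X$ is positive (as assumed in Section \ref{subsec:setting}), and being a \emph{strict} local martingale means being a local martingale that is not a true martingale; so, since we fix $Q \in \M_{loc}(X,\bG)$, it suffices to show that $X$ is a $(Q,\bG)$-true martingale if and only if it is a $(P,\bG)$-true martingale.

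First I would use the change of measure together with Proposition \ref{prop:projectionGZ1} to match one-dimensional expectations. Let $\oG Z$ be the density process of \eqref{eq:densities} associated with $Q$. Since $X$ is a positive $(Q,\bG)$-local martingale it is a $(Q,\bG)$-supermartingale; in particular $X_0$ is $Q$-integrable and $\bE^Q[X_t] \le \bE^Q[X_0] < \infty$ for every $t \ge 0$, so $\oG Z_t X_t$ is $P$-integrable. As $X_t$ is $\F^X_t$-measurable and non-negative, pulling it out of the conditional expectation and using the tower property gives, for every $t \ge 0$,
$$
\bE^Q[X_t] = \bE\big[\oG Z_t X_t\big] = \bE\big[X_t\,\bE[\oG Z_t \mid \F^X_t]\big] = \bE[X_t],
$$
the last equality being exactly the conclusion $\bE[\oG Z_t\mid\F^X_t]=1$ of Proposition \ref{prop:projectionGZ1}. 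Hence $\bE^Q[X_t] = \bE^P[X_t]$ for all $t \ge 0$, including $t=0$.

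Then I would invoke the standard fact that a non-negative local martingale $Y$ with $\bE[Y_0]<\infty$ is a supermartingale, and is a true martingale precisely when $\bE[Y_t] = \bE[Y_0]$ for all $t$. Applying this to $X$ under both $P$ and $Q$, and using $\bE^Q[X_t] = \bE^P[X_t]$ for all $t$, we get that $X$ is a $(P,\bG)$-true martingale iff $\bE^P[X_t] = \bE^P[X_0]$ for all $t$, iff $\bE^Q[X_t] = \bE^Q[X_0]$ for all $t$, iff $X$ is a $(Q,\bG)$-true martingale. Since $Q \in \M_{loc}(X,\bG)$, this equivalence simultaneously yields both assertions of the corollary.

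The proof is short and essentially all its content is already in Proposition \ref{prop:projectionGZ1}; the only point that needs a little care — the mild "obstacle" here — is the integrability bookkeeping required to pull $X_t$ out of the conditional expectation, which is handled by the supermartingale bound $\bE^Q[X_t] \le \bE^Q[X_0] < \infty$ noted above.
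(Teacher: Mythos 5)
Your proof is correct and follows essentially the same route as the paper's: both use Proposition \ref{prop:projectionGZ1} to conclude $\bE^Q[X_t]=\bE^P[X_t]$ for all $t$, and then read off the true-/strict-local-martingale dichotomy from whether the expectation function $t\mapsto\bE[X_t]$ is constant. Your write-up is slightly more explicit about the integrability needed to pull $X_t$ out of the conditional expectation, but the underlying argument is identical.
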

\begin{proof}
Suppose that $X$ is a $(P,\bG)$-strict local martingale. Thus there exists $t \ge 0$ such that $\mathbb{E}^P[X_t]<X_0$, i.e., $X$ loses mass at some point. For the same $t$, we have 
\begin{align}
\bE^Q[X_t]&=\bE^P[Z_{\infty}X_t] = \bE^P\left[\bE^P[Z_{\infty}X_t|\G_t]\right] = \bE^P[\oG Z_tX_t] =  \bE^P\left[\bE^P[\oG Z_tX_t|\F^X_t]\right] \notag \\
&=\bE^P\left[X_t\bE^P[\oG Z_t|\F^X_t]\right]=\bE^P\left[X_t\right]<X_0,\quad a.s., \notag 
\end{align}
where the last equality follows from Proposition \ref{prop:projectionGZ1}. Thus $X$ loses mass at some point under $Q$ as well and it is therefore a $(Q,\bG)$-strict local martingale. Analogously, it can be seen that if $X$ is a $(P,\bG)$-true martingale, it is a $(Q,\bG)$-true martingale.
\end{proof}

\begin{corollary}\label{cor:sameprojection}
Let $X$ be a $(P,\bG)$-local martingale, and suppose that  
$$
\M_{loc}(X, \bF^X) = \{P\}.
$$
Thus for every probability measure $Q \in \M_{loc}(X,\bG)$, and every sub-filtration $\bF \subseteq \bF^X$, it holds $\op X = \opQ X$ a.s., i.e.,
$$
\bE^P[X_t|\F_t]=\bE^Q[X_t|\F_t], \quad a.s., \quad t \ge 0.
$$
\end{corollary}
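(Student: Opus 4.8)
The plan is to reduce the claim to the Bayes formula for conditional expectations under a change of measure, together with the key identity $\bE[\oG Z_t \mid \F^X_t] = 1$ from Proposition \ref{prop:projectionGZ1}. Fix $t \ge 0$ and fix a sub-filtration $\bF \subseteq \bF^X$, so that $\F_t \subseteq \F^X_t$. The Bayes rule for the $Q$-conditional expectation of the $\G_t$-measurable random variable $X_t$ with respect to $\F_t$ reads
$$
\bE^Q[X_t \mid \F_t] = \frac{\bE^P[\oG Z_t X_t \mid \F_t]}{\bE^P[\oG Z_t \mid \F_t]},
$$
since $\oG Z$ is precisely the density process of $Q$ with respect to $P$ restricted to $\bG$, and $X_t$ is $\G_t$-measurable.

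Next I would evaluate the two conditional expectations on the right-hand side by conditioning through the intermediate $\sigma$-field $\F^X_t$, using the tower property (legitimate because $\F_t \subseteq \F^X_t$). For the denominator, $\bE^P[\oG Z_t \mid \F_t] = \bE^P\big[\bE^P[\oG Z_t \mid \F^X_t] \mid \F_t\big] = \bE^P[1 \mid \F_t] = 1$ by Proposition \ref{prop:projectionGZ1}. For the numerator, since $X_t$ is $\F^X_t$-measurable, it factors out of the inner conditional expectation:
$$
\bE^P[\oG Z_t X_t \mid \F_t] = \bE^P\big[ X_t\,\bE^P[\oG Z_t \mid \F^X_t]\mid \F_t\big] = \bE^P[X_t \mid \F_t],
$$
again using Proposition \ref{prop:projectionGZ1}. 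Combining, $\bE^Q[X_t \mid \F_t] = \bE^P[X_t \mid \F_t]$, which is exactly $\opQ X_t = \op X_t$ a.s. Since $t$ was arbitrary and both processes are taken càdlàg, they agree as processes.

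I expect the only genuine subtlety — really a bookkeeping point rather than a real obstacle — to be making sure the version of Bayes' formula is applied with the correct density. The density of $Q|_{\G_t}$ relative to $P|_{\G_t}$ is $\oG Z_t$, not $Z_\infty$ itself, and one must also note that $\oG Z_t > 0$ a.s. (as $Q \sim P$) so that the quotient is well defined. Everything else is the tower property plus pulling the $\F^X_t$-measurable factor $X_t$ out of a conditional expectation, which is routine. The hypothesis $\M_{loc}(X,\bF^X) = \{P\}$ enters only through Proposition \ref{prop:projectionGZ1}, which has already done the real work.
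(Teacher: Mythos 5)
Your proof is correct and follows essentially the same route as the paper's: both apply the Bayes rule with density $\oG Z_t$, use the tower property through the intermediate $\sigma$-field $\F^X_t$ (available since $\F_t \subseteq \F^X_t$), pull the $\F^X_t$-measurable factor $X_t$ out of the inner conditional expectation, and then invoke Proposition~\ref{prop:projectionGZ1} to reduce both numerator and denominator. The only cosmetic difference is that the paper begins with $\bE^P[Z_\infty \mid \F_t]$ in the denominator before expanding, whereas you write $\bE^P[\oG Z_t \mid \F_t]$ directly; these are equal by the tower property, so the arguments coincide.
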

\begin{proof}
Let $Q \in \M_{loc}(X,\bG)$. Thus we have
\begin{align}
\bE^Q[X_t|\F_t] &= \left(\bE^P[Z_{\infty}|\F_t]\right)^{-1} \bE^P[\oG Z_tX_t |\F_t] \notag\\
&=  \left(\bE^P\left[\bE^P[Z_{\infty}|\F^X_t]|\F_t\right]\right)^{-1} \bE^P\left[\bE^P[\oG Z_tX_t |\F^X_t]|\F_t\right]\notag\\
&=  \bE^P\left[X_t\bE^P[\oG Z_t |\F^X_t]|\F_t\right]\notag\\
&=  \bE^P\left[X_t|\F_t\right]\notag, \quad a.s., \quad t \ge 0,
\end{align}
where the second equality follows from the assumption that $\bF \subseteq \bF^X$ and the third and last equalities follow from Proposition \ref{prop:projectionGZ1}.
\end{proof}

 \subsection{Application to the equivalent measure extension problem}\label{subsec:larsson}
 
 As part of our results is based on the equivalent measure extension problem of \citep{larsson2014filtration}, we briefly recall it in the following, together with the most important results relating this problem to the optional projection of strict local martingales.
 
In the setting introduced in Section \ref{subsec:setting}, define first the stopping times  
$$
\tau_n := n \wedge \inf \{t \ge 0: X_t \ge n \}, \qquad \tau := \lim_{n \to \infty} \tau_n,
$$ 
and note that $\G_{\tau-}=\cup_{n\ge 1}\G_{\tau_n}$. 
\\
The F\"ollmer measure $Q_0$ is defined on $\G_{\tau-}$ as the probability measure  that coincides with $Q_n$ on $\G_{\tau_n}$ for each $n \ge 1$, where $Q_n \sim P$ is defined on $\G_{\tau_n}$ by $d Q_n = X_{\tau_n} dP$. For more details see \citep[Section 2]{larsson2014filtration}. \\
The new measure $Q_0$ is then only defined on $\G_{\tau-}$. It is then a natural question whether $Q_0$ can be extended to $\G_{\infty}$, i.e., of it is possible to find a measure $\tilde Q$ on $(\Omega, \G_{\infty})$ such that $\tilde Q=Q_0$ on $\G_{\tau-}$. There are several ways in which $Q_0$ can be extended to a measure $\tilde Q$ on $\G_{\infty}$, see \citep{larsson2014filtration}.
 A further problem is whether $Q_0$ admits an extension to $\G_{\infty}$ as specified below.
 \begin{problem}[Equivalent measure extension problem, Problem 1 of \citep{larsson2014filtration}]
Given the probability measure $Q_0$ introduced above, and two filtrations $\bF \subset \bG$, find a probability measure $Q$ on $(\Omega, \G_{\infty})$ such that:
\begin{enumerate}
\item $Q=Q_0$ on $\G_{\tau-}$;
\item The restrictions of $P$ and $Q$ to $\F_t$ are equivalent for each $t \ge 0$.
\end{enumerate}
\end{problem}
The existence of a solution to the equivalent measure extension problem is connected with the behaviour of the optional projection of X into $\bF$ by the following theorem. 
\begin{theorem}[Corollary 1 of \citep{larsson2014filtration}]\label{thm:larsson}
If $\op X$ is an $\bF$-local martingale, then the equivalent measure extension problem has no solution.
\end{theorem}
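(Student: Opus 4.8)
The plan is to argue by contraposition: assume a solution $Q$ to the equivalent measure extension problem exists and show that $\op X$ cannot be an $\bF$-local martingale. The key observation is that on $\G_{\tau-}$ the measure $Q$ agrees with the Föllmer measure $Q_0$, under which (by the standard theory recalled from \citep{larsson2014filtration}, Section 2) the process $1/X$ is a $\bG$-martingale up to $\tau$, and more importantly $P$ and $Q$ are related on $\G_{\tau_n}$ by $dQ = X_{\tau_n}\,dP$. So for any bounded $\F$-stopping time $\sigma$ one can compute $\bE^Q[\,\cdot\,]$-type quantities in terms of $\bE^P[X_{\cdot}\,\cdot\,]$, which is exactly what couples $Q$ to the optional projection $\op X$.

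First I would make precise the link between $\op X$ and $Q$: for an $\bF$-stopping time $\sigma$ bounded by $\tau_n$ (or more carefully, localized so as to stay before $\tau$), the defining property of the optional projection gives $\bE^P[\one_{\{\sigma<\infty\}}X_\sigma \mid \F_\sigma] = \one_{\{\sigma<\infty\}}\op X_\sigma$, while the change of measure gives $\bE^P[X_\sigma Y] = \bE^Q[Y]$ for $\F_\sigma$-measurable bounded $Y$. Combining, $\bE^Q[Y] = \bE^P[Y\,\op X_\sigma]$, i.e. on $\F_\sigma$ the density $dQ/dP$ restricted to $\F$-information is governed by $\op X$. The second step is to exploit condition (2) of the extension problem — equivalence of $P|_{\F_t}$ and $Q|_{\F_t}$ for every $t$ — to identify $\oF{}$-type density process: the $\bF$-optional projection $\op X$ (suitably stopped) must coincide with the $\bF$-density process of $Q$ with respect to $P$, which is a strictly positive $(P,\bF)$-martingale (not merely a local martingale), and in particular has constant expectation equal to $1$ times $X_0$ after normalization.

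The third step is where the contradiction is extracted. If $\op X$ were a genuine $(P,\bF)$-local martingale, then because $X$ (hence $\op X$, being a conditional expectation of a nonnegative process) is nonnegative, $\op X$ would be a nonnegative $(P,\bF)$-supermartingale, so $\bE^P[\op X_t] \le \op X_0 = X_0$ for all $t$. But the existence of the extension $Q$ with the equivalence property forces $\op X$ (localized up to $\tau$) to be a true martingale with $\bE^P[\op X_t] = X_0$, and crucially $Q$ must be defined on all of $\G_\infty$, equivalently the mass that $X$ loses as $t\to\tau$ must be recovered; the strict local martingale property of $X$ means mass is genuinely lost, and a local-martingale $\op X$ cannot account for it — one gets $\bE^P[\op X_t] < X_0$ for some $t$, contradicting the martingale identity forced by $Q$. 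I would formalize this by applying the monotone convergence / optional sampling along $\tau_n \upto \tau$ to the relation $\bE^Q[\cdot] = \bE^P[\op X_{\tau_n}\,\cdot]$ and using that $Q(\Omega)=1$ while $\bE^P[\op X_{\tau_n}] = X_0$ is preserved only if no mass escapes.

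The main obstacle I anticipate is the careful bookkeeping around the stopping time $\tau$ and the fact that $Q_0$ lives only on $\G_{\tau-}$ while $Q$ extends it to $\G_\infty$: one must be scrupulous about which $\sigma$-algebra and which measure each expectation is taken under, and about whether the $\F$-stopping times used in the optional-projection identity can be taken up to or beyond $\tau$. A secondary technical point is justifying that a nonnegative $(P,\bF)$-local martingale is a supermartingale and invoking the resulting expectation inequality — this is routine (Fatou), but the strictness needed for the contradiction relies on $X$ losing mass, which should be traced back to $X$ being a strict local martingale in $\bG$ combined with the structure of $Q_0$. I would keep the argument at the level of these expectation identities rather than grinding through the construction of $Q$.
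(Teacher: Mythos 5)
The paper cites this result verbatim from \citep{larsson2014filtration} and does not reprove it, so there is no in-paper proof to compare your attempt against; I am evaluating the attempt on its own. Your Step 2 contains a genuine gap that the rest of the argument leans on. You claim that a solution $Q$ forces $\op X$ (normalized by $X_0$) to coincide with the density process of $Q$ with respect to $P$ on $\bF$, and you then set $\bE^P[\op X_t]=X_0$ (from the density being a martingale) against $\bE^P[\op X_t]<X_0$ (from strictness). But $Q$ agrees with the F\"ollmer measure $Q_0$ only on $\G_{\tau-}$, and $\F_t\not\subseteq\G_{\tau-}$: for $A\in\F_t$ only $A\cap\{\tau>t\}$ lies in $\G_{\tau-}$. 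Splitting $Q(A)=Q(A\cap\{\tau>t\})+Q(A\cap\{\tau\le t\})$, the F\"ollmer computation gives $Q(A\cap\{\tau>t\})=\bE^P[\op X_t\one_A]/X_0$, while the second term is nonnegative and generically positive. Hence the $\F_t$-density of $Q$ with respect to $P$ \emph{dominates} $\op X_t/X_0$; it does not equal it. In fact $\bE^P[\op X_t]=\bE^P[X_t]<X_0$ and $\bE^P[dQ/dP|_{\F_t}]=1$ coexist without contradiction precisely because of this strict domination, so the clash of expectations you aim for never appears.

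The correct contradiction uses the $\bF$-localizing sequence $(\sigma_n)$ for $\op X$ rather than an identification with a density process. Since $(\op X)^{\sigma_n}$ is a genuine $(P,\bF)$-martingale, $\bE^P[X_{t\wedge\sigma_n}]=\bE^P[\op X_{t\wedge\sigma_n}]=X_0$; the F\"ollmer property then gives $Q_0(t\wedge\sigma_n<\tau)=1$, and since $\{t\wedge\sigma_n<\tau\}\in\G_{\tau-}$ this transfers to $Q$. Letting $n\to\infty$ and then $t\to\infty$ yields $Q(\sup_n\sigma_n\le\tau)=1$. Because $X$ is a strict local martingale, $Q(\tau\le t)=1-\bE^P[X_t]/X_0>0$ for some finite $t$, so $Q(\sup_n\sigma_n\le t)\ge Q(\sup_n\sigma_n\le\tau,\,\tau\le t)>0$. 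But $\{\sup_n\sigma_n\le t\}=\bigcap_n\{\sigma_n\le t\}\in\F_t$ and $P(\sup_n\sigma_n\le t)=0$ since $\sigma_n\uparrow\infty$ $P$-a.s., contradicting the equivalence $Q|_{\F_t}\sim P|_{\F_t}$. So the obstruction is a failure of absolute continuity on an $\F_t$-measurable event built from the localizing sequence, not the expectation identity you posited for $\op X$; the ``bookkeeping'' you flagged as a secondary technicality is in fact where the whole argument lives.
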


We now provide a result about optional projections under equivalent local martingale measures into a filtration $\bF  \subseteq \bF^X$. 
We start with a lemma.

\begin{lemma}\label{lem:solGFX}
Suppose that  the equivalent measure extension problem admits a solution for $P$ and the two filtrations $\bF \subseteq \bG$, with $\bF \subseteq \bF^X$. Then it also admits a solution for $P$, $\bF$ and $\bF^X$.
\end{lemma}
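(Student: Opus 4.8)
\textbf{Proof strategy for Lemma \ref{lem:solGFX}.}

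The plan is to start from a solution $Q$ of the equivalent measure extension problem for $P$ and the pair $\bF \subseteq \bG$, and to show that the \emph{same} measure $Q$ (or its restriction to the relevant $\sigma$-algebras) works as a solution for the pair $\bF \subseteq \bF^X$. Since $\bF^X \subseteq \bG$, the F\"ollmer measure $Q_0$ and the stopping times $\tau_n, \tau$ built from $X$ are unchanged whether we regard $X$ as living in $\bG$ or in $\bF^X$ — they depend only on the paths of $X$. The key point to verify is therefore the two defining conditions of the extension problem, now with $\bG$ replaced by $\bF^X$: namely (i) that $Q$ restricted to $\bF^X_{\tau-}$ coincides with $Q_0$ restricted to $\bF^X_{\tau-}$, and (ii) that the restrictions of $P$ and $Q$ to $\F_t$ are equivalent for each $t \ge 0$.

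Condition (ii) is immediate: it is literally condition 2 of the extension problem for $\bF \subseteq \bG$, which $Q$ already satisfies, and it does not involve the larger filtration at all. For condition (i), the natural argument is that $\bF^X_{\tau-} \subseteq \G_{\tau-}$ (because $\bF^X \subseteq \bG$ and the $\tau_n$ are the same), so $Q = Q_0$ on $\G_{\tau-}$ implies $Q = Q_0$ on the sub-$\sigma$-algebra $\bF^X_{\tau-}$ by simple restriction. Here one should be a little careful that the F\"ollmer measure associated with $X$ in the filtration $\bF^X$ is genuinely the restriction of the F\"ollmer measure associated with $X$ in $\bG$: this follows because both are characterised by $dQ_n = X_{\tau_n}\,dP$ on $\G_{\tau_n}$ (resp. on $\F^X_{\tau_n}$), and $X_{\tau_n}$ is $\F^X_{\tau_n}$-measurable, so restricting the $\bG$-version to $\F^X_{\tau_n}$ gives exactly the $\bF^X$-version. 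Consequently $Q$ restricted to $\bF^X_{\tau-}$ equals $Q_0^{\bF^X}$, which is what is needed.

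Finally, if one insists that a solution of the extension problem be a measure on $(\Omega, \F^X_\infty)$ rather than on $(\Omega, \G_\infty)$, one simply takes the restriction $Q|_{\F^X_\infty}$; the two conditions just checked only constrain $Q$ on sub-$\sigma$-algebras of $\F^X_\infty$, so this restriction is still a valid solution. I do not expect any serious obstacle here — the whole content is the monotonicity $\bF^X \subseteq \bG \Rightarrow \bF^X_{\tau-} \subseteq \G_{\tau-}$ together with the $\F^X$-measurability of $X_{\tau_n}$ making the F\"ollmer measures compatible under restriction. The only mildly delicate point, worth stating explicitly, is that $\tau$ and the $\tau_n$ are the same objects in both filtrations and that $\bF^X_{\tau-} = \cup_{n \ge 1} \F^X_{\tau_n}$, mirroring the identity $\G_{\tau-} = \cup_{n\ge 1}\G_{\tau_n}$ recalled before the problem statement.
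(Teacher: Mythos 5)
Your argument is correct and follows essentially the same route as the paper's proof: take the same measure $Q$, observe that the F\"ollmer measures built in $\bG$ and in $\bF^X$ agree on $\F^X_{\tau-}$ (because $dQ_n = X_{\tau_n}\,dP$ and $X_{\tau_n}$ is $\F^X_{\tau_n}$-measurable), and note that condition 2 of the extension problem involves only $\bF$ and so carries over verbatim. The paper states the compatibility of the two F\"ollmer measures with just the phrase \emph{by construction}; you have made that step explicit, which is a useful clarification but not a different proof.
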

\begin{proof}
Call $Q$ a solution of the equivalent measure extension problem for $P$, $\bF$ and $\bG$, and let $Q_0$ and $Q_0^X$ be the F\"ollmer measures on $\G_{\tau-}$ and $\F^X_{\tau-}$, respectively. By construction, we have that $Q_0^X$ coincides with $Q_0$ on $\F^X_{\tau-}$.  This implies that $Q$ is also an extension of $Q_0^X$, equivalent to $P$ on $\F_t$ for every $t \ge 0$. Then $Q$ gives a solution for the equivalent measure extension problem for  $P$, $\bF$ and $\bF^X$.
\end{proof}

\begin{theorem}\label{thm:solutionnolocmart}
Consider a probability measure $\tilde P \in \M_{loc}(X,\bG)$ and suppose that $X$ has same law under $P$ as under $\tilde P$. Also assume that the equivalent measure extension problem admits a solution for $P$, and that $\bF  \subseteq \bF^X$. Thus the $\tilde P$-optional projection $\optP X$ of $X$ into $\bF$ is not a $(\tilde P,\bF)$-local martingale. 
\end{theorem}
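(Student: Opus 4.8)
The plan is to produce, under $\tilde P$, a solution to the equivalent measure extension problem for $X$ and the pair of filtrations $\bF\subseteq\bF^X$, and then to apply Theorem~\ref{thm:larsson} with $\bF^X$ playing the role of the larger filtration and $\tilde P$ as the base measure: if $\optP X$ were a $(\tilde P,\bF)$-local martingale, that problem would have no solution. The crux is that the F\"ollmer measure of $X$ into $\bF^X$ is determined by the law of $X$ alone, which by hypothesis does not change when passing from $P$ to $\tilde P$, so the solution already guaranteed for $P$ can be reused under $\tilde P$.

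First I would invoke Lemma~\ref{lem:solGFX}: since $\bF\subseteq\bF^X$, the assumed solution of the equivalent measure extension problem for $P$, $\bF$, $\bG$ furnishes a solution $Q$ for $P$, $\bF$, $\bF^X$ — that is, a probability measure $Q$ on $(\Omega,\F^X_\infty)$ with $Q=Q_0^X$ on $\F^X_{\tau-}$ (where $Q_0^X$ denotes the F\"ollmer measure of $X$ under $P$ on $\F^X_{\tau-}$) and $P|_{\F_t}\sim Q|_{\F_t}$ for every $t\ge 0$.

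Next comes the key step, comparing F\"ollmer measures. Let $\hat Q_0$ be the F\"ollmer measure of $X$ under $\tilde P$ on $\F^X_{\tau-}$, characterised by $\hat Q_0=X_{\tau_n}\,d\tilde P$ on $\F^X_{\tau_n}$; recall that $\tau_n=n\wedge\inf\{t\ge 0:X_t\ge n\}\le n$ is finite. For $A\in\F^X_{\tau_n}$, both $\one_A$ and $X_{\tau_n}$ are measurable functionals of the stopped path $(X_{s\wedge\tau_n})_{s\ge 0}$, whose law under $\tilde P$ equals its law under $P$ because the law of $X$ does; hence $\bE^{\tilde P}[X_{\tau_n}\one_A]=\bE^{P}[X_{\tau_n}\one_A]$, so $\hat Q_0=Q_0^X$ on each $\F^X_{\tau_n}$ and therefore on $\F^X_{\tau-}=\bigvee_n\F^X_{\tau_n}$. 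Combining this with the previous step, $Q=Q_0^X=\hat Q_0$ on $\F^X_{\tau-}$; moreover, since $\tilde P\sim P$ on $\G_\infty$ and hence on each $\F_t$, we have $\tilde P|_{\F_t}\sim P|_{\F_t}\sim Q|_{\F_t}$. Thus $Q$ solves the equivalent measure extension problem for $\tilde P$, $\bF$, $\bF^X$.

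Finally, $X$ is a $(\tilde P,\bG)$-local martingale adapted to $\bF^X\subseteq\bG$, hence a $(\tilde P,\bF^X)$-local martingale, and $\bF\subseteq\bF^X$; so Theorem~\ref{thm:larsson} applies in this configuration, and if $\optP X$ were a $(\tilde P,\bF)$-local martingale it would contradict the solvability established in the previous paragraph. I expect the main obstacle to be precisely that comparison of F\"ollmer measures: one must check that $Q_0^X$ genuinely depends only on the law of $X$ — which is exactly why the reduction to $\bF^X$ via Lemma~\ref{lem:solGFX} is indispensable, since the $\bG$-F\"ollmer measures of $X$ under $P$ and under $\tilde P$ need not agree — and that $X_{\tau_n}$ is $\tilde P$-integrable (with $\bE^{\tilde P}[X_{\tau_n}]=X_0$), so that the equality in law genuinely upgrades to equality of the two measures on $\F^X_{\tau-}$.
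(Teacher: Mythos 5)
Your proof is correct and follows essentially the same route as the paper: reduce to $\bF^X$ via Lemma~\ref{lem:solGFX}, observe that the F\"ollmer measures of $X$ under $P$ and $\tilde P$ agree on $\F^X_{\tau-}$ because they are determined by the law of $X$ alone, conclude that the solution for $P$ also solves the extension problem for $\tilde P$, $\bF$, $\bF^X$, and invoke Theorem~\ref{thm:larsson}. Your write-up is a bit more explicit than the paper's on why $Q_n$ and $\tilde Q_n$ coincide on $\F^X_{\tau_n}$ (namely, that $X_{\tau_n}\one_A$ is a measurable functional of the stopped path), but the argument is the same.
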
 
\begin{proof}
By Lemma \ref{lem:solGFX}, we have that the equivalent measure extension problem admits a solution for $P$, $\bF$ and $\bF^X$. Consider now the construction of the F\"ollmer measure illustrated above. The stopping times
$$
\tau_n := n \wedge \inf \{t \ge 0: X_t \ge n \}, \qquad \tau := \lim_{n \to \infty} \tau_n,
$$ 
have same law under $P$ as under $\tilde P$. Moreover, since they are defined by $d Q_n = X_{\tau_n} dP$ and $d \tilde Q_n = X_{\tau_n} d\tilde P$, the measures $Q_n$ and $\tilde Q_n$ coincide on $\bF_{\tau_n}^X$, so the the equivalent measure extension problem also admits a solution for $\tilde P$, $\bF$ and $\bF^X$. By Theorem \ref{thm:larsson}, it follows that the $\tilde P$-optional projection of $X$  into $\bF$  is not a $(\tilde{P},\bF)$-local martingale. 
\end{proof}
Note that Theorem \ref{thm:invariance} implies that Theorem \ref{thm:solutionnolocmart} can be applied to all processes with dynamics given by \eqref{eq:Xforinv}. An important application when $\bF \subseteq \bF^X$ is the study of delayed information.

We now discuss properties \eqref{prob:locmart}-\eqref{prob:union} in two specific cases in the following Sections \ref{sec:bessel} and \ref{sec:stocvol}.
\section{The inverse three-dimensional Bessel process}\label{sec:bessel}
Let $B^1=(B^1_t)_{t \ge 0}, B^2= (B^2_t)_{t \ge 0}, B^3=(B^3_t)_{t \ge 0}$ be standard, independent Brownian motions, starting at $(B^1_0,B_0^2,B_0^3)=(1,0,0)$, on $(\Omega, \F, P)$. We specify the filtration later. The  inverse three-dimensional Bessel process $M=(M_t)_{t \ge 0}$ is defined by
\begin{equation}\label{eq:bessel}
M_t:=\left((B_t^1)^2+(B_t^2)^2+(B_t^3)^2\right)^{-1/2}, \quad t \ge 0.
\end{equation}
It\^o's formula implies that under the original probability measure $P$, $M$ has dynamics
\begin{equation}\label{Mdynamics}
dM_t = - M_t^3 \left(B_t^1dB^1_t + B^2_t d B_t^2 + B_t^3dB^3_t\right), \quad t \ge 0,
\end{equation}
with $(B^1_0,B_0^2,B_0^3)=(1,0,0)$. It can be noted that $M$ also solves the SDE
\begin{equation}\label{eq:Mnewdynamics}
dM_t = - M_t^2 dW_t, \quad t \ge 0,
\end{equation}
where the process $W$ with
\begin{equation}\label{eq:W}
W_t=\int_0^t M_s \left(B_s^1dB^1_s + B^2_s d B_s^2 + B_s^3dB^3_s\right),\quad t \ge 0,
\end{equation}
 is a one-dimensional Brownian motion as it is a continuous local martingale with $[W,W]_t=t$. 
 
In this section we consider two different choices for the filtration $\bG$: in Section
 \ref{subsec:B1B2} we let $\bG$ be the filtration generated by $B^1$, $B^2$ and $B^3$, whereas in Section \ref{subsec:delayed}, $\bG$ is generated by the Brownian motion $W$ in  \eqref{eq:W}. In both cases, $M$ is a strict $\bG$-local martingale, and it is therefore interesting to investigate properties \eqref{prob:locmart}-\eqref{prob:union} when $M$ is projected into a smaller filtration $\bF$. In particular, in Section
  \ref{subsec:B1B2} we consider the case when $\bF$ is generated by $B^1$ and $B^2$. On the other hand, in Section \ref{subsec:delayed} we study an example of delayed information, which describes in fact a situation which often happens in practice: here $\bF=(\F_t)$, with $\F_t=\G_{t-\epsilon}$, $\epsilon>0$, meaning that investors have access to the information of the process with a strictly positive time delay $\epsilon$.

\begin{remark}\label{remark:completeincomplete}
In order to study property \eqref{prob:locmart}, it is of course important to have some knowledge about the set $\M_{loc}(M,\bG)$. In particular, one can ask if the market is complete, i.e., $\M_{loc}(M,\bG) = \{P\}$, or if it is incomplete, that means that there exists infinitely many measures $Q \in \M_{loc}(M,\bG)$. 

In the case of the inverse three-dimensional Bessel process, this depends on the choice of $\bG$: if $\bG$ is generated by one Brownian motion, as it happens in Section \ref{subsec:delayed}, it is well known that the market is complete, so that the probability $P$ is the only measure under which $M$ is a local martingale, see also \citep{delbaen1994arbitrage}.

On the other hand, let now $\bG$ be the natural filtration of $B^1$, $B^2$ and $B^3$, as it is the case in Section
\ref{subsec:B1B2}. In this case, $\M_{loc}(M,\bG) \ne \emptyset$ but $\M_{loc}(M,\bG) \ne \{P\}$. Namely, consider for example the $\bG$-adapted processes 
\begin{equation}\label{eq:lineardependence}
\alpha_t^1 = -\frac{B_t^2}{(B_t^1)^2+(B_t^2)^2+1}, \qquad \alpha_t^2 = \frac{B_t^1}{{(B_t^1)^2+(B_t^2)^2+1}}, \qquad \alpha^3_t = 0, \quad t \ge 0,
\end{equation}
and $Z=(Z_t)_{t \ge 0}$ defined by
\begin{equation}\label{eq:densityQ}
Z_t=\mathcal{E}\left(\int_0^{t} \alpha_s^1dB_s^1+\int_0^{t} \alpha_s^2dB_s^2+\int_0^{t} \alpha_s^3dB^3_s\right), \quad t \ge 0.
\end{equation} 

With this choice of $\alpha^i$, $i=1,2,3$, $Z$ in \eqref{eq:densityQ} is a $\mathbb{G}$-adapted process such that $[M,Z]=0$ a.s.. Applying Corollary VIII.1.16 of \citep{RevuzYor}, it can be seen that for these choices $Z$ is also a true martingale. For this reason, defining $Q$ by
$$
\frac{dQ}{dP}|_{\G_t} = Z_t, \quad t \ge 0,
$$
we have that $Q \in \M_{loc}(M, \bG)$, $Q \ne P$.

However, Theorem \ref{thm:invariance} implies that $\M_{M}(M,\bG) = \emptyset$ for every filtration $\bG$ to which $M$ is adapted, i.e., there does not exist any measure $Q \sim P$ such that $M$ is a true martingale under $Q$. This also means that, in the following analysis, property \eqref{prob:locmart} holds if and only if \eqref{prob:intersect} holds.
\end{remark}

\subsection{Optional projection into the filtration generated by $B^1$ and $B^2$}\label{subsec:B1B2}

We let $\bG$ be the natural filtration of $B^1$, $B^2$ and $B^3$, and $\bF$ be generated by $B^1$ and $B^2$. We still denote the optional projection of the inverse three-dimensional Bessel process $M$ into $\bF$ by $\op M$. Theorem 5.2 of \citep{follmer2011local} states that $\op M$ is an $\bF$-local martingale and has the form
$$
\op M_t = u(B^1_t,B_t^2,t),
$$
with
$$
u(x,y,t)=\frac{1}{\sqrt{2\pi t}}\exp\left(\frac{x^2+y^2}{4t}\right)K_0\left(\frac{x^2+y^2}{4t}\right),
$$
where we denote by $K_n$, $n \ge 1$, the modified Bessel functions of the second kind. In particular, it holds
\begin{equation}\label{eq:deru}
\partial_xu(x,y,t)=x \psi(x,y,t), \qquad \partial_yu(x,y,t)=y \psi(x,y,t),
\end{equation}
where
\begin{equation}\label{eq:psimodbessel}
\psi(x,y,t)=\frac{1}{\sqrt{2\pi t}}\exp\left(\frac{x^2+y^2}{4t}\right)\left(K_0\left(\frac{x^2+y^2}{4t}\right)-K_1\left(\frac{x^2+y^2}{4t}\right)\right).
\end{equation}
Since $\op M$ is an $\bF$-local martingale, we focus here on property \eqref{prob:equal}. We start by the following 

\begin{lemma}\label{lem:sameproj}
Let $Q$ be a probability measure equivalent to $P$, such that  $\oG Z$ is $\bF$-adapted. Then it holds $\opQ M = \op M$.
\end{lemma}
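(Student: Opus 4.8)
The statement is a pure conditional-expectation computation via Bayes' formula, entirely analogous to the argument in Corollary \ref{cor:sameprojection} but simpler, since here the density process $\oG Z$ is assumed $\bF$-adapted outright. Fix $t \ge 0$. By the abstract Bayes rule for the equivalent change of measure $Q\sim P$ with $Z_\infty = dQ/dP$, one has
\[
\opQ M_t = \bE^Q[M_t \midb \F_t] = \frac{\bE^P[Z_\infty M_t \midb \F_t]}{\bE^P[Z_\infty \midb \F_t]}, \quad a.s.
\]
The plan is to simplify numerator and denominator separately using the tower property, the $\G_t$-measurability of $M_t$, and the $\bF$-adaptedness of $\oG Z$.

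\textbf{Numerator.} Since $M_t$ is $\G_t$-measurable, conditioning first on $\G_t$ gives
\[
\bE^P[Z_\infty M_t \midb \F_t] = \bE^P\big[\bE^P[Z_\infty M_t\midb \G_t] \midb \F_t\big] = \bE^P\big[M_t\,\bE^P[Z_\infty\midb\G_t] \midb \F_t\big] = \bE^P[M_t\, \oG Z_t \midb \F_t].
\]
By hypothesis $\oG Z_t$ is $\F_t$-measurable, so it pulls out:
\[
\bE^P[Z_\infty M_t \midb \F_t] = \oG Z_t\, \bE^P[M_t \midb \F_t] = \oG Z_t\, \op M_t .
\]

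\textbf{Denominator.} Again by the tower property and $\bF$-adaptedness of $\oG Z$,
\[
\bE^P[Z_\infty \midb \F_t] = \bE^P\big[\bE^P[Z_\infty\midb\G_t]\midb\F_t\big] = \bE^P[\oG Z_t \midb \F_t] = \oG Z_t .
\]
Since $Q\sim P$, the density process is strictly positive, so we may divide, obtaining $\opQ M_t = \oG Z_t\,\op M_t / \oG Z_t = \op M_t$ a.s. for every $t\ge 0$; as both processes are càdlàg this gives $\opQ M = \op M$. There is no genuine obstacle here beyond bookkeeping: the only point to be careful about is invoking the $\bF$-adaptedness of $\oG Z$ in both the numerator (to pull $\oG Z_t$ out) and the denominator (to identify $\oF Z_t$ with $\oG Z_t$); strict positivity of the density is what licenses the final division.
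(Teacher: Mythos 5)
Your proof is correct and follows essentially the same route as the paper: abstract Bayes' rule, the tower property to pass through $\G_t$, and the $\bF$-adaptedness of $\oG Z$ to both pull it out of the conditional expectation in the numerator and identify it with $\oF Z_t$ in the denominator, so the two factors cancel. The paper's proof is terser (it states $\oG Z_t = \oF Z_t$ and collapses the Bayes computation into one line), but the content is the same; your version simply makes the bookkeeping explicit.
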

\begin{proof}
Define $\otF Z$ by $\otF Z_t:=\bE[dQ/dP|\F_t]$.  
We have that
$$
^{\G}Z_t =
 \otF Z_t, \quad t \ge 0,
$$
and then it holds
\begin{align}
\opQ M_t = \mathbb{E}^Q[M_t| \F_t]=(^{ \F}Z_t)^{-1}\mathbb{E}[^{\G}Z_tM_t| \F_t]=\mathbb{E}[M_t| \F_t]= \op M_t, \quad t \ge 0.\notag
\end{align}
\end{proof}

We can now give the following theorem, which provides a positive answer to property \eqref{prob:equal} in this example.

\begin{theorem}\label{thm:invbessel12}
Let $\bF$ be the natural filtration of $B^1$ and $B^2$. Then it holds 
$$
\M_{loc}(M, \bG, \bF)=\M_{loc}^o(M, \bG, \bF).
$$
\end{theorem}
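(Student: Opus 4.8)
The goal is to show the two sets of measures coincide. One inclusion is trivial and the other uses Lemma~\ref{lem:sameproj} together with the fact that $\op M$ is already an $\bF$-local martingale.

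\textbf{Proof plan.}

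The plan is to prove the two inclusions separately, and to observe that the nontrivial direction is essentially immediate from Lemma~\ref{lem:sameproj} and Theorem~5.2 of \citep{follmer2011local}.

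First I would check the inclusion $\M^o_{loc}(M,\bG,\bF) \subseteq \M_{loc}(M,\bG,\bF)$: if $Q$ belongs to the left-hand side, then by definition $\opQ M$ is a $(Q,\bF)$-local martingale and $(dQ/dP|_{\G_t})_{t\ge 0}$ is $\bF$-adapted; the latter means $\oG Z$ is $\bF$-adapted, so $Q$ also lies in $\M_{loc}(M,\bG,\bF)$ precisely when $M$ is a $(Q,\bG)$-local martingale. Hence the only thing to verify for this inclusion is that $Q \in \M_{loc}(M,\bG)$. This should follow because $\oG Z$ being $\bF$-adapted forces $[M,\oG Z]=0$ (as $M$ depends on $B^3$, which is independent of $\bF$, while $\oG Z$ is driven by $B^1,B^2$ only), so that $M \cdot \oG Z$ is a $(P,\bG)$-local martingale, which is exactly the statement that $M$ is a $(Q,\bG)$-local martingale; alternatively one argues directly that if $\opQ M$ is a $(Q,\bF)$-local martingale and $\oG Z$ is $\bF$-adapted, then by Lemma~\ref{lem:sameproj} $\opQ M = \op M$, and one can recover the $(Q,\bG)$-local martingale property of $M$. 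I will need to make this step rigorous; it is the one place where a short argument (rather than a citation) is required, and I expect it to be the main obstacle — one must be careful that "$\oG Z$ is $\bF$-adapted" is being used as a standing hypothesis in both defining sets, so really the content is only about the local martingale property of $M$ versus of $\opQ M$.

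Conversely, for $\M_{loc}(M,\bG,\bF) \subseteq \M^o_{loc}(M,\bG,\bF)$: take $Q \in \M_{loc}(M,\bG,\bF)$, so $M$ is a $(Q,\bG)$-local martingale and $\oG Z$ is $\bF$-adapted. By Lemma~\ref{lem:sameproj}, $\opQ M = \op M$. By Theorem~5.2 of \citep{follmer2011local} (recalled above in this subsection), $\op M$ is a $(P,\bF)$-local martingale of the form $u(B^1_t,B^2_t,t)$. Since $\oG Z$ is $\bF$-adapted, the restriction of $Q$ to $\bF$ has density process $\oF Z = \oG Z$, and a process is a $(Q,\bF)$-local martingale if and only if its product with $\oF Z$ is a $(P,\bF)$-local martingale; but $\op M \cdot \oF Z = \op M \cdot \oG Z$, and since $M$ is a $(Q,\bG)$-local martingale, $M \cdot \oG Z$ is a $(P,\bG)$-local martingale, whose $\bF$-optional projection is $\op M \cdot \oG Z$ (using $\bF$-measurability of $\oG Z$) — and the optional projection of a $(P,\bG)$-local martingale that happens to stay a local martingale is here guaranteed by the same F\"ollmer–Protter computation, or more directly one notes $\opQ M$ is a $(Q,\bF)$-local martingale exactly because $\op M$ is a $(P,\bF)$-local martingale and the $\bF$-density is $\oG Z$. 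Thus $Q \in \M^o_{loc}(M,\bG,\bF)$.

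Combining the two inclusions gives the equality. The step I expect to need the most care is justifying that, under the common hypothesis that $\oG Z$ is $\bF$-adapted, the three properties "$M$ is a $(Q,\bG)$-local martingale", "$\op M$ is a $(Q,\bF)$-local martingale", and "$\op M$ is a $(P,\bF)$-local martingale" are linked as claimed; the cleanest route is to use Lemma~\ref{lem:sameproj} to identify $\opQ M$ with $\op M$ and then transfer the (already known) $(P,\bF)$-local martingale property of $\op M$ to $Q$ via the $\bF$-adapted density, and symmetrically. If a subtlety appears in showing $Q\in\M_{loc}(M,\bG)$ from $Q\in\M^o_{loc}(M,\bG,\bF)$, I would fall back on the $[M,\oG Z]=0$ argument, exploiting that $\oG Z$ is $\bF$-adapted hence orthogonal to the $B^3$-driven part of $M$.
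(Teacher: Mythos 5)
Both inclusions in your plan have genuine gaps, and the direction you flag as ``the main obstacle'' is indeed where the real mathematics lives --- but the specific argument you propose there is wrong. You claim that $\bF$-adaptedness of $\oG Z$ ``forces $[M,\oG Z]=0$ (as $M$ depends on $B^3$\dots while $\oG Z$ is driven by $B^1,B^2$ only).'' That is not true: $M$ is driven by all three Brownian motions (see \eqref{Mdynamics}), so with $\oG Z = \mathcal{E}\bigl(\int \alpha^1\,dB^1 + \int \alpha^2\,dB^2\bigr)$ one has
\[
[M,\oG Z]_t = -\int_0^t \oG Z_s\, M_s^3\,\bigl(\alpha^1_s B^1_s + \alpha^2_s B^2_s\bigr)\,ds,
\]
which is generically nonzero. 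Deducing $\alpha^1_t B^1_t + \alpha^2_t B^2_t = 0$ is precisely where the hypothesis $Q \in \M^o_{loc}(M,\bG,\bF)$ must enter; the paper extracts it by combining the identity $\opQ M = \op M$ from Lemma~\ref{lem:sameproj} with the explicit formula $[\oG Z,\op M]_t = \int_0^t \oG Z_s\,\psi(B^1_s,B^2_s,s)\,(\alpha^1_s B^1_s + \alpha^2_s B^2_s)\,ds$ and the \emph{strict negativity} of $\psi$ from \eqref{eq:psimodbessel}: a continuous finite-variation local martingale vanishes, and since $\oG Z > 0$ and $\psi < 0$ this forces the bracketed term to vanish. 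Without that sign information on $\psi$ the conclusion does not follow, and your shortcut skips it entirely.

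The other inclusion ($\M_{loc}(M,\bG,\bF)\subseteq\M^o_{loc}(M,\bG,\bF)$) is not ``essentially immediate'' either, and as written your argument is circular. Appealing to ``the same F\"ollmer--Protter computation'' to assert that the optional projection of the $(P,\bG)$-local martingale $M\cdot\oG Z$ is a $(P,\bF)$-local martingale assumes exactly what is at stake: the whole subject of this paper is that optional projections need not preserve local martingality. Your final clause --- ``$\opQ M$ is a $(Q,\bF)$-local martingale exactly because $\op M$ is a $(P,\bF)$-local martingale and the $\bF$-density is $\oG Z$'' --- is false as a general principle, since local martingality is not invariant under an equivalent change of measure unless one additionally controls the covariation with the density. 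The missing ingredient is the concrete $\bF$-localizing sequence
\[
\tau_n = \inf\Bigl\{\,t\ge 0 : (B^1_t)^2 + (B^2_t)^2 \le \tfrac{1}{n}\,\Bigr\},
\]
which makes $M$ a bounded martingale and satisfies $\tau_n\to\infty$ because the origin is polar for two-dimensional Brownian motion; Theorem~\ref{thm:samesequence} then applies directly and gives that $\opQ M$ is a $(Q,\bF)$-local martingale for every $Q\in\M_{loc}(M,\bG)$, which is what this inclusion needs.
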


\begin{proof}
We first prove that $\M_{loc}(M, \bG, \bF) \subseteq \M^o_{loc}(M, \bG, \bF)$. 

Introduce the sequence of stopping times $(\tau_n)_{n \in \bN}$ with
$$
\tau_n = \inf\left\{(B_t^1)^2+(B_t^2)^2 \le \frac{1}{n}\right\}, \quad n \ge 1.
$$
Since $\lim_{n \to \infty} \tau_n = \infty$ because the origin $(0,0)$ is polar for a two-dimensional Brownian motion, this is a localizing sequence of $\bF$-stopping times that makes $M$ a bounded martingale. Consider now $Q \in \M_{loc}(M, \bG,  \bF)$, i.e., suppose that $M$ is a $(Q,\bG)$-local martingale, and call $\opQ M$ the $Q$-optional projection of $M$ into $ \bF$. Theorem \ref{thm:samesequence}  implies that  $\opQ M$ is a $(Q, \bF)$-local martingale, i.e., $Q \in \M^o_{loc}(M, \bG,  \bF)$.\\
We now prove that $\M_{loc}^o(M, \bG, \bF) \subseteq \M_{loc}(M, \bG, \bF)$. Take $Q \in \M^o_{loc}(M, \bG, \bF)$, i.e., suppose that $\opQ M$ is a $(Q, \bF)$-local martingale. \\
Since $\op M$ is a $(P,\bF)$-local martingale and $\opQ M = \op M$ by Lemma \ref{lem:sameproj}, from $Q \in \M_{loc}^o(M, \bG,  \bF)$ it follows that $[ \oG Z,\op M]$ is a local martingale,
because $\oG Z=  \otF Z$ by the proof of Lemma \ref{lem:sameproj}.\\
Note now that since the density of $Q$ with respect to $P$ is $ \bF$-adapted, it holds 
\begin{equation}\label{eq:densityQ2}
\oG Z_t=\frac{dQ}{dP}|_{\G_t}=\mathcal{E}\left(\int_0^{t} \alpha_s^1dB_s^1+\int_0^{t} \alpha_s^2dB_s^2\right), \quad t \ge 0,
\end{equation}
where $\alpha^1$ and $\alpha^2$ are $\bF$-adapted processes, and the  Dol\'eans exponential in \eqref{eq:densityQ2} is well defined and a true martingale.

By \eqref{eq:deru} and \eqref{eq:densityQ2}, we have 
\begin{equation}\label{eq:bracketZoM}
 [\oG Z,\op M]_t = \int_0^t \oG Z_s \psi(B^1_s,B_s^2,s) (\alpha_s^1 B^1_s+\alpha^2_sB_s^2)ds, \quad t \ge 0,
\end{equation}
where $\psi$ is defined in \eqref{eq:psimodbessel}. Since $\psi(x,y,t)<0$ for $x,y<\infty$ and $t>0$, see for example \citep{yang2017approximating}, equation \eqref{eq:bracketZoM} together with the fact that $[ \oG Z,\op M]$ is a local martingale implies that 
\begin{equation}\label{eq:alphaB}
\alpha_t^1 B^1_t+\alpha^2_tB_t^2=0,  \quad P\text{-a.s.}, \quad t \ge 0,
\end{equation}
 as $P$ is equivalent to $Q$. Moreover, from \eqref{Mdynamics} and \eqref{eq:densityQ2} it follows that 
$$
[\oG Z,M]_t = - \int_0^t Z_sM_s^3(\alpha_s^1 B^1_s+\alpha^2_sB_s^2)ds, \quad t \ge 0,
$$
and this is zero $P$-a.s. by \eqref{eq:alphaB}. Since $M$ is $(P,\bG)$-local martingale, this implies that $M$ is also a $(Q,\bG)$-local martingale. Hence $Q \in \M_{loc}(M, \bG, \bF)$.

\end{proof}

\subsection{Delayed information}\label{subsec:delayed}
We now consider a market model with delayed information: here $\bG$ is the filtration generated by the Brownian motion $W$ in \eqref{eq:W}, whereas $\bF=(\F_t)_{t \ge 0}$ is given by $\F_t=\G_{t-\epsilon}$, $\epsilon>0$. As explained above, this means that investors have access to the information about $W$, with respect to which $M$ is adapted by \eqref{eq:Mnewdynamics}, only with a positive delay $\epsilon$. We start our analysis with the following

\begin{lemma}\label{lemma:delayedM}
For every $\epsilon>0$, it holds
$$
\bE[M_{t+\epsilon}|\sigma(B^1_t,B_t^2,B_t^3)]=M_t \erf\left(\frac{1}{M_t \sqrt{2\epsilon}}\right),
$$
where $\erf(x):=\frac{2}{\sqrt{\pi}}\int_{0}^x e^{-t^2}dt$.
\end{lemma}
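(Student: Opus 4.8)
The plan is to compute the conditional expectation directly, exploiting the fact that, conditionally on $(B^1_t, B^2_t, B^3_t)$, the vector $(B^1_{t+\epsilon}, B^2_{t+\epsilon}, B^3_{t+\epsilon})$ is a three-dimensional Gaussian with independent coordinates, each with mean $B^i_t$ and variance $\epsilon$. Hence $\bE[M_{t+\epsilon}\mid\sigma(B^1_t,B^2_t,B^3_t)]$ is a deterministic function of $r := \big((B^1_t)^2+(B^2_t)^2+(B^3_t)^2\big)^{1/2} = 1/M_t$ alone, by rotational symmetry of the Gaussian increment around the origin. So it suffices to evaluate, for a fixed point $a\in\reals^3$ with $|a|=r$,
$$
g(r) := \bE\!\left[\frac{1}{|a+\sqrt\epsilon\, N|}\right], \qquad N \sim \mathcal N(0, I_3),
$$
and show $g(r) = \tfrac1r\,\erf\!\big(\tfrac{r}{\sqrt{2\epsilon}}\big)$.

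First I would reduce to a one-dimensional integral. Writing $Y = a + \sqrt\epsilon\,N$, the random variable $|Y|$ has the noncentral-type density obtained from the three-dimensional Gaussian with mean $a$, variance $\epsilon I_3$; integrating out the angular variables (again by rotational symmetry, placing $a$ along an axis) gives a density for $\rho = |Y|$ on $(0,\infty)$ of the form
$$
f_\rho(\rho) = \frac{1}{\sqrt{2\pi\epsilon}}\,\frac{\rho}{r}\left(\exp\!\Big(-\frac{(\rho-r)^2}{2\epsilon}\Big) - \exp\!\Big(-\frac{(\rho+r)^2}{2\epsilon}\Big)\right),
$$
which is the standard expression for the radial part of a Gaussian in $\reals^3$ centered at distance $r$ from the origin. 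Then
$$
g(r) = \int_0^\infty \frac{1}{\rho}\, f_\rho(\rho)\, d\rho
= \frac{1}{r\sqrt{2\pi\epsilon}} \int_0^\infty\!\left(e^{-(\rho-r)^2/(2\epsilon)} - e^{-(\rho+r)^2/(2\epsilon)}\right) d\rho.
$$
The factor $1/\rho$ cancels against the $\rho$ in $f_\rho$, which is the key simplification special to dimension three. Substituting $u = (\rho-r)/\sqrt\epsilon$ in the first integral and $u = (\rho+r)/\sqrt\epsilon$ in the second, the two Gaussian integrals combine to $\int_{-r/\sqrt\epsilon}^{\infty} e^{-u^2/2}\,du - \int_{r/\sqrt\epsilon}^{\infty} e^{-u^2/2}\,du = \int_{-r/\sqrt\epsilon}^{r/\sqrt\epsilon} e^{-u^2/2}\,du = \sqrt{2\pi}\,\erf\!\big(r/\sqrt{2\epsilon}\big)$. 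This yields $g(r) = \tfrac1r\erf\!\big(r/\sqrt{2\epsilon}\big)$, and substituting $r = 1/M_t$ gives the claim.

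The main obstacle is purely bookkeeping: getting the radial density $f_\rho$ correct — in particular the relative sign of the two exponentials, which comes from expanding $|a+\sqrt\epsilon N|^2$ in spherical coordinates and performing the angular ($\cos\theta$) integral over $[-1,1]$ — and then tracking the normalizing constants through the substitutions. An alternative that sidesteps writing down $f_\rho$ explicitly is to compute $g(r)$ directly in Cartesian coordinates: $g(r) = (2\pi\epsilon)^{-3/2}\int_{\reals^3} |y|^{-1} e^{-|y-a|^2/(2\epsilon)}\,dy$, pass to spherical coordinates for $y$, do the angular integral first (the $|y|^{-1}$ times the Jacobian $|y|^2$ leaves $|y|$, and the angular integral of $e^{-|y||a|\cos\theta/\epsilon \cdot(\text{stuff})}$ produces exactly the difference of exponentials above), and finish with the same substitution. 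Either route is elementary; no deep input is needed beyond the Gaussian law of Brownian increments and Fubini.
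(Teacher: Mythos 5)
Your proposal is correct and takes essentially the same route as the paper: the paper writes the conditional expectation as a three-dimensional Gaussian integral in Cartesian coordinates, passes to spherical coordinates centered at $(B^1_t,B^2_t,B^3_t)$, does the angular integral first (producing exactly the difference of exponentials that appears in your radial density), and then evaluates the radial integral to get the error function --- which is precisely your ``alternative route'' and is arithmetically identical to your first one. One small bookkeeping slip: after your substitutions a factor of $\sqrt{\epsilon}$ should multiply each $u$-integral (from $d\rho=\sqrt{\epsilon}\,du$), and it is this factor that cancels the $\sqrt{\epsilon}$ in $\sqrt{2\pi\epsilon}$ to yield the stated $g(r)=\tfrac{1}{r}\erf\left(r/\sqrt{2\epsilon}\right)$.
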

\begin{proof}
We have
$$
\bE[M_{t+\epsilon}|\sigma(B^1_t,B_t^2,B_t^3)]=u(\epsilon, B^1_t, B^2_t, B_t^3),
$$
with 
$$
u(t,a,b,c)=(2 \pi t)^{-3/2}\int_{-\infty}^{\infty}\int_{-\infty}^{\infty}\int_{-\infty}^{\infty}\frac{e^{-\frac{1}{2t}\left((x-a)^2+(y-b)^2+(z-c)^2\right)}}{\sqrt{x^2+y^2+z^2}}dzdydx =: \rm{I}.
$$
We set $R=\sqrt{a^2+b^2+c^2}$, $r=\sqrt{x^2+y^2+z^2}$. Applying a suitable change of variables, the above integral can be written in spherical coordinates as
\begin{align}
\rm{I} &= (2 \pi t)^{-3/2} \int_0^{2\pi}\int_0^{\infty}r^2\frac{1}{r}\int_0^{\pi} \sin(\theta)e^{-\frac{1}{2t}(r^2-2rR\cos(\theta)+R^2)}d\theta drd\phi\notag \\
&=\frac{2}{R\sqrt{\pi}} \int_0^{\frac{R}{\sqrt{2t}}} e^{-r^2}dr  = \frac{1}{\sqrt{a^2+b^2+c^2}} \erf\left( \sqrt{\frac{a^2+b^2+c^2}{2t}}\right).\notag
\end{align}
Thus 
\begin{align}
\bE[M_{t+\epsilon}|\sigma(B^1_t,B_t^2,B_t^3)]&= \frac{1}{\sqrt{(B_t^1)^2+(B_t^2)^2+(B_t^3)^2}} \erf\left( \sqrt{\frac{(B_t^1)^2+(B_t^2)^2+(B_t^3)^2}{2\epsilon}}\right) \notag \\
&=M_t \erf\left(\frac{1}{M_t \sqrt{2\epsilon}}\right).\notag
\end{align}
\end{proof}

\begin{proposition}\label{proposition:delayed}
Let $\bG=(\G_t)_{t \ge 0}$ be the filtration generated by the Brownian motion $W$ in \eqref{eq:W}, and $\bF=(\F_t)_{t \ge 0}$ be given by $\F_t:=\G_{t -\epsilon}$, $t \ge 0$, $\epsilon>0$. Thus
$$
\op M_{t+\epsilon} = \bE[M_{t+\epsilon}|\F_{t+\epsilon}] = M_{t} \erf\left(\frac{1}{M_{t} \sqrt{2\epsilon}}\right), \quad t \ge 0.
$$
\end{proposition}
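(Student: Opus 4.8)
The plan is to reduce the claim to a conditional expectation against the past of $W$ and then collapse it onto Lemma~\ref{lemma:delayedM} using the Markov property of the three-dimensional Brownian motion. The first observation is purely notational: by the definition $\F_s=\G_{s-\epsilon}$ one has $\F_{t+\epsilon}=\G_t$, which in the present section equals $\sigma(W_s:s\le t)$. Applying the defining property of the optional projection at the deterministic time $t+\epsilon$ gives $\op M_{t+\epsilon}=\bE[M_{t+\epsilon}\mid\G_t]$, so the whole problem is to identify this conditional expectation.

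Next I would record two measurability facts. Let $\H=(\H_t)_{t\ge 0}$ be the filtration generated by $(B^1,B^2,B^3)$. On one hand $W$ is $\H$-adapted by \eqref{eq:W}, so $\G_t\subseteq\H_t$ for every $t\ge 0$. On the other hand $M$ is the unique strong solution of $dM_t=-M_t^2\,dW_t$, $M_0=1$, cf.\ \eqref{eq:Mnewdynamics}, hence $M$ is adapted to $\bG$; in particular $M_t$, and therefore also $M_t\,\erf\!\big(1/(M_t\sqrt{2\epsilon})\big)$, is $\G_t$-measurable.

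Then I would apply the tower property along the inclusion $\G_t\subseteq\H_t$:
$$
\bE[M_{t+\epsilon}\mid\G_t]=\bE\big[\,\bE[M_{t+\epsilon}\mid\H_t]\,\big|\,\G_t\,\big].
$$
Since $M_{t+\epsilon}$ is a Borel function of $(B^1_{t+\epsilon},B^2_{t+\epsilon},B^3_{t+\epsilon})$ and $(B^1,B^2,B^3)$ is a Markov process with respect to $\H$, the inner conditional expectation equals $\bE[M_{t+\epsilon}\mid\sigma(B^1_t,B^2_t,B^3_t)]$, which by Lemma~\ref{lemma:delayedM} is exactly $M_t\,\erf\!\big(1/(M_t\sqrt{2\epsilon})\big)$. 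By the second measurability fact this random variable is already $\G_t$-measurable, so the outer conditioning acts as the identity, and we obtain $\op M_{t+\epsilon}=M_t\,\erf\!\big(1/(M_t\sqrt{2\epsilon})\big)$.

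I do not expect a genuine obstacle. The one point that deserves care — and the reason the argument works at all — is the containment $\sigma(M_t)\subseteq\G_t$, i.e.\ that the quantity produced by Lemma~\ref{lemma:delayedM} is measurable with respect to the (small) filtration generated by $W$; this is precisely the content of the representation \eqref{eq:Mnewdynamics} of $M$ as a strong solution of an SDE driven by $W$, and without it the delayed projection could a priori fail to be a function of $M_t$ alone.
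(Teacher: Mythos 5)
Your proof is correct and follows essentially the same route as the paper's: identify $\F_{t+\epsilon}=\G_t$, condition through the full filtration of $(B^1,B^2,B^3)$ to invoke the Markov property and Lemma~\ref{lemma:delayedM}, and then observe that the resulting function of $M_t$ is already $\G_t$-measurable so the outer conditioning is trivial. The only difference is presentational: you make the tower step through the larger filtration $\H_t$ of the three-dimensional Brownian motion explicit and cite the $\G_t$-measurability of $M_t$ directly, whereas the paper's write-up compresses this and abbreviates $\G_t$ as $\sigma(W_t)$; your more explicit bookkeeping of the nesting $\G_t\subseteq\H_t$ is arguably cleaner, but the argument is the same.
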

\begin{proof}
Due to the Markov property of $W$ and to the fact that $\sigma(W_t) \subset \sigma(B^1_t,B_t^2,B_t^3)$ by \eqref{eq:Mnewdynamics} and \eqref{eq:W}, from Lemma \ref{lemma:delayedM} it follows 
\begin{align}
\bE[M_{t+\epsilon}|\F_{t+\epsilon}]&=\bE[M_{t+\epsilon}|\sigma(W_t)]=\bE\left[\bE\left[M_{t+\epsilon}|\sigma(B^1_t,B_t^2,B_t^3)\right]|\sigma(W_t)\right] \notag \\
&= \bE\left[M_t \erf\left(\frac{1}{M_t \sqrt{2\epsilon}}\right)\bigg|\sigma(W_t)\right] = M_t \erf\left(\frac{1}{M_t \sqrt{2\epsilon}}\right),\quad t \ge 0,\notag
\end{align} 
as $M_t$ is $\sigma(W_t)$-measurable.
\end{proof}
By Proposition \ref{proposition:delayed}, we have that 
$$
\op M_{t+\epsilon}=f(M_t), \quad t \ge 0,
$$
with $f(x)=x\cdot \erf\left(\frac{1}{x\sqrt{2\epsilon}}\right)$. Since
$$
f'(x)=-\frac{\sqrt{2}e^{-\frac{1}{2 \epsilon x^2}}}{x\sqrt{\pi\epsilon} }+\erf\left(\frac{1}{x\sqrt{2 \epsilon}}\right), \qquad f''(x)=-\frac{\sqrt{2}\epsilon^{-\frac{3}{2}}e^{-\frac{1}{2 \epsilon x^2}}}{x^4\sqrt{\pi} } ,
$$
applying It\^o's formula we obtain
{\small{
\begin{align}
d\op M_{t + \epsilon} &= \left(-\frac{\sqrt{2}e^{-\frac{1}{2 \epsilon M_t^2}}}{M_t\sqrt{\pi\epsilon} }+\erf\left(\frac{1}{M_t\sqrt{2 \epsilon}}\right)\right)dM_t-\frac{\sqrt{2}\epsilon^{-\frac{3}{2}}e^{-\frac{1}{2 \epsilon M_t^2}}}{M_t^4\sqrt{\pi} }d[M,M]_t  \notag \\ &=
\left(\frac{\sqrt{2}e^{-\frac{1}{2 \epsilon M_t^2}}}{\sqrt{\pi\epsilon} }M_t-\erf\left(\frac{1}{M_t\sqrt{2 \epsilon}}\right)M_t^2\right)dW_t-\sqrt{\frac{2}{\pi}}\epsilon^{-\frac{3}{2}}e^{-\frac{1}{2 \epsilon M_t^2}}dt, \label{eq:itoM}
\end{align}}}
By the above expression, we note that the optional projection is a strict $\bF$- supermartingale, as the drift is strictly negative. Since by Remark \ref{remark:completeincomplete}
we have $\M_{loc}(M,\bG) = \{P\}$, this implies that 
\begin{equation}\label{eq:intersectionemptyset}
\M_{loc}(M,\bG) \cap \M^o_{loc}(M, \bF) = \emptyset,
\end{equation}
i.e., properties \eqref{prob:locmart}, \eqref{prob:intersect} and \eqref{prob:equal} do not hold.

Moreover, we give the following theorem, which implies that properties \eqref{prob:locmartgen} and \eqref{prob:union} are not satisfied.
\begin{theorem}\label{thm:Mlocempty}
Let $\bG=(\G_t)_{t \ge 0}$ be the filtration generated by the Brownian motion $W$ in \eqref{eq:W}, and $\bF=(\F_t)_{t \ge 0}$, with $\F_t:=\G_{t -\epsilon}$, $t \ge 0$, $\epsilon>0$. Thus
$$
\M_{loc}(\op M,\bF) = \emptyset.
$$
\end{theorem}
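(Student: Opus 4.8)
The plan is a proof by contradiction: assume there is a probability $Q\sim P$ under which $\op M$ is a $(Q,\bF)$-local martingale, determine the only possible form of the $Q$-density, feed it into the $\bG$-dynamics of $M$, and conclude that $M$ must then explode under $Q$ — which is impossible since $Q\sim P$. First I would reduce to the Brownian filtration of $W$: since $\F_{t+\epsilon}=\G_t=\F^W_t$ and, by Proposition \ref{proposition:delayed}, $\op M_{t+\epsilon}=f(M_t)$ with $f(x):=x\,\erf\big(\tfrac{1}{x\sqrt{2\epsilon}}\big)$, the assumption says precisely that $Y_t:=f(M_t)$ is a $(Q,\bG)$-local martingale, where $\bG=\bF^W$. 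I would then record two facts about $f$ via the substitution $u=\tfrac{1}{x\sqrt{2\epsilon}}$: $f'(x)=\erf(u)-\tfrac{2u}{\sqrt\pi}e^{-u^2}=\tfrac{2}{\sqrt\pi}\int_0^u\big(e^{-s^2}-e^{-u^2}\big)\,ds>0$ for every $x>0$, and $f(x)\uparrow\sqrt{2/(\pi\epsilon)}<\infty$ as $x\to\infty$; thus $f$ is a strictly increasing bijection from $(0,\infty)$ onto $\big(0,\sqrt{2/(\pi\epsilon)}\big)$, so in particular $\op M$ is bounded and for it local and true martingale coincide.

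Next I would exploit that $\bG=\bF^W$ is a Brownian filtration. The density process $\oG Z_t=\bE^P[\tfrac{dQ}{dP}\,|\,\G_t]$ is a strictly positive $(P,\bG)$-martingale, hence $\oG Z=\mathcal{E}\big(\int_0^{\cdot}\gamma_s\,dW_s\big)$ for some $\bG$-predictable $\gamma$ with $\int_0^t\gamma_s^2\,ds<\infty$ a.s., and $\tilde W_t:=W_t-\int_0^t\gamma_s\,ds$ is a $(Q,\bG)$-Brownian motion. Applying It\^o's formula to $Y=f(M)$ and using $dM_t=-M_t^2\,dW_t=-M_t^2\,d\tilde W_t-M_t^2\gamma_t\,dt$ from \eqref{eq:Mnewdynamics}, the finite-variation part of $Y$ under $Q$ equals $\big(-f'(M_t)M_t^2\gamma_t+\tfrac12 f''(M_t)M_t^4\big)\,dt$; since $Y$ is a continuous $(Q,\bG)$-local martingale this must vanish $dt\otimes dQ$-a.e., and $f'(M_t)>0$ forces $\gamma_t=\tfrac{M_t^2 f''(M_t)}{2f'(M_t)}$. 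Substituting back, under $Q$ the process $M$ is a weak solution of
\[
dM_t=-M_t^2\,d\tilde W_t+b(M_t)\,dt,\qquad b(x):=-\frac{x^4 f''(x)}{2f'(x)}=\frac{\epsilon^{-3/2}\,e^{-1/(2\epsilon x^2)}}{\sqrt{2\pi}\,f'(x)}>0,
\]
where I use the expression for $f''$ from the computation preceding \eqref{eq:itoM}. From $f'(x)=\tfrac{4}{3\sqrt\pi}u^3+O(u^5)$ (with $u=\tfrac{1}{x\sqrt{2\epsilon}}$) one computes $b(x)=\tfrac32 x^3\big(1+o(1)\big)$ as $x\to\infty$.

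It remains to show that this SDE explodes. Here $\sigma(x)^2=x^4$ and $\tfrac{2b(x)}{\sigma(x)^2}=\tfrac{3}{x}\big(1+o(1)\big)$, so the scale density $\rho(x)=\exp\big(-\int^x\tfrac{2b}{\sigma^2}\big)$ is of order $x^{-3}$ near $+\infty$, and the Feller integral $\int^{\infty}\rho(x)\big(\int^x\tfrac{dy}{\rho(y)\,y^4}\big)\,dx$ is of the order of $\int^{\infty}x^{-3}\log x\,dx<\infty$. By Feller's test of explosions $+\infty$ is accessible, i.e. any solution of the displayed SDE started from $M_0=1$ reaches $+\infty$ in finite time with positive probability; hence $Q$ assigns positive probability to the event that $M$ reaches $+\infty$ in finite time. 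But under $P$ we have $M_t=\big((B^1_t)^2+(B^2_t)^2+(B^3_t)^2\big)^{-1/2}<\infty$ for all $t\ge0$ a.s., since a three-dimensional Brownian motion started at $(1,0,0)$ never visits the origin; so that event is $P$-null, hence $Q$-null because $Q\sim P$ — a contradiction. Therefore no such $Q$ exists and $\M_{loc}(\op M,\bF)=\emptyset$.

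The step I expect to be the main obstacle is this last one: one must check that the SDE for $M$ under $Q$ genuinely explodes, even though its diffusion coefficient $x^2$ is of the same polynomial order as $x^{-1}b(x)x^4$; this is exactly the borderline situation settled by Feller's test, where the logarithmic factor in the integral above tips it toward accessibility of $+\infty$. Equivalently, one can argue with $N_t:=1/M_t$, which under $Q$ has drift $\sim-\tfrac{1}{2N_t}$ near $0$, i.e. behaves there like a zero-dimensional Bessel process and so hits $0$ in finite time with positive probability, contradicting that $N_t=|(B^1_t,B^2_t,B^3_t)|>0$ for all $t$ under $P$. A secondary technical point is to justify that the drift-vanishing identity pins $\gamma$ down $dt\otimes dQ$-a.e. and that $M$ then solves the stated SDE under $Q$; the latter is immediate because Girsanov only reshapes the drift of the driver $W$ in \eqref{eq:Mnewdynamics}.
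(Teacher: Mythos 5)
Your proof is correct and reaches the right conclusion, but it follows a genuinely different route from the paper's. Both arguments begin the same way: in the Brownian filtration $\bG=\bF^W$, the density of any candidate $Q$ has the form $\mathcal E(\int\gamma\,dW)$, and forcing the drift of $f(M)$ to vanish pins down $\gamma$ uniquely (using $f'>0$). After that they diverge. The paper treats the candidate density $\mathcal E(\int_0^\cdot g(M_s)\,dW_s)$ as the object of study and invokes the Mijatovic--Urusov true-martingale criterion (Theorem~\ref{thm:uru}), showing that condition~\eqref{cond3} fails while~\eqref{cond1} holds via explicit asymptotics of the scale functions $s,\tilde s$. You instead go to the dual picture: assuming $Q$ exists, write down the $Q$-dynamics $dM_t=-M_t^2\,d\tilde W_t+b(M_t)\,dt$ with $b(x)=-x^4f''(x)/(2f'(x))\sim\tfrac32 x^3$, apply Feller's test to show $+\infty$ is accessible in finite time with positive probability, and contradict $Q\sim P$ since $M$ is finite $P$-a.s. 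These are two faces of the same coin --- the Mijatovic--Urusov quantities $\tilde\rho,\tilde s$ are exactly the scale data of your transformed diffusion, so behind the scenes both proofs are checking explosion of the same auxiliary process --- but your version is more elementary and self-contained, avoiding the black-box criterion, and your closing remark (that $1/M$ under $Q$ behaves near $0$ like a $0$-dimensional Bessel process) gives a cleaner intuitive reason for the contradiction. One small bookkeeping point in your favor: the display~\eqref{eq:itoM} in the paper drops the factor $\tfrac12$ on the $f''\,d[M,M]$ term, which propagates to $\mu$ in~\eqref{eq:muandsigma} and to the asymptotic constant $-6$ later in the proof; your $b(x)\sim\tfrac32 x^3$ carries the correct $\tfrac12 f''$, and in either case only the order $x^3$ of the drift matters for the explosion/true-martingale conclusion, so both arguments survive this.
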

To prove Theorem \ref{thm:Mlocempty}, we rely on some results provided by \citep{MijatovicUrusov}, which we now recall.
Consider the state space $J=(l,r)$, $-\infty\le l < r \le \infty$ and a $J$-valued diffusion $Y=(Y_t)_{t \ge 0}$ on some filtered probability space, governed by the SDE
\begin{equation}\label{ymij}
dY_t=\mu_Y(Y_t)dt+\sigma_Y(Y_t)dB_t, \quad t \ge 0,
\end{equation}
with $Y_0=x_0 \in J$, $B$ Brownian motion and deterministic functions $\mu_Y(\cdot)$ and $\sigma_Y(\cdot)$, that from now on we will simply denote by $\mu_Y$ and $\sigma_Y$, such that 
\begin{equation}\label{condsigma1}
\sigma_Y(x) \ne 0 \quad \forall x \in J
\end{equation}
 and 
 \begin{equation}\label{condsigma2}
 \frac{1}{\sigma_Y^2},\text{ }  \frac{\mu_Y}{\sigma_Y^2} \in L_{loc}^1(J),
 \end{equation}
 where $ L_{loc}^1(J)$ denotes the class of locally integrable functions $\psi$ on $J$, i.e., the measurable functions $\psi: (J,\mathcal{B}(J)) \rightarrow (\mathbb{R},\mathcal{B}(\mathbb{R}))$ that are integrable on compact subsets of $J$. \\
Consider the stochastic exponential
\begin{equation}\label{stocexp}
\E\left(\int_0^t g(Y_u)dB_u\right), \quad t \ge 0,
\end{equation}
with $g(\cdot)$ such that 
\begin{equation}\label{condb1}
\frac{g^2}{\sigma_Y^2} \in L_{loc}^1(J).
\end{equation}
 Put $\bar{J}=[l,r]$ and, fixing an arbitrary $c \in J$, define
\begin{align}
&\rho(x):=\exp\left\{-\int_c^x\frac{2\mu_Y}{\sigma_Y^2}(y)dy\right\}, \quad x \in J, \label{rho} \\
&\tilde{\rho}(x):=\rho(x)\exp\left\{-\int_c^x\frac{2g}{\sigma_Y}(y)dy\right\}, \quad x \in J, \label{rhotilde} \\
&s(x):=\int_c^x\rho(y)dy, \quad x \in \bar{J}, \label{s} \\
& \tilde{s}(x):=\int_c^x\tilde{\rho}(y)dy,\quad x \in \bar{J}. \label{stilde}
\end{align}
Denote $\rho=\rho(\cdot)$, $s=s(\cdot)$, $s(r)=\lim_{x \to r^-}s(x)$,  $s(l)=\lim_{x \to l^+}s(x)$, and analogously for $\tilde{s}(\cdot)$ and $\tilde{\rho}(\cdot)$.\\
Define 
$$
L_{loc}^1(r-):=\{\psi:(J,\mathcal{B}(J)) \rightarrow (\mathbb{R},\mathcal{B}(\mathbb{R})) \Big| \int_x^r |\psi(y)|dy<\infty \text{ for some } x \in J\},
$$ 
and   $L_{loc}^1(l+)$ analogously.
We report here Theorem 2.1 in \citep{MijatovicUrusov}.

\begin{theorem}\label{thm:uru}
Let the functions $\mu_Y$, $\sigma_Y$, and $g$ satisfy conditions \eqref{condsigma1}, \eqref{condsigma2} and  \eqref{condb1}, and let $Y$ be a solution of the SDE \eqref{ymij}. 
\\ Then the Dol\'eans exponential given by \eqref{stocexp} is a true martingale if and only if both of the following requirements are satisfied:
\begin{enumerate}
\item it does not hold
\begin{equation}\label{cond1}
\tilde{s}(r)<\infty \quad \text{and} \quad \frac{\tilde{s}(r)-\tilde{s}}{\tilde{\rho}\sigma_Y^2} \in L_{loc}^1(r-),
\end{equation}
or it holds
\begin{equation}\label{cond3}
s(r)<\infty \quad \text{and} \quad \frac{(s(r)-s)g^2}{\rho\sigma_Y^2} \in L_{loc}^1(r-);
\end{equation}
\item it does not hold
\begin{equation}\notag
\tilde{s}(l)>-\infty \quad \text{and} \quad \frac{\tilde{s}-\tilde{s}(l)}{\tilde{\rho}\sigma_Y^2} \in L_{loc}^1(l+),
\end{equation}
or it holds
\begin{equation}\notag
s(l)>-\infty \quad \text{and} \quad \frac{(s-s(l))g^2}{\rho\sigma_Y^2} \in L_{loc}^1(l+).
\end{equation}
\end{enumerate}  
\end{theorem}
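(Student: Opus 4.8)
The plan is to prove the criterion by changing measure along the candidate density and then reading off, at each endpoint, the two alternatives from Feller's explosion test. Write $N_t:=\E\bigl(\int_0^t g(Y_u)\,dB_u\bigr)$. By \eqref{condsigma1}--\eqref{condsigma2} the SDE \eqref{ymij} has a (possibly explosive) solution unique in law, and the occupation-times formula is available; by \eqref{condb1} and that formula $\int_0^t g^2(Y_u)\,du<\infty$ for every $t$ strictly before $Y$ exits $J$, so $N$ is a well-defined nonnegative continuous local martingale with $N_0=1$, hence a supermartingale, and $N$ is a true martingale iff $\bE[N_T]=1$ for every $T>0$. Since a loss of mass of $N$ can occur only as $Y$ drifts to $r$ or to $l$, and the two contributions are nonnegative and disjoint, I would fix $c\in J$, stop $Y$ at $\sigma_c:=\inf\{t:Y_t=c\}$ to isolate the behaviour near $r$ (the endpoint $l$ being symmetric), let $c\to l$ at the end, and combine.

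For the endpoint $r$ (take $Y_0>c$): let $\theta_n:=\sigma_c\wedge\inf\{t:Y_t\ge r_n\}$ for a sequence $r_n\uparrow r$; on $[0,\theta_n]$ the path is confined to the compact $[c,r_n]\subset J$, and a standard argument based on \eqref{condsigma1}--\eqref{condb1} (no mass can be lost away from $\partial J$) shows that $N^{\theta_n}$ is a uniformly integrable martingale. Put $dQ_n:=N_{\theta_n}\,dP$ on $\F_{\theta_n}$; these measures are consistent and paste into a probability $\widehat Q$ on $\bigvee_n\F_{\theta_n}$. Girsanov's theorem identifies the $\widehat Q$-law of $Y$ up to each $\theta_n$: it solves $dY_t=(\mu_Y+g\sigma_Y)(Y_t)\,dt+\sigma_Y(Y_t)\,d\widehat B_t$, so under $\widehat Q$ the process is the diffusion $\widehat Y$ on $(c,r)$ whose scale density is precisely $\tilde\rho$ from \eqref{rhotilde} (hence scale function $\tilde s$ from \eqref{stilde}) and whose speed density is proportional to $1/(\tilde\rho\,\sigma_Y^2)$; this $\widehat Y$ may reach $r$ in finite time. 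The standard bookkeeping then gives $\bE[N_{T\wedge\sigma_c}]=1-\widehat Q(A_r^T)$, where $A_r^T$ is the event that, before time $T$, $Y$ runs out to $r$ \emph{and} the density blows up there (equivalently $1/N$ reaches $0$); informally, $P$-mass is lost exactly along those $\widehat Q$-paths escaping to $r$ that the $P$-model cannot see.

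It remains to decide when $\widehat Q(A_r^T)=0$ for all $T$, and there are two ways this happens. Either $\widehat Y$ cannot reach $r$ in finite time: Feller's test for the diffusion with scale $\tilde s$ and speed density $\propto 1/(\tilde\rho\,\sigma_Y^2)$ says that ``$\widehat Y$ hits $r$ before $\sigma_c$ in finite time with positive probability'' is precisely \eqref{cond1}, so its negation is the first alternative. Or $\widehat Y$ does reach $r$ but the density stays finite there: after the change of measure $N_t=\exp\bigl(\int_0^t g(Y_u)\,d\widehat B_u+\tfrac12\int_0^t g^2(Y_u)\,du\bigr)$, so on $\{Y\to r\}$ the density remains finite iff $\int_0^{\sigma_r\wedge\sigma_c}g^2(Y_u)\,du<\infty$, and by the occupation-times formula together with the Green function of the diffusion on $(c,r)$ killed at $\{c,r\}$ (whose $y$-dependence near $r$ is governed by the factor $s(r)-s(y)$) this perpetual integral is a.s.\ finite exactly when $s(r)<\infty$ and $\frac{(s(r)-s)g^2}{\rho\sigma_Y^2}\in L_{loc}^1(r-)$, i.e. \eqref{cond3}. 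Hence no mass is lost at $r$ iff \eqref{cond1} fails or \eqref{cond3} holds, which is requirement~1; the symmetric argument at $l$ gives requirement~2, and letting $c$ range over $J$ and combining shows that $N$ is a true martingale iff both requirements hold.

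The hard part is the content of the previous paragraph. On the technical side one must justify $\bE[N_{T\wedge\sigma_c}]=1-\widehat Q(A_r^T)$ rigorously — the pasting of the $Q_n$, the Föllmer-type extension of $\widehat Q$, and the mutual singularity of $\widehat Q$ and $P$ on the escape set. The genuinely delicate point is the dichotomy that, when $\widehat Y$ reaches $r$, the density $N$ blows up there precisely when \eqref{cond3} fails: this is where the deterministic integral criteria do the real work, since one must convert ``$\int_0^{\sigma_r\wedge\sigma_c}g^2(Y_u)\,du<\infty$ a.s.\ on $\{Y\to r\}$'' into the $L_{loc}^1(r-)$ statement through the explicit Green function, treating $s(r)=\infty$ and $s(r)<\infty$ separately and checking compatibility with the Feller test for $\widehat Y$. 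The reduction ``martingale iff $\bE[N_T]=1$ for all $T$'', the verification that $N^{\theta_n}$ is uniformly integrable, the $l$-endpoint, and the recombination as $c\to l$ (resp.\ $c\to r$) are routine.
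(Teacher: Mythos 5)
First, a point of comparison: the paper does not prove this statement at all --- it is quoted verbatim as Theorem 2.1 of \citep{MijatovicUrusov} --- so there is no in-paper proof to measure you against. Your sketch reconstructs the strategy of the cited proof correctly in outline: reduce the martingale property to $\bE[N_T]=1$ for all $T$, pass to a F\"ollmer-type measure $\widehat Q$ under which $Y$ becomes the auxiliary diffusion with drift $\mu_Y+g\sigma_Y$ (scale $\tilde s$), identify the loss of mass with those $\widehat Q$-paths that exit at an endpoint while $\int_0^\cdot g^2(Y_u)\,du$ diverges, and settle each endpoint by Feller's test plus a deterministic criterion for the a.s.\ finiteness of a perpetual integral. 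That is the right architecture.

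There is, however, a concrete gap in the step that is supposed to produce \eqref{cond3}. The perpetual integral $\int_0^{\sigma_r\wedge\sigma_c}g^2(Y_u)\,du$ must be evaluated along paths of the \emph{auxiliary} diffusion under $\widehat Q$, whose scale function is $\tilde s$ and whose speed density is proportional to $1/(\tilde\rho\,\sigma_Y^2)$. The Green function of that diffusion near $r$ is therefore governed by $\tilde s(r)-\tilde s(y)$, not by $s(r)-s(y)$ as you assert, and the occupation-time computation yields the criterion ``$\tilde s(r)<\infty$ and $(\tilde s(r)-\tilde s)g^2/(\tilde\rho\,\sigma_Y^2)\in L^1_{loc}(r-)$.'' Passing from this to the stated form \eqref{cond3}, written in terms of $s$ and $\rho$, requires a separate, purely deterministic equivalence lemma (valid when \eqref{cond1} holds), which Mijatovi\'c and Urusov establish by manipulating the exponential factor in \eqref{rhotilde}; as written, your Green-function step does not land on \eqref{cond3}. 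Two smaller issues: the a.s.\ finiteness of the perpetual integral on the exit event is a zero--one law, not a first-moment bound, so ``expected occupation time finite'' must be upgraded to the a.s.\ dichotomy (this is the content of the companion result of Mijatovi\'c--Urusov on integral functionals of one-dimensional diffusions, which you invoke only implicitly); and the identity $\bE[N_{T\wedge\sigma_c}]=1-\widehat Q(A_r^T)$, on which everything rests, is exactly the delicate construction you defer to ``technical bookkeeping'' --- it needs the consistency of the $Q_n$ and the identification of $\{1/N\ \text{hits}\ 0\}$ with $\{\int g^2(Y_u)\,du=\infty\}$ to be carried out in full.
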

We now use Theorem \ref{thm:uru} in order to prove Theorem \ref{thm:Mlocempty}.

\begin{proof}[Proof of Theorem \ref{thm:Mlocempty}]
By equation \eqref{eq:itoM} we have that 
$$
d\op M_{t + \epsilon}=\mu(M_t)dt+\sigma(M_t)dW_t, \quad t \ge 0,
$$
with 
\begin{equation}\label{eq:muandsigma}
\mu(x)=-\sqrt{\frac{2}{\pi}}\epsilon^{-\frac{3}{2}}e^{-\frac{1}{2 \epsilon x^2}}, \qquad \sigma(x)=x\sqrt{\frac{2}{\pi\epsilon}}e^{-\frac{1}{2 \epsilon x^2}}-x^2\erf\left(\frac{1}{x\sqrt{2 \epsilon}}\right).
\end{equation}
By Girsanov's Theorem there exists a probability measure $Q \in \M_{loc}(\op M,\bF)$ if the Dol\'eans exponential
\begin{equation}\label{eq:Zgirsanov}
\frac{dQ}{dP}|_{\G_t}=Z_t=\mathcal{E}\left(\int_0^{t} \alpha_sdW_s\right), \quad t \ge 0,
\end{equation}
with 
\begin{equation}\label{eq:alphaM}
\alpha_t=-\frac{\mu(M_t)}{\sigma(M_t)}, \quad t \ge 0.
\end{equation}
 is a true martingale. 

In order to prove that this does not hold, we apply Theorem \ref{thm:uru}. In our case, by equations \eqref{eq:Mnewdynamics}, \eqref{eq:muandsigma} and \eqref{eq:alphaM}, we have $Y=M$, $J=(0,\infty)$, $\mu_Y \equiv 0$, $\sigma_Y(x)=-x^2$ and
$$
g(x)=\frac{\sqrt{2/\pi}\epsilon^{-\frac{3}{2}}e^{-\frac{1}{2 \epsilon x^2}}}{x\left(\sqrt{\frac{2}{\pi\epsilon}}e^{-\frac{1}{2 \epsilon x^2}}-x\cdot\erf\left(\frac{1}{x\sqrt{2 \epsilon}}\right)\right)}.
$$
Note that condition \eqref{condsigma1} and \eqref{condsigma2} are satisfied. In order to prove that \eqref{condb1} also holds, it is enough to check that 
$$
x\cdot \erf\left(\frac{1}{x\sqrt{2 \epsilon}}\right) - \sqrt{\frac{2}{\pi\epsilon}}e^{-\frac{1}{2 \epsilon x^2}} > 0 \quad \text{for every }x \in (0,\infty).
$$
This is true if and only if 
$$
\erf(y)\frac{1}{y\sqrt{2\epsilon}} - \sqrt{\frac{2}{\pi\epsilon}}e^{-y^2} > 0  \quad \text{for every }y \in (0,\infty),
$$
i.e., if and only if 
$$
F(y):=\erf(y) - \frac{2}{\sqrt{\pi}}ye^{-y^2} > 0 \quad \text{for every }y \in (0,\infty).
$$
The last condition holds, since $F(0)=0$ and $F'(y) =  \frac{4}{\sqrt{\pi}}y^2e^{-y^2}>0$ for every $y>0$, and the assumptions of Theorem \ref{thm:uru} are thus satisfied.

We now show that condition \eqref{cond3} fails whereas \eqref{cond1} is satisfied, implying that the density $Z$ introduced in \eqref{eq:Zgirsanov} is not a martingale. 

Consider first $\rho$ and $s$ defined in \eqref{rho} and \eqref{s}, respectively. We have $\rho \equiv 1$,  so that $s(x)=x-c$, for any $c>0$. This implies that $s(\infty)=+\infty$, so that condition \eqref{cond3} fails.

We now check condition \eqref{cond1}. We have that
$$
\lim_{x \to \infty}\frac{e^{-\frac{1}{2 \epsilon x^2}}}{x^2\left(\sqrt{\frac{2}{\pi\epsilon}}e^{-\frac{1}{2 \epsilon x^2}} - x\cdot\erf\left(\frac{1}{x\sqrt{2 \epsilon}}\right)\right)}=-3\sqrt{\frac{\pi}{2}}\epsilon^{3/2},
$$
so that 
$$
\lim_{x \to \infty} -x\frac{2g(x)}{\sigma_Y(x)}=\lim_{x \to \infty} 2\sqrt{2/\pi}\epsilon^{-\frac{3}{2}}\frac{e^{-\frac{1}{2 \epsilon x^2}}}{x^2\left(\sqrt{\frac{2}{\pi\epsilon}}e^{-\frac{1}{2 \epsilon x^2}}-x\cdot\erf\left(\frac{1}{x\sqrt{2 \epsilon}}\right)\right)}=-6.
$$
Hence we have that for every $\delta>0$, there exists $\bar{x}>0$ such that 
\begin{equation}\label{eq:deltalimit}
\big|-x\frac{2g(x)}{\sigma_Y(x)}+6\big| \le \delta \quad \text{for every }x \ge \bar{x}.
\end{equation}
 We fix $\delta<1$ and choose $\bar x>0$ such that \eqref{eq:deltalimit} holds.
For every $x>\bar x$ it holds
\begin{align}
&\bigg|-\int_{\bar x}^x\frac{2g(y)}{\sigma_Y(y)}dy+\int_{\bar x}^x \frac{6}{y}dy\bigg| \le  \int_{\bar x}^x\bigg|-\frac{2g(y)}{\sigma_Y(y)}+ \frac{6}{y}\bigg|dy \le \int_{\bar x}^x\frac{1}{y}\bigg|-y\frac{2g(y)}{\sigma_Y(y)}+ 6\bigg|dy  \notag \\ & \le \delta \left(\log(x) - \log(\bar x)\right). \notag
\end{align}
Thus for every $x>\bar x$ we have 
$$
(-6-\delta) \left(\log(x) - \log(\bar x)\right) \le-\int_{\bar x}^x\frac{2g(y)}{\sigma_Y(y)}dy \le (-6+\delta) \left(\log(x) - \log(\bar x)\right).
$$
Therefore, taking $\tilde{\rho}$ as in \eqref{rhotilde} and choosing $c=\bar x$, for every $x > \bar x$ it holds
\begin{equation}\label{eq:inequalitiesrhotilde}
\left(\frac{x}{\bar{x}}\right)^{-6-\delta} \le \tilde \rho(x) \le \left(\frac{x}{\bar{x}}\right)^{-6+\delta}.
\end{equation}
Hence, taking $\tilde{s}$ as in \eqref{stilde} and choosing again $c = \bar x$, for every $x > \bar x$ it holds
$$
\int_{\bar x}^x \left(\frac{y}{\bar{x}}\right)^{-6-\delta} dy \le \tilde{s}(x)=\int_{\bar x}^x\tilde{\rho}(y)dy \le \int_{\bar x}^x \left(\frac{y}{\bar{x}}\right)^{-6+\delta} dy,
$$
so that $\tilde{s}(\infty)<\infty$ and in particular
$$
\bar x^{6+\delta}\frac{x^{-5-\delta}}{5+\delta} \le \tilde s(\infty)-\tilde s(x) = \int_{x}^{\infty}\tilde{\rho}(y)dy \le\bar x^{6-\delta}\frac{x^{-5+\delta}}{5-\delta}.
$$
Together with \eqref{eq:inequalitiesrhotilde}, this implies that for every $x > \bar x$ it holds
$$
\frac{\bar x^{2 \delta}}{5+\delta} x^{1-2\delta} \le \frac{\tilde{s}(\infty)-\tilde{s}(x)}{\tilde{\rho}(x)} \le  \frac{\bar{x}^{-2\delta}}{5-\delta} x^{1+2\delta},
$$
and thus
$$
\frac{\bar x^{2 \delta}}{5+\delta} x^{-3-2\delta} \le \frac{\tilde{s}(\infty)-\tilde{s}(x)}{\sigma_Y^2(x)\tilde{\rho}(x)} \le  \frac{\bar{x}^{-2\delta}}{5-\delta} x^{-3+2\delta}.
$$
Therefore, as $\delta<1$ by the choice of $\bar x$, we have that $\frac{\tilde{s}(\infty)-\tilde{s}}{\tilde{\rho}\sigma_Y^2} \in L_{loc}^1(\infty-)$ and condition \eqref{cond1} holds. By Theorem \ref{thm:uru}, it follows that $Z$ defined in \eqref{eq:Zgirsanov} is not a martingale.  
\end{proof}

We give now an example of a process whose optional projection into the delayed filtration is not a local martingale but admits an equivalent local martingale measure.

\begin{example}\label{ex:invbesslminusbm}
Consider again the filtration $\bG=(\G_t)_{t \ge 0}$ generated by the Brownian motion $W$ in \eqref{eq:W}, and define $\bF=(\F_t)_{t \ge 0}$, with $\F_t:=\G_{t -\epsilon}$, $t \ge 0$, $\epsilon>0$. 
Introduce the process $X = M - W,$ where $M$ is as usual the inverse three-dimensional Bessel process. 
Thus
$$
\op X_{t+\epsilon} = \bE[X_{t+\epsilon}|\G_{t}] = \bE[M_{t+\epsilon} - W_{t+\epsilon}|\G_{t}] =M_{t} \erf\left(\frac{1}{M_{t} \sqrt{2\epsilon}}\right) - W_t, \quad t \ge 0,
$$
where the last inequality comes from Proposition \ref{proposition:delayed} and from the martingale property of $W$. From \eqref{eq:itoM} it holds therefore
$$
d \op X_{t+\epsilon} = \left(\sqrt{\frac{2}{\pi\epsilon}}e^{-\frac{1}{2 \epsilon M_t^2}}M_t-\erf\left(\frac{1}{M_t\sqrt{2 \epsilon}}\right)M_t^2 -1\right)dW_t-\sqrt{\frac{2}{\pi}}\epsilon^{-\frac{3}{2}}e^{-\frac{1}{2 \epsilon M_t^2}}dt, \quad t \ge 0.
$$
It is then clear that $\op X$ is not an $\bF$-local martingale. This implies that 
\begin{equation}\notag
\M_{loc}(X,\bG) \cap \M^o_{loc}(X, \bF) = \emptyset,
\end{equation}
since $\M_{loc}(X,\bG) = \{P\}$ by Remark \ref{remark:completeincomplete}, so that properties \eqref{prob:locmart}, \eqref{prob:intersect} and \eqref{prob:equal} have a negative answer.

We now introduce the Dol\'eans exponential
\begin{equation}\notag
\bar Z_t=\mathcal{E}\left(\int_0^{t} \bar \alpha_sdW_s\right), \quad t \ge 0,
\end{equation}
with 
\begin{equation}\notag
\bar \alpha_t=\frac{\sqrt{\frac{2}{\pi}}\epsilon^{-\frac{3}{2}}e^{-\frac{1}{2 \epsilon M_t^2}}}{\sqrt{\frac{2}{\pi\epsilon}}e^{-\frac{1}{2 \epsilon M_t^2}}M_t-\erf\left(\frac{1}{M_t\sqrt{2 \epsilon}}\right)M_t^2 -1}, \quad t \ge 0,
\end{equation}
and define the measure $\bar Q$ by $\frac{d\bar Q}{dP}|_{\G_t}=\bar Z_t$, $t \ge 0$. Since 
$$
\erf\left(\frac{1}{x\sqrt{2 \epsilon}}\right)x - \sqrt{\frac{2}{\pi\epsilon}}e^{-\frac{1}{2 \epsilon x^2}} > 0 \quad \text{for every }x \in (0,\infty),
$$ 
as we have shown in the proof of Theorem \ref{thm:Mlocempty}, it holds $|\bar \alpha_t| \le \sqrt{\frac{2}{\pi}}\epsilon^{-\frac{3}{2}}$ for all $t \ge 0$. Thus Corollary \rm{VIII}.1.16 of \citep{RevuzYor}\footnote{If $L$ is a local martingale such that either $\exp(\frac{1}{2}L)$ is a submartingale or $$\bE\left[\exp\left(\frac{1}{2}\langle L,L \rangle_t\right)\right]<\infty$$ for every $t$, then $\E(L)$ is a martingale.} and  Girsanov's Theorem imply that  $\bar Q \in \M_{loc}(\op X,\bG)$. Hence, properties \eqref{prob:locmartgen} and \eqref{prob:union} are satisfied.
\end{example}

\begin{remark}\label{rem:delayedextended}
In the above analysis $\bG$ represents the natural filtration of $M$. By \eqref{eq:intersectionemptyset} we obtain that \eqref{prob:locmart} is not satisfied in this setting. However, \eqref{prob:locmart} still does not hold if $\bG$ is given by the filtration generated by $(B^1,B^2,B^3)$. In this case there exist infinitely many measures in $\M_{loc}(M,\bG)$. By Corollary \ref{cor:sameprojection} we obtain that for every $Q \in \M_{loc}(M,\bG)$ it holds 
$$
\bE^Q[M_t|\F_t]=\bE^P[M_t|\F_t], \quad a.s., \quad t \ge 0.
$$ 
Then $\opQ M$ is a $(Q,\bF)$-local martingale if and only if $Q$ is an equivalent local martingale measure for $\op M$, which has dynamics given in \eqref{eq:itoM}. However, this cannot be the case because such a measure $Q$ would be defined by a density which is not a true martingale, by the same arguments as in the proof of Theorem \ref{thm:Mlocempty}.

%
%

\end{remark}

\section{A stochastic volatility example}\label{sec:stocvol}

Introduce a three-dimensional Brownian motion $B = (B^1,B^2,B^3)$ on a filtered probability space $(\Omega, \mathcal{F}, P, \bG = (\mathcal{G}_t)_{t \ge 0})$, and consider a stochastic volatility model of the form 
\begin{align}
&dS_t=\sigma_1v_t^{\alpha}S_tdB^1_t + \sigma_2v^{\alpha}_tS_tdB_t^2, \quad t \ge 0, \quad S_0 = s>0, \label{eq:S} \\
&dv_t=a_1v_tdB^1_t + a_2v_tdB_t^2 + a_3v_tdB_t^3 + \rho(L-v_t)dt, \quad t \ge 0, \quad v_0 = 1,\label{eq:v}
\end{align}
where $\alpha, \rho, L \in \reals^+$ and $\sigma_1, \sigma_2, a_1, a_2, a_3 \in \reals$.

\begin{remark}
The class of stochastic volatility processes \eqref{eq:S}-\eqref{eq:v} reduces to the class considered in \citep{sin1998complications} when $a_3=0$ and to the class presented in \citep{Biagini} when $\rho=0$ and $\alpha=1$. Therefore, all the results of this section can be applied to these particular cases.
\end{remark}
The next proposition states that, under a given condition on the coefficients of \eqref{eq:S}-\eqref{eq:v}, $S$ is a strict $\bG$-local martingale under $P$ but $\M_M(S, \bG) \ne \emptyset$. 
\begin{proposition}\label{prop:sin}
Consider the unique strong solution\footnote{Existence and uniqueness of a strong solution to \eqref{eq:S}-\eqref{eq:v} can be proved as an extension of \citep[Remark 2.2]{sin1998complications}.} $(S,v)$ to the system of SDEs \eqref{eq:S}-\eqref{eq:v}. Then:
\begin{enumerate}
\item $S$ is a local martingale, and is a true martingale if and only if $$a_1 \sigma_1 + a_2 \sigma_2 \le 0.$$
\item For every $T>0$ there exists a probability measure $Q$ equivalent to $P$ on $\G_T$ such that $S$ is a true $Q$-martingale on $[0,T]$.
\end{enumerate}
\end{proposition}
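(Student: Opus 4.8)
The plan is to run the classical change-of-measure program of \citep{sin1998complications}: express $S/s$ as a stochastic exponential, use Girsanov to identify $\bE^P[S_t]$ with the non-explosion probability of a one-dimensional auxiliary volatility diffusion, and settle (non-)explosion by Feller's test (equivalently, via the criterion of \citep{MijatovicUrusov}, cf.\ Theorem~\ref{thm:uru}).

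\textbf{Item 1.} Since $(S,v)$ is a non-exploding continuous strong solution of \eqref{eq:S}--\eqref{eq:v}, the integrand $v^\alpha S$ is locally bounded, so $S=s\,\mathcal{E}\!\left(\int_0^{\cdot} v^\alpha_s(\sigma_1\,dB^1_s+\sigma_2\,dB^2_s)\right)$ is a nonnegative continuous $(P,\bG)$-local martingale, hence a supermartingale; it is a true martingale iff $\bE^P[S_t]=s$ for every $t$. To evaluate $\bE^P[S_t]$ I would localize along $\tau_n:=\inf\{t:v_t\ge n\}$, so that $S^{\tau_n}$ is a genuine martingale, and apply Girsanov: on $\G_t$ the measure $dQ_n=(S^{\tau_n}_t/s)\,dP$ turns $B^1-\int_0^{\cdot}\sigma_1 v^\alpha\,ds$, $B^2-\int_0^{\cdot}\sigma_2 v^\alpha\,ds$, $B^3$ into Brownian motions up to $\tau_n$; substituting these into \eqref{eq:v} shows that under $Q_n$ the process $v$ acquires, up to $\tau_n$, the extra drift $(a_1\sigma_1+a_2\sigma_2)v^{1+\alpha}$. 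Letting $n\to\infty$ gives
\[
\bE^P[S_t]=s\,\widehat{\mathbb P}\bigl(\widehat\zeta>t\bigr),
\]
where $\widehat v$ is the $(0,\infty)$-valued solution of
\[
d\widehat v_t=\widehat v_t\sqrt{a_1^2+a_2^2+a_3^2}\,d\widehat W_t+\bigl[\rho(L-\widehat v_t)+(a_1\sigma_1+a_2\sigma_2)\widehat v_t^{1+\alpha}\bigr]dt,\qquad \widehat v_0=1,
\]
($\widehat W$ a one-dimensional Brownian motion) and $\widehat\zeta$ is its explosion time. If $a_1\sigma_1+a_2\sigma_2\le 0$ the drift is dominated by the affine map $v\mapsto\rho(L-v)$ while the diffusion coefficient is linear, so $\widehat v$ is non-explosive and $0$ is an unattainable natural boundary; hence $\widehat\zeta=\infty$ a.s.\ and $S$ is a true martingale. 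If $a_1\sigma_1+a_2\sigma_2>0$, the strictly super-linear positive drift $(a_1\sigma_1+a_2\sigma_2)v^{1+\alpha}$ (note $1+\alpha>1$) renders $+\infty$ reachable in finite time with positive probability by Feller's test, so $\bE^P[S_t]<s$ for $t$ large enough and $S$ is a strict local martingale.

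\textbf{Item 2.} Fix $T>0$. If $a_1\sigma_1+a_2\sigma_2\le 0$ take $Q=P$, which works by item 1. Otherwise I would cancel the dangerous drift $(a_1\sigma_1+a_2\sigma_2)v^{1+\alpha}$ above by a Girsanov change acting on $v$ in a direction orthogonal to $S$. In a genuine stochastic-volatility model $(a_1,a_2,a_3)$ is not collinear with $(\sigma_1,\sigma_2,0)$, so the linear system
\[
\sigma_1 b^1+\sigma_2 b^2=0,\qquad a_1 b^1+a_2 b^2+a_3 b^3=-(a_1\sigma_1+a_2\sigma_2)
\]
admits a solution $b=(b^1,b^2,b^3)\in\reals^3$. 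Put $\beta_t:=b\,v^\alpha_t$ and $Z_t:=\mathcal{E}\!\left(\int_0^t\beta_s\cdot dB_s\right)$. First I would check that $Z$ is a true $(P,\bG)$-martingale on $[0,T]$: by the same criterion as in item 1 this is equivalent to non-explosion of the volatility diffusion whose drift, under the candidate measure, equals $\rho(L-v)+(a_1 b^1+a_2 b^2+a_3 b^3)v^{1+\alpha}=\rho(L-v)-(a_1\sigma_1+a_2\sigma_2)v^{1+\alpha}$, again affine-dominated from above since $a_1\sigma_1+a_2\sigma_2>0$, hence non-explosive. Thus $Q:=Z_T\cdot P$ is a probability measure with $Q\sim P$ on $\G_T$, under which $\tilde B^i:=B^i-\int_0^{\cdot}b^i v^\alpha\,ds$ are Brownian motions; the first equation in the system makes the drift of $S$ under $Q$ vanish, so $S$ is a $(Q,\bG)$-local martingale. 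Finally, repeating the computation of item 1 under $Q$, the auxiliary diffusion controlling $\bE^Q[S_t]$ now has drift $(a_1\sigma_1+a_2\sigma_2)v^{1+\alpha}+(a_1 b^1+a_2 b^2+a_3 b^3)v^{1+\alpha}+\rho(L-v)=\rho(L-v)$, which is non-explosive; hence $\bE^Q[S_t]=s$ for all $t\le T$ and the nonnegative $Q$-supermartingale $S$ is a true $(Q,\bG)$-martingale on $[0,T]$.

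\textbf{Main obstacle.} The delicate point is making rigorous the identity $\bE[S_t]=s\,\widehat{\mathbb P}(\widehat\zeta>t)$ and its analogue for $Z$: this requires the localization along $\{\tau_n\}$ together with the Girsanov description of the law of $v$ under the only-locally-defined tilted measures, followed by the Feller-test computation that pins down exactly when the resulting one-dimensional diffusion explodes; everything else is bookkeeping. One should also record explicitly the non-degeneracy invoked in item 2---that $S$ and $v$ are not driven by one and the same one-dimensional Brownian motion---which is what makes the linear system for $b$ solvable; when it fails the market is complete, $Q=P$ is forced, and item 2 can hold only in the trivial sub-case $a_1\sigma_1+a_2\sigma_2\le 0$.
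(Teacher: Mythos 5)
Your proposal is correct and reproduces, in full detail, the argument that the paper merely cites: Sin's Theorem~3.2 for item~1 and (an analogue of) Theorem~5.1 of the Biagini--F\"ollmer--Nedelcu paper for item~2. The program---write $S/s$ as a stochastic exponential, localize, use Girsanov on the locally defined tilted measures to read off the extra drift $(a_1\sigma_1+a_2\sigma_2)\,\widehat v^{1+\alpha}$ in the auxiliary volatility diffusion, identify $\bE^P[S_t]$ with the non-explosion probability of $\widehat v$, and settle explosion by Feller's test---is precisely what the paper delegates to the literature; this matches Lemmas~\ref{lemma:sin1} and~\ref{lemma:sin2} and Proposition~\ref{prop:stopped}, which carry out exactly this computation in slightly different guise. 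For item~2, choosing the Girsanov kernel $b\,v^\alpha$ orthogonal to the volatility vector of $S$ and simultaneously tuned to cancel the explosion-inducing drift in $\widehat v$ is the standard device, and your verification that both $Z$ and (a posteriori) $\bE^Q[S_t]$ are controlled by non-explosive diffusions is correct.

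One observation you make is genuinely valuable and worth keeping: item~2 requires the solvability of the linear system
\begin{equation*}
\sigma_1 b^1+\sigma_2 b^2=0,\qquad a_1 b^1+a_2 b^2+a_3 b^3=-(a_1\sigma_1+a_2\sigma_2),
\end{equation*}
which fails precisely when $a_3=0$ and $(a_1,a_2)$ is a positive multiple of $(\sigma_1,\sigma_2)$. In that degenerate (complete-market) case the only candidate is $Q=P$, so item~2 reduces to item~1 and holds only if $a_1\sigma_1+a_2\sigma_2\le 0$, contradicting the paper's unconditional claim. The paper does not record this non-degeneracy hypothesis, though it is presumably inherited implicitly from the settings of the cited references; making it explicit, as you do, sharpens the statement. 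A small technical remark on item~1: your localizing sequence $\tau_n=\inf\{t:v_t\ge n\}$ works (under $P$, $v$ is a non-exploding mean-reverting geometric process, so $\tau_n\uparrow\infty$ and Novikov applies on $[0,T\wedge\tau_n]$ since $v\le n$ there), but it is slightly less robust than the paper's choice $\tau_n=\inf\{t:|\sigma|^2\int_0^t v_s^{2\alpha}ds\ge n\}\wedge T$ used in Proposition~\ref{prop:stopped}, which applies Novikov directly without passing through pointwise bounds on $v$; either choice yields the identity $\bE^P[S_t]=s\,\widehat{\mathbb P}(\widehat\zeta>t)$.
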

\begin{proof} The proofs of the two claims are easy extensions of the proofs of Theorem 3.2 of \citep{sin1998complications} and of Theorem 5.1 of \citep{Biagini}, respectively.
\end{proof}

We now give two results that provide a relation between the expectation of $S$ and the explosion time of a process associated to the volatility $v$. 
\begin{lemma}\label{lemma:sin1}
 Let $(S,v)$ satisfy the system of SDEs \eqref{eq:S}-\eqref{eq:v}. Then
\begin{equation}\label{eq:mt}
\bE[S_t]=S_0P(\{\hat v \text{ does not explode on }[0,t]\}), \quad t \ge 0,
\end{equation}
where $\hat v=(\hat v_t)_{t \ge 0}$ is given by 
\begin{align}\label{eq:w}
d\hat v_t=&a_1\hat v_tdB^1_t + a_2\hat v_tdB_t^2 +a_3\hat v_tdB^3_t+\rho(L-\hat v_t)dt \notag \\
&+ (a_1 \sigma_1 + a_2 \sigma_2)\hat v^{\alpha+1}_t dt, \quad t \ge 0,
\end{align}
$\hat v_0=1$.
\end{lemma}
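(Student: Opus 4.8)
The plan is to follow the classical argument of \citep{sin1998complications}: compute $\bE[S_t]$ by introducing the candidate density process built from $S$ itself, perform a Girsanov change of measure, and identify the surviving probability of an auxiliary diffusion as the normalization constant. First I would write $S$ in exponential form. Since $dS_t = \sigma_1 v_t^\alpha S_t\, dB^1_t + \sigma_2 v_t^\alpha S_t\, dB^2_t$, we have
\begin{equation}\notag
S_t = S_0\, \mathcal{E}\!\left(\int_0^t \sigma_1 v_s^\alpha\, dB^1_s + \int_0^t \sigma_2 v_s^\alpha\, dB^2_s\right),
\end{equation}
so $Z_t := S_t/S_0$ is a nonnegative local martingale with $Z_0 = 1$, hence a supermartingale, and $\bE[S_t] = S_0\,\bE[Z_t]$. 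The task is therefore to compute $\bE[Z_t]$, which is $1$ exactly when $Z$ is a true martingale on $[0,t]$ and strictly less than $1$ otherwise.

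Next I would localize. Let $\tau_n$ be the explosion times of $v$ (or stopping times $\tau_n = \inf\{t: v_t \ge n\}\wedge \inf\{t: v_t \le 1/n\}$ chosen so that $v$ is bounded away from $0$ and $\infty$ on $[0,\tau_n]$); on each $[0,\tau_n]$ the process $Z^{\tau_n}$ is a genuine martingale, so $Q_n := Z_{t\wedge\tau_n}\,dP$ defines a probability measure on $\G_t$ with $\bE[Z_{t\wedge\tau_n}] = 1$. Under $Q_n$, by Girsanov's theorem, $\tilde B^1_s := B^1_s - \int_0^{s\wedge\tau_n}\sigma_1 v_u^\alpha\, du$ and $\tilde B^2_s := B^2_s - \int_0^{s\wedge\tau_n}\sigma_2 v_u^\alpha\, du$ (and $\tilde B^3 = B^3$) are Brownian motions, and substituting into \eqref{eq:v} the drift of $v$ acquires the extra term $(a_1\sigma_1 + a_2\sigma_2)v_u^{\alpha+1}\, du$; that is, under $Q_n$ the stopped volatility process $v^{\tau_n}$ has the law of $\hat v^{\tau_n}$ where $\hat v$ solves \eqref{eq:w}. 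Then
\begin{equation}\notag
\bE^P[Z_{t\wedge\tau_n}\,\one_{\{\tau_n > t\}}] = Q_n(\tau_n > t) = P(\hat\tau_n > t),
\end{equation}
where $\hat\tau_n$ is the corresponding exit time for $\hat v$, because the event $\{\tau_n > t\}$ is $\G_t$-measurable and its $Q_n$-probability depends only on the law of the path of $v$ on $[0,t\wedge\tau_n]$. Here one must be slightly careful: I would write $\bE^P[Z_{t\wedge\tau_n}\one_{\{\tau_n\le t\}}]$ and check it tends to $0$, or equivalently argue directly that $\bE^P[Z_t] = \lim_n \bE^P[Z_t\one_{\{\tau_n>t\}}] = \lim_n \bE^P[Z_{t\wedge\tau_n}\one_{\{\tau_n>t\}}]$ by monotone/dominated convergence applied to the supermartingale $Z$ (using $Z_t\one_{\{\tau_n>t\}} = Z_{t\wedge\tau_n}\one_{\{\tau_n>t\}}$). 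Passing to the limit $n\to\infty$ gives $\bE^P[Z_t] = P(\hat v \text{ does not explode on } [0,t])$, since $\{\hat\tau_n > t\}\uparrow\{\hat v\text{ does not explode on }[0,t]\}$, and multiplying by $S_0$ yields \eqref{eq:mt}.

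The main obstacle is the limiting argument that exchanges $n\to\infty$ with the expectation and correctly accounts for the mass lost to explosion, i.e., verifying that the contribution $\bE^P[Z_{t\wedge\tau_n}\one_{\{\tau_n\le t\}}]$ of paths where $v$ has already blown up is negligible under $P$ in the limit — this is precisely the step where the ``loss of mass'' of the strict local martingale $S$ is matched with explosion of $\hat v$. Since Proposition \ref{prop:sin} already guarantees existence and uniqueness of the strong solution $(S,v)$ and a version of this computation underlies that proposition, I expect the remaining details (integrability needed for Girsanov on $[0,\tau_n]$, measurability of the explosion events) to be routine, and I would simply invoke the extension of \citep[proof of Theorem 3.2]{sin1998complications} for them.
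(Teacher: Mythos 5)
Your proposal follows essentially the same route as the paper. Lemma \ref{lemma:sin1} is proved in the text simply by specializing Proposition \ref{prop:stopped} to $\hat\tau=\infty$, and the proof of that proposition is exactly the Sin-type change-of-measure argument you outline: localize so the stopped stochastic exponential is a true martingale, set $dQ_n = (S_{t\wedge\tau_n}/S_0)\,dP$, apply Girsanov to turn $v$ on $[0,\tau_n]$ into $\hat v$, identify $Q_n(\tau_n>t)$ with $P(\hat\tau_n>t)$, and pass to the limit. The paper localizes with $\tau_n$ defined through $\int_0^t v_s^{2\alpha}\,ds$, which makes Novikov immediate and keys the limit directly to explosion of $\hat v$ to $+\infty$; your level-crossing stopping times also work, though they formally require noting that $\hat v$ does not reach $0$ (cf.\ Lemma \ref{lemma:sin2}) so that $\{\hat\tau_n>t\}$ really increases to the non-explosion event.

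However, one step in your ``slight care'' paragraph is flatly wrong and is \emph{not} equivalent to the alternative you give. It is false in general that $\bE^P\bigl[Z_{t\wedge\tau_n}\one_{\{\tau_n\le t\}}\bigr]\to 0$: since $Z^{\tau_n}$ is a true martingale, $\bE^P[Z_{t\wedge\tau_n}]=1$, so
\[
\bE^P\bigl[Z_{t\wedge\tau_n}\one_{\{\tau_n\le t\}}\bigr]
= 1-\bE^P\bigl[Z_t\one_{\{\tau_n>t\}}\bigr]\ \downarrow\ 1-\bE^P[Z_t]
\]
by monotone convergence, and this limit is \emph{strictly positive} precisely when $S$ is a strict local martingale; it is the lost mass, which after Girsanov is identified with the explosion probability of $\hat v$. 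Had the term vanished you would conclude $\bE^P[Z_t]=1$ for every $t$, contradicting the lemma's content. The second route you name is the correct one: $\bE^P[Z_t]=\lim_n\bE^P[Z_t\one_{\{\tau_n>t\}}]=\lim_n\bE^P[Z_{t\wedge\tau_n}\one_{\{\tau_n>t\}}]=\lim_n P(\hat\tau_n>t)$, and the last limit equals $P(\hat v\text{ does not explode on }[0,t])$. Your closing sentence, which asserts that the contribution of ``paths where $v$ has already blown up'' is negligible, should be deleted: under $P$ the process $v$ never explodes, and the nonvanishing term is exactly the mass lost — the mechanism your own main chain of equalities correctly captures.
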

\begin{proof} This result is a particular case of Proposition \ref{prop:stopped}, which we give below. 
\end{proof}
\begin{lemma}\label{lemma:sin2}
The (unique) solution to equation \eqref{eq:w}
explodes to $+\infty$ in finite time with positive probability if and only if $a_1\sigma_1+a_2\sigma_2>0$. Moreover, if $a_1\sigma_1+a_2\sigma_2>0$, it does not reach zero in finite time.
\end{lemma}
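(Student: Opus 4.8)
The plan is to recognise $\hat v$ as a regular one-dimensional diffusion on $J=(0,\infty)$ and to read off both assertions from Feller's test for explosions. First I would reduce \eqref{eq:w} to an SDE driven by a single Brownian motion. Set $c:=a_1\sigma_1+a_2\sigma_2$ and $\bar a:=\sqrt{a_1^2+a_2^2+a_3^2}$. The process $a_1B^1+a_2B^2+a_3B^3$ is a continuous local martingale with quadratic variation $\bar a^2t$, hence equals $\bar a\,\bar B$ for a one-dimensional Brownian motion $\bar B$ by L\'evy's characterisation. (If $\bar a=0$ then $c=0$ too, and \eqref{eq:w} is the ODE $d\hat v_t=\rho(L-\hat v_t)\,dt$ with $\hat v_0=1$, whose solution stays in $(0,\infty)$, so both statements hold trivially; from now on assume $\bar a>0$.) Thus $\hat v$ solves $d\hat v_t=\bar a\,\hat v_t\,d\bar B_t+\mu(\hat v_t)\,dt$ with $\mu(x):=\rho(L-x)+c\,x^{\alpha+1}$, a diffusion on $J$ whose coefficients are $C^1$ and whose diffusion coefficient $\bar a x$ is non-vanishing on $J$; in particular conditions of the type \eqref{condsigma1}--\eqref{condsigma2} hold, $\hat v$ is well defined up to an explosion time, and Feller's test applies.

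Next I would compute the scale density $s'(x):=\exp\!\big(-\!\int_{x_0}^{x}\tfrac{2\mu(z)}{\bar a^2z^2}\,dz\big)$, for a fixed $x_0\in J$, and the speed density $m(x):=\tfrac{2}{\bar a^2x^2\,s'(x)}$. Writing $\tfrac{2\mu(z)}{\bar a^2z^2}=\tfrac{2\rho L}{\bar a^2z^2}-\tfrac{2\rho}{\bar a^2z}+\tfrac{2c}{\bar a^2}z^{\alpha-1}$ and integrating (using $\alpha>0$ for the last term) yields the asymptotics
$$
s'(x)\ \asymp\ x^{2\rho/\bar a^2}\exp\!\Big(-\tfrac{2c}{\bar a^2\alpha}x^{\alpha}\Big)\ \ (x\to\infty),\qquad s'(x)\ \asymp\ x^{2\rho/\bar a^2}\exp\!\Big(\tfrac{2\rho L}{\bar a^2x}\Big)\ \ (x\to0^{+}).
$$
Recall that Feller's test characterises attainability of a boundary by finiteness of the corresponding Feller function, $v(\infty-)=\int_{x_0}^{\infty}s'(y)\big(\int_{x_0}^{y}m(z)\,dz\big)dy$ at $\infty$ and $v(0+)=\int_{0}^{x_0}s'(y)\big(\int_{y}^{x_0}m(z)\,dz\big)dy$ at $0$, and that whenever a boundary is attainable the path converges to it at the explosion time.

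I would then examine the two boundaries. At $0$: since $\mu(x)\to\rho L>0$, the density $s'$ blows up so fast that $\int_{0}^{x_0}s'(y)\,dy=\infty$, whereas $m(x)\to0$ as $x\to0^{+}$ (the exponential factor dominates $x^{-2}$), so $m$ is integrable near $0$; hence $v(0+)=\infty$, the boundary $0$ is not attainable, and $\hat v$ never reaches $0$ in finite time for any value of $c$, in particular for $c>0$. At $+\infty$: if $c\le0$ then $s'$, and hence $\int_{x_0}^{\infty}s'$, diverges while $m$ is still integrable at $\infty$, so $v(\infty-)=\infty$ and $\hat v$ does not explode. If $c>0$, then $m(z)\asymp z^{-2-2\rho/\bar a^2}\exp(\tfrac{2c}{\bar a^2\alpha}z^{\alpha})$ grows, an integration by parts gives $\int_{x_0}^{y}m(z)\,dz\sim\tfrac{\bar a^2}{2c}\,y^{1-\alpha}m(y)$ as $y\to\infty$, and since $s'(y)m(y)=\tfrac{2}{\bar a^2y^2}$ by the very definition of $m$, the integrand of $v(\infty-)$ is $\asymp y^{-1-\alpha}$, which is integrable at $\infty$ precisely because $\alpha>0$; thus $v(\infty-)<\infty$, $+\infty$ is attainable, and (as $0$ is not) $\hat v$ explodes to $+\infty$ in finite time with positive probability. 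Combining the two boundary analyses gives both the stated equivalence and the no-zero-hitting claim.

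The hardest step will be the asymptotic control of $v(\infty-)$ in the explosive case: one must upgrade the heuristic $\int_{x_0}^{y}m\sim\text{const}\cdot y^{1-\alpha}m(y)$ to a rigorous one-sided bound (a short integration-by-parts argument suffices), then exploit the identity $s'm=2/(\bar a^2x^2)$, after which convergence of $\int^{\infty}y^{-1-\alpha}\,dy$ is exactly the condition $\alpha>0$. Everything else is routine: the explicit scale/speed computations, checking the local-integrability hypotheses on $\mu,\bar a x$, and disposing of the degenerate case $\bar a=0$.
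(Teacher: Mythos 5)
Your proof is correct and takes essentially the same route as the paper: recognize $\hat v$ as a one-dimensional It\^o diffusion driven by $\bar a^{-1}(a_1B^1+a_2B^2+a_3B^3)$ (the paper's $|a|^{-1}(a\cdot B)$), then apply Feller's test for explosions. The paper simply cites Lemma 4.3 of \citet{sin1998complications} as ``an easy extension'' to $a_3\ne 0$, whereas you carry out the scale/speed computation explicitly, including the integration-by-parts estimate of $\int^y m$ in the explosive case; this is essentially the calculation Sin performs.

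One small caveat: your argument at the left boundary opens with ``since $\mu(x)\to\rho L>0$,'' from which you conclude that $s'$ blows up near $0$ and $m$ is integrable there. The paper, however, later applies this lemma with $\rho=0$ (Lemma~\ref{lem:mC1} and Example~\ref{ex:problem2}); in that case $\mu(x)\to 0$, $s'$ stays bounded near $0$, and $m\asymp x^{-2}$ is \emph{not} integrable at $0$. The conclusion $v(0+)=\infty$ still holds there, since $\int_y^{x_0} m\asymp y^{-1}$ is non-integrable against the now-bounded $s'$, but the reasons you give are specific to $\rho L>0$ and would need a short case distinction to cover $\rho L=0$.
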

\begin{proof} 
The result is given in Lemma 4.3 of \citep{sin1998complications} when $a_3=0$, and proved by using Feller's test of explosions. In this case, the test is applicable because $\hat v$ is a one-dimensional It\^o diffusion with respect to the Brownian motion $1/|a|(a \cdot B)$, with $a=(a_1,a_2)$ and $B=(B^1,B^2)$. The author proves that $\hat v$ explodes with positive probability in finite time, and does not reach the origin in finite time, when $a \cdot \sigma >0$, where $\sigma = (\sigma_1,\sigma_2)$. In our case, the proof comes as an easy extension by considering now $a=(a_1,a_2,a_3)$ and $\sigma = (\sigma_1,\sigma_2,0)$.
\end{proof}


We now give an example for which the property \eqref{prob:locmartgen} is satisfied. We start with the following lemma, which is Proposition 5.2 of \citep{karatzas2016distribution}. 

\begin{lemma}\label{lem:explosionC1}
Fix an open interval $I=(\ell,r)$ with $-\infty \le \ell <r \le \infty$ and consider the stochastic differential equation 
\begin{equation}\label{eq:karatzas}
dY_t=s(Y_t)\left(dW_t+b(Y_t)dt\right), \quad t \ge 0, \quad Y_0=\xi,
\end{equation}
where $\xi \in I$ and $W$ denotes a Brownian motion. Suppose that the functions $b: (I,\B(I)) \to (\reals,\B(\reals))$ and $s: (I,\B(I)) \to (\reals \setminus \{0\},\B(\reals \setminus \{0\}))$ are measurable and satisfy
\begin{equation}\label{eq:condkaratzas}
\int_K \left( \frac{1}{s^2(y)}+\left|\frac{b(y)}{s(y)}\right|\right)dy < \infty \qquad \text{for every compact set $K\subset I$}.
\end{equation}
Call $\tau^{\xi}$ the first time when the (unique in the sense of probability distribution) weak solution $Y$ to \eqref{eq:karatzas} exits the open interval $I$, and introduce the function $U:(0,\infty) \times I \to \reals^+$ defined by
$$
U(t,\xi):=P(\tau^{\xi}>t).
$$ 
If the functions $s(\cdot)$ and $b(\cdot)$ are locally H\"older continuous on $I$, the function $U(\cdot,\cdot)$ is of class $C\left([0,\infty)\times I\right) \cap C^{1,2}\left((0,\infty)\times I\right)$.
\end{lemma}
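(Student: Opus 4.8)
The plan is to recognize $U$ as the probabilistic representation of the solution of a backward Kolmogorov equation on $(0,\infty)\times I$ and to read off its regularity from classical interior parabolic (Schauder) theory; the only obstructions are that the coefficients are merely locally H\"older and that $I$ need not be bounded, and both are handled by localization together with a compactness argument. As a preliminary reduction I would pass to natural scale: since $s$ and $b$ are continuous and $s$ never vanishes, the scale function $p$ with density $p'(y):=\exp\bigl(-\int_c^y 2b(w)/s(w)\,dw\bigr)$ is a $C^2$-diffeomorphism of $I$ onto an open interval, and under $Z:=p(Y)$ one has $dZ_t=\tilde s(Z_t)\,dW_t$ with $\tilde s:=(p'\,s)\circ p^{-1}$ again locally H\"older and nonvanishing, the exit time of $I$ carried to the exit time of $p(I)$, and membership of $U$ in $C\cap C^{1,2}$ invariant under composition with $p$. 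So from now on I assume $b\equiv 0$ and write $\L f:=\tfrac12\,\tilde s^2 f''$.

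Next I would localize. Pick open intervals $I_n=(\ell_n,r_n)$ with $\overline{I_n}\subset I$, $\ell_n\downarrow\ell$ and $r_n\uparrow r$, let $\tau^\xi_n$ be the first exit time of $Y$ (started at $\xi$) from $I_n$, and set $U_n(t,\xi):=P(\tau^\xi_n>t)$. On $\overline{I_n}$ the coefficient $\tilde s^2$ is bounded, bounded away from $0$, and H\"older continuous, so $\partial_t-\L$ is uniformly parabolic with H\"older coefficients there; classical parabolic Schauder theory then provides a unique $u$, of class $C^{1,2}$ on $(0,\infty)\times I_n$ and continuous on $[0,\infty)\times\overline{I_n}$ off the two corners $(0,\ell_n),(0,r_n)$, solving
\begin{equation*}
\partial_t u=\L u\ \ \text{on}\ (0,\infty)\times I_n,\qquad u(0,\cdot)=1\ \ \text{on}\ I_n,\qquad u=0\ \ \text{on}\ (0,\infty)\times\partial I_n.
\end{equation*}
Applying It\^o's formula to $(u(t-v,Y_v))_{0\le v\le\tau^\xi_n\wedge(t-\delta)}$ and letting $\delta\downarrow0$ (using $0\le u\le1$ and dominated convergence) identifies $u$ with $U_n$.

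Now I would let $n\to\infty$. Because $\ell_n\downarrow\ell$ and $r_n\uparrow r$, we have $\tau^\xi_n\uparrow\tau^\xi$ and hence $\{\tau^\xi_n>t\}\uparrow\{\tau^\xi>t\}$, so $U_n(t,\xi)\uparrow U(t,\xi)$ for every $(t,\xi)$. A monotone pointwise limit of $C^{1,2}$ functions is a priori only measurable, so the crucial step is precompactness of $\{U_n\}$ in $C^{1,2}_{\mathrm{loc}}((0,\infty)\times I)$: on any compact $K=[t_1,t_2]\times[a,b]\subset(0,\infty)\times I$, and for all $n$ with $[a,b]\subset I_n$, each $U_n$ solves $\partial_t U_n=\L U_n$ on a neighbourhood of $K$ and satisfies $0\le U_n\le1$, so interior Schauder estimates bound $\|U_n\|_{C^{1+\gamma/2,\,2+\gamma}(K)}$ uniformly in $n$, where $\gamma$ is the local H\"older exponent of $\tilde s^2$ near $[a,b]$. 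By Arzel\`a--Ascoli the $U_n$ are precompact in $C^{1,2}(K)$, and since their only pointwise limit is $U$ we get $U\in C^{1,2}(K)$ with $\partial_t U=\L U$ on $K$; as $K$ was arbitrary, $U\in C^{1,2}((0,\infty)\times I)$.

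It remains to show $U$ extends continuously to $\{0\}\times I$ with $U(0,\cdot)\equiv1$, i.e.\ $P(\tau^\xi\le t)\to0$ as $t\downarrow0$ locally uniformly in $\xi$. For $\xi$ in a compact $[a,b]\subset I$ one bounds $\tau^\xi$ below by the exit time from a slightly larger compact $[a',b']$ with $[a,b]\subset(a',b')\subset[a',b']\subset I$; after the scale change $Y$ on $[a',b']$ is a Brownian motion run on a clock whose speed is controlled by $\sup_{[a',b']}\tilde s^2<\infty$, so $\tau^\xi$ dominates a deterministic multiple of the exit time of a standard Brownian motion from $(a',b')$ started at $\xi$, whose law is explicit and concentrates at $+\infty$ as $t\downarrow0$ uniformly over $\xi\in[a,b]$. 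Together with the interior regularity this gives $U\in C([0,\infty)\times I)\cap C^{1,2}((0,\infty)\times I)$. I expect the passage to the limit to be the main obstacle: the uniform-in-$n$ interior Schauder estimate is precisely what keeps the $C^{1,2}$ regularity from being destroyed in the monotone limit, and it is exactly here that the local H\"older (rather than merely $L^1_{\mathrm{loc}}$) hypothesis on $s$ and $b$ is used; by comparison, the corner incompatibility in the localized Cauchy--Dirichlet problem and the small-time boundary estimate are routine, though they need to be stated with some care.
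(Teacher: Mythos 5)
The paper does not actually prove Lemma~\ref{lem:explosionC1}; it is quoted verbatim as Proposition~5.2 of Karatzas and Ruf (2016) and the citation serves as the proof. Your argument is therefore a self-contained reconstruction of a cited result rather than a reproduction of anything in the paper, so the comparison is between your proof and a black box.

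On its own merits your proof is sound, and the architecture is the right one for this statement: (i) pass to natural scale so that the drift vanishes, using that under the local H\"older hypothesis the scale function $p$ is a $C^2$-diffeomorphism and $\tilde s=(p's)\circ p^{-1}$ is again nonvanishing and locally H\"older; (ii) exhaust $I$ by $I_n\Subset I$, solve the Cauchy--Dirichlet problem on $(0,\infty)\times I_n$ by classical Schauder theory, and identify the classical solution with $U_n(t,\xi)=P(\tau_n^\xi>t)$ via It\^o's formula; (iii) pass $U_n\uparrow U$ and rescue the $C^{1,2}$-regularity in the limit by uniform interior Schauder estimates plus Arzel\`a--Ascoli; (iv) obtain joint continuity down to $t=0$ by a time-change comparison with Brownian exit times from a slightly larger compact subinterval. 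You correctly identify the interior Schauder estimate as the load-bearing step and the precise place where local H\"older continuity of $s$ and $b$ (as opposed to the mere local integrability in \eqref{eq:condkaratzas}, which suffices for existence and uniqueness in law of $Y$) is indispensable. Two small points worth making explicit in a written-up version: the It\^o identification must use a \emph{further} interior localization (stop $Y$ at the exit from a subinterval $J\Subset I_n$, apply It\^o there since $u$ is only known to be $C^{1,2}$ in the open cylinder $(0,\infty)\times I_n$, then let $J\uparrow I_n$ and $\delta\downarrow0$ using bounded convergence and the a.s.\ absence of an atom for $\tau_n^\xi$ at $t$); and the monotone convergence $U_n\uparrow U$ relies on $\tau_n^\xi\uparrow\tau^\xi$, which holds because a continuous path remaining in $I$ over $[0,t]$ has compact range and hence lies in some $I_n$. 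Neither is a gap, but both are part of the ``routine'' you flag and deserve a sentence each.
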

Applying Lemma \ref{lem:explosionC1} to our setting, we get the following result. 

\begin{lemma}\label{lem:mC1}
Consider the solution $\hat v$ to equation \eqref{eq:w}, supposing $\rho = 0$ and $a_1 \sigma_1 + a_2 \sigma_2>0$. Define the function $m:(0,\infty) \to \reals^+$ by 
$$
m(t) = P(\{\hat v \text{ does not explode on }[0,t]\}).
$$
Thus $m \in C^1((0,\infty))$.
\end{lemma}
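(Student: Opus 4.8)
The plan is to rewrite equation \eqref{eq:w} in the case $\rho=0$ as a one-dimensional SDE of the form \eqref{eq:karatzas} on the interval $I=(0,\infty)$, to check that the hypotheses of Lemma \ref{lem:explosionC1} are met, and finally to identify the function $m$ with the section $U(\cdot,1)$ of the function $U$ produced there.

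First I would set $|a|^2:=a_1^2+a_2^2+a_3^2$ and note that $|a|>0$, since the assumption $a_1\sigma_1+a_2\sigma_2>0$ forces $(a_1,a_2)\neq(0,0)$. By L\'evy's characterization, $W:=|a|^{-1}(a_1B^1+a_2B^2+a_3B^3)$ is a Brownian motion, and with $\rho=0$ equation \eqref{eq:w} becomes
$$
d\hat v_t=|a|\,\hat v_t\left(dW_t+\frac{a_1\sigma_1+a_2\sigma_2}{|a|}\,\hat v_t^{\alpha}\,dt\right),\qquad \hat v_0=1,
$$
which is \eqref{eq:karatzas} with $I=(0,\infty)$, $\xi=1$, $s(y)=|a|\,y$ and $b(y)=|a|^{-1}(a_1\sigma_1+a_2\sigma_2)\,y^{\alpha}$. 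Since \eqref{eq:w} has a unique strong solution, $\hat v$ is in particular a weak solution of this reduced equation, which is unique in law under the integrability condition \eqref{eq:condkaratzas} verified below.

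Next I would check the hypotheses of Lemma \ref{lem:explosionC1}: $s$ does not vanish on $(0,\infty)$; for any compact $K\subset(0,\infty)$ one has $K\subseteq[\delta,N]$ with $0<\delta<N<\infty$, so both $1/s^2(y)=|a|^{-2}y^{-2}$ and $|b(y)/s(y)|=|a|^{-2}|a_1\sigma_1+a_2\sigma_2|\,y^{\alpha-1}$ are bounded on $K$, whence the integral in \eqref{eq:condkaratzas} is finite; and $s,b$ are $C^1$ on $(0,\infty)$ (here $\alpha>0$ is used for $b$), hence locally H\"older continuous there. Lemma \ref{lem:explosionC1} then gives that, with $\tau^{\xi}$ the first exit time of the solution started at $\xi$ from $I=(0,\infty)$, the function $U(t,\xi)=P(\tau^{\xi}>t)$ is of class $C([0,\infty)\times(0,\infty))\cap C^{1,2}((0,\infty)\times(0,\infty))$; in particular $t\mapsto U(t,1)$ has a continuous derivative on $(0,\infty)$.

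Finally I would identify $m$ with $U(\cdot,1)$. Since $a_1\sigma_1+a_2\sigma_2>0$, Lemma \ref{lemma:sin2} tells us that $\hat v$ does not reach $0$ in finite time, so the only way the solution leaves $(0,\infty)$ is by exploding to $+\infty$; hence $\{\tau^{1}>t\}=\{\hat v\text{ does not explode on }[0,t]\}$ up to a null set, and $m(t)=U(t,1)$ for every $t>0$. Therefore $m\in C^1((0,\infty))$. The step I expect to require the most care is this last identification: one must be sure that the state space can legitimately be taken to be the open interval $(0,\infty)$ and that its left endpoint is inaccessible, which is exactly where Lemma \ref{lemma:sin2} enters — otherwise the exit time of $(0,\infty)$ would fail to coincide with the explosion time of $\hat v$, and $U(\cdot,1)$ would not equal $m$.
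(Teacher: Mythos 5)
Your proposal is correct and follows essentially the same route as the paper: rewrite $\hat v$ as a one-dimensional It\^o diffusion driven by $W=|a|^{-1}(a\cdot B)$, identify $s(y)=|a|y$ and $b(y)=|a|^{-1}(a_1\sigma_1+a_2\sigma_2)y^{\alpha}$, verify condition \eqref{eq:condkaratzas} and local H\"older continuity to invoke Lemma \ref{lem:explosionC1}, and use Lemma \ref{lemma:sin2} to rule out absorption at $0$ so that $m(t)=U(t,1)$. Your final paragraph spelling out the identification of the exit time with the explosion time is a welcome bit of extra care, but the argument is the same as the paper's.
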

\begin{proof}
Note that $\hat v$ is a one-dimensional It\^o diffusion with respect to the Brownian motion $W=1/|a|(a \cdot B)$, with $a=(a_1,a_2,a_3)$ and $B=(B^1,B^2,B^3)$. In particular, it holds
$$
d\hat v_t = |a|\hat v_t dW_t + (a_1 \sigma_1 + a_2 \sigma_2)\hat v ^{\alpha +1}dt, \quad t \ge 0.
$$
We are thus in the setting of  Lemma \ref{lem:explosionC1} with $I=(0,\infty)$ and
$$
s(x) = |a| x, \qquad b(x) = \frac{a_1 \sigma_1 + a_2 \sigma_2}{|a|}x^{\alpha}.
$$
Condition \eqref{eq:condkaratzas} holds because for every $K$ compact interval of $(0,\infty)$ we have
\begin{equation}\notag
\int_K \left( \frac{1}{s^2(y)}+\left|\frac{b(y)}{s(y)}\right|\right)dy = \int_K \left( \frac{1}{|a|y^2}+\left|\frac{a_1 \sigma_1 + a_2 \sigma_2}{|a|^2}y^{\alpha-1}\right|\right)dy<\infty.
\end{equation}
Moreover, $s(\cdot)$ and $b(\cdot)$ are locally H\"older continuous on $(0,\infty)$. The result follows from Lemma \ref{lem:explosionC1}, since $\hat v$ does not reach zero in finite time by Lemma \ref{lemma:sin2}.
\end{proof}
We are now ready to state our result.
\begin{example}\label{ex:problem2}
Consider the solutions $(S,v)$ to the system of SDEs \eqref{eq:S}-\eqref{eq:v}, supposing  $\rho=0$ and $a_1\sigma_1 +a_2\sigma_2>0$. 
Let $\bH$ be the filtration generated by $(B^1, B^2, B^3)$, and $\bF$ the filtration generated by a fourth Brownian motion $B^4$, independent of $(B^1,B^2,B^3)$. Introduce a  $\bF$-local martingale $M$ with volatility term bounded away from zero.

Then the process $X:=M+S$ is an $\bH \cup \bF$-local martingale, and its optional projection into $\bF$ is $\op X = M +m$, where $m_t=S_0P(\{\hat v \text{ does not explode on }[0,t]\})$ by Lemma \ref{lemma:sin1}.  Thus $\op X$ is not an $\bF$-local martingale, because $m$ is not constant by Lemma \ref{lemma:sin2}. However, since the derivative of $m$ is continuous on $(0,\infty)$ by Lemma \ref{lem:mC1}, Corollary \rm{VIII}.1.16 of \citep{RevuzYor} implies that there exists a measure $Q \sim P$ such that $\op X = M +m$ is a $(Q,\bF)$-local martingale. Hence, property \eqref{prob:locmartgen} is satisfied.
\end{example} 

\begin{example}\label{ex:problem3}
Again in the setting and with the notations of Example \ref{ex:problem2}, suppose now $M$ to be a $(P,\bF)$-strict local martingale. The hypothesis that $M$ has volatility bounded away from zero can instead be dropped. Let $Q$ the probability measure from Proposition \ref{prop:sin} under which $S$ is a true martingale. Since the density of $Q$ with respect to $P$ only depends on $B^1,B^2,B^3$, $M$ is a strict local martingale also with respect to $Q$, and $X$ as well. The $Q$-optional projection of $X$ into $\bF$ is given by
$$
\opQ X_t = M_t +\mathbb{E}[S_t]=\opQ X_t = M_t + S_0, \quad t \ge 0,
$$
which is a $(Q,\bF)$-local martingale. 

Thus it holds $P \notin \M^o_{loc}(X, \bF)$ but
$$
 \M_{L}(X,\bH \cup \bF) \cap \M^o_{loc}(X, \bF) \ne \emptyset,
$$
i.e., property \eqref{prob:intersect} holds for $X$.
\end{example}

We now find an example of a sub-filtration $\hat \bF \subset \bG$ such that the optional projection of $S$ into $S$ is not an $\hat \bF$-local martingale. 
The next proposition is a generalization of Lemma 4.2 of \citep{sin1998complications}.
\begin{proposition}\label{prop:stopped}
Suppose that the two-dimensional process $(S,v)$ satisfies the system of SDEs \eqref{eq:S}-\eqref{eq:v}, and call $\bF$ the natural filtration of $B^1$. Introduce the process $(\bar X_t)_{t \ge 0}$ defined by
\begin{equation}\label{eq:hatXsin}
\hat X_t = B_t^1 - \sigma_1\int_0^t v_s^{\alpha}ds, \quad t \ge 0,
\end{equation}
and call $\hat \bF$ the natural filtration of $\hat X$. Then for every $\hat \bF$- stopping time $\hat \tau$ there exists an $\bF$-stopping time $\tau$ such that 
\begin{equation}\label{eq:stoppedattau}
\bE[S_{T\wedge \hat \tau}]=S_0P(\{\hat v \text{ does not explode on }[0,T \wedge \tau]\}),
\end{equation}
where $\hat v$ is defined in \eqref{eq:w}.
\end{proposition}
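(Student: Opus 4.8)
The plan is to adapt the change-of-measure computation of \citep[Lemma 4.2]{sin1998complications} underlying Lemma \ref{lemma:sin1}. First I would localize $S$. Since $S$ is a nonnegative continuous $(P,\bG)$-local martingale and $v$ is a (non-exploding) geometric mean-reverting diffusion, the stopping times $\rho_n:=\inf\{t\ge 0:\ S_t\ge n\ \text{or}\ v_t\ge n\}$ satisfy $\rho_n\uparrow\infty$ a.s.\ and make $S^{\rho_n}$ a bounded, hence true, $(P,\bG)$-martingale (by continuity $S_{\rho_n}\le n$). Define $Q_n$ on $\G_T$ by $\frac{dQ_n}{dP}=\frac{S_{T\wedge\rho_n}}{S_0}$, so that $\frac{dQ_n}{dP}|_{\G_t}=\mathcal E\big(\int_0^{t\wedge\rho_n}\sigma_1 v_s^\alpha dB^1_s+\sigma_2 v_s^\alpha dB^2_s\big)$, an equivalent probability measure. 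Writing $\theta:=T\wedge\hat\tau$, which is a $\bG$-stopping time bounded by $T$, monotone convergence gives $\bE^P[S_\theta]=\lim_n\bE^P[S_\theta\one_{\{\rho_n\ge\theta\}}]$, while optional sampling for the bounded martingale $S^{\rho_n}$ together with $\{\rho_n\ge\theta\}\in\G_\theta$ yields $\bE^P[S_\theta\one_{\{\rho_n\ge\theta\}}]=\bE^P[S_{T\wedge\rho_n}\one_{\{\rho_n\ge\theta\}}]=S_0\,Q_n(\rho_n\ge\theta)$. Hence $\bE^P[S_{T\wedge\hat\tau}]=S_0\lim_{n\to\infty}Q_n(\rho_n\ge T\wedge\hat\tau)$.

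Next I would apply Girsanov's theorem under $Q_n$: the processes $\tilde B^1_t=B^1_t-\sigma_1\int_0^{t\wedge\rho_n}v_s^\alpha ds$, $\tilde B^2_t=B^2_t-\sigma_2\int_0^{t\wedge\rho_n}v_s^\alpha ds$, $\tilde B^3_t=B^3_t$ form a three-dimensional $(Q_n,\bG)$-Brownian motion on $[0,T]$. Since $\hat X$ and $\tilde B^1$ agree on $[0,\rho_n]$, the stopped process $\hat X^{\rho_n}$ is a stopped $Q_n$-Brownian motion. Substituting $dB^i=d\tilde B^i+\lambda^i_s\,ds$ with $\lambda=(\sigma_1 v^\alpha,\sigma_2 v^\alpha,0)$ into \eqref{eq:S}-\eqref{eq:v}, one sees that on $[0,\rho_n]$ the pair $(S,v)$ solves, driven by $(\tilde B^1,\tilde B^2,\tilde B^3)$,
\begin{align}
dS_t&=\sigma_1 v_t^\alpha S_t\,d\tilde B^1_t+\sigma_2 v_t^\alpha S_t\,d\tilde B^2_t+(\sigma_1^2+\sigma_2^2)v_t^{2\alpha}S_t\,dt,\notag\\
dv_t&=a_1 v_t\,d\tilde B^1_t+a_2 v_t\,d\tilde B^2_t+a_3 v_t\,d\tilde B^3_t+\rho(L-v_t)\,dt+(a_1\sigma_1+a_2\sigma_2)v_t^{\alpha+1}\,dt,\notag
\end{align}
so that, in particular, $v$ acquires precisely the dynamics \eqref{eq:w} of $\hat v$.

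Then I would transfer the computation back to the original space. Write the $\hat\bF$-stopping time $T\wedge\hat\tau$ as $F(\hat X)$ for a stopping functional $F$ on path space with $F\le T$, and let $\tau$ be the same functional applied to $B^1$; then $\tau$ is an $\bF$-stopping time and $T\wedge\tau=F(B^1)$. Let $(\bar S,\bar v)$ be the (unique up to explosion, cf.\ the footnote to Proposition \ref{prop:sin}) solution under $P$ of the displayed system with $B$ in place of $\tilde B$ and $(\bar S_0,\bar v_0)=(s,1)$, so $\bar v$ is exactly $\hat v$, and set $\bar\rho_n:=\inf\{t\ge 0:\ \bar S_t\ge n\ \text{or}\ \bar v_t\ge n\}$. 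Weak uniqueness for this system gives that the law of $(S,v,\hat X)^{\rho_n}$ under $Q_n$ equals the law of $(\bar S,\bar v,B^1)^{\bar\rho_n}$ under $P$; since $\{\rho_n\ge T\wedge\hat\tau\}=\{\rho_n\ge F(\hat X)\}$ is measurable with respect to this stopped triple, $Q_n(\rho_n\ge T\wedge\hat\tau)=P(\bar\rho_n\ge T\wedge\tau)$. Because $\bar S_t=s\exp\big(\int_0^t\sigma_1\bar v_s^\alpha dB^1_s+\sigma_2\bar v_s^\alpha dB^2_s+\tfrac12(\sigma_1^2+\sigma_2^2)\bar v_s^{2\alpha}ds\big)$ stays finite on any interval on which $\bar v$ stays finite, $\bar S$ cannot explode before $\bar v$, so $\bar\rho_n\uparrow\hat\zeta$, the explosion time of $\hat v$. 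Letting $n\to\infty$ (the events $\{\bar\rho_n\ge T\wedge\tau\}$ increase) gives $P(\hat\zeta>T\wedge\tau)=P(\{\hat v\text{ does not explode on }[0,T\wedge\tau]\})$, which is \eqref{eq:stoppedattau}.

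The routine inputs are the non-explosion and existence/uniqueness facts for $v$ and $\hat v$ (already used in Lemmas \ref{lemma:sin1} and \ref{lemma:sin2}), the boundedness of $S^{\rho_n}$, and Girsanov's theorem. \textbf{The main obstacle} I expect is the measure-theoretic bookkeeping in the third step: making precise that an $\hat\bF$-stopping time is a measurable stopping functional of the path of $\hat X$, that $\{\rho_n\ge T\wedge\hat\tau\}$ genuinely depends only on the stopped triple $(S,v,\hat X)^{\rho_n}$ so that it transfers across the law identification, and keeping track of which (strict versus non-strict) inequalities survive the passage $\bar\rho_n\uparrow\hat\zeta$, so that one lands on exactly $P(\hat\zeta>T\wedge\tau)$ rather than $P(\hat\zeta\ge T\wedge\tau)$.
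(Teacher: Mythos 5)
Your proposal is correct and follows essentially the same route as the paper: localize $S$ by stopping times that make it a true martingale, define $Q_n$ by the stopped density, apply Girsanov to identify the dynamics of $(\hat X, v)$ under $Q_n$ with those of $(B^1,\hat v)$ under $P$ on the stopped interval, represent the $\hat\bF$-stopping time as a measurable functional of the path of $\hat X$ (the paper invokes Doob's measurability theorem explicitly, your ``stopping functional'' is the same device), and pass to the limit. The paper's different localizing sequence ($\tau_n$ built from $\int_0^t v_s^{2\alpha}\,ds$, verified via Novikov) versus your $\rho_n=\inf\{t: S_t\ge n\text{ or }v_t\ge n\}$ (verified via boundedness of $S^{\rho_n}$ by continuity) is inconsequential; and the strict/non-strict issue you flag at the end dissolves because on the explosion event $\bar\rho_n<\hat\zeta$ strictly for every $n$, so the increasing union $\bigcup_n\{\bar\rho_n\ge T\wedge\tau\}$ is exactly $\{\hat\zeta>T\wedge\tau\}$, which is the non-explosion event in \eqref{eq:stoppedattau}.
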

\begin{proof}
By \eqref{eq:S}, $S$ is a positive local martingale. Define a sequence of stopping times $(\tau_n)_{n \in \bN}$ by
$$
\tau_n=\inf \left\{t \in \reals^+: |\sigma_1+\sigma_2|^2\int_0^t v_s^{2\alpha}ds \ge n\right\} \wedge T,
$$
with $v=(v_t)_{t \ge 0}$ in \eqref{eq:v}.
Then the process $S^n$ defined by 
\begin{equation}\label{eq:Sproof}
S_t^n=S_{t \wedge \tau_n},\quad  t \ge 0
\end{equation}
 is a local martingale for $n \in \bN$. Define $Z^{n}$ by
\begin{equation}\notag
Z_t^{n} = \sigma_1\int_0^{t \wedge \tau_n} v_s^\alpha dB_s^1+\sigma_2\int_0^{t \wedge \tau_n} v_s^\alpha dB_s^2, \quad t \ge 0.
\end{equation}
Then  $S^{n}$ is the stochastic exponential of $Z^{n}$, and since $[Z^{n},Z^{n}]_t \le n$ for all $t \ge 0$, $S^{n}$ is a $(P,\bG)$-martingale for every $n \in \bN$ by Novikov's condition and $(\tau_n)_{n \in \bN}$ reduces $S$ with respect to $(P,\bG)$.

Since $S^n$ stopped at $\hat \tau$ is also a martingale, we can define a new probability measure $Q_n$ on $(\Omega, \G_T)$ as 
$$
Q_n(A)=\frac{1}{S_0}\mathbb{E}[S_{T\wedge \tau_n \wedge \hat \tau}\one_A] \qquad \text{for all $A \in \G_T$}.
$$
By the Lebesgue dominated convergence theorem, it holds
\begin{equation}\label{eq:limitsin}
\mathbb{E}[S_{T  \wedge \hat \tau}]=\lim_{n \to \infty} \mathbb{E}[S_{T\wedge  \tau_n  \wedge \hat \tau }\one_{\{\tau_n \ge  T \wedge  \hat \tau\}}]=S_0\lim_{n \to \infty} Q_n\left(\tau_n \ge T \wedge \hat \tau\right),
\end{equation}
by definition of $Q_n$. Moreover, Girsanov's Theorem implies that the processes $B^{(n,1)}$, $B^{(n,2)}$ defined by
\begin{align}
B_t^{(n,1)}=&B_t^1-\sigma_1\int_0^t \one_{\{s \le \tau_n \wedge \hat \tau\}}v_s^{\alpha}ds, \quad t \ge 0 \label{eq:Btau1} \\
B_t^{(n,2)}=&B_t^2-\sigma_2\int_0^t \one_{\{s \le \tau_n \wedge \hat \tau\}}v_s^{\alpha}ds, \quad t \ge 0 \label{eq:Btau2} 
\end{align}
are Brownian motions under $Q_n$, $n \ge 0$. Therefore under  $Q_n$, the process $v$ has dynamics
\begin{align}
&dv_t=a_1v_tdB_t^{(n,1)}  + a_2v_2dB_t^{(n,2)}  + a_3v_tdB_t^3 + \rho(L-v_t)dt\notag \\
& \qquad +\one_{\{t \le \tau_n \wedge \hat \tau \}}(a_1 \sigma_1+a_2\sigma_2)v^{\alpha+1}_t dt, \quad t \ge 0, \quad v_0=1. \label{eq:newv}
\end{align}
Consider now the process $\hat X$ introduced in  \eqref{eq:hatXsin} and define $\hat v$ as the unique, strong solution of the SDE 
\begin{equation} \label{eq:hatv}
d\hat{v}_t=a_1\hat{v}_tdB^1_t + a_2\hat{v}_tdB^2_t+ a_3v_tdB_t^3+\rho(L-\hat{v}_t)dt + (a_1 \sigma_1+a_2\sigma_2)\hat{v}^{\alpha+1}_t dt,
\end{equation}
$t \ge 0$. Note that on $[0, \tau_n \wedge \hat \tau]$, $(\hat X,v)$ have the same distribution under $Q_n$ as  $(B^1,\hat{v})$ under $P$. 

By the Doob measurability theorem (see, e.g., \cite[Lemma~1.13]{kallenberg2006foundations}), there exists a measurable function $h:\C[0,\infty)\rightarrow\reals^+$ such that $\hat \tau=h(\hat X_\cdot)$. Set $\tau = h(B^1_\cdot)$. As $T \wedge \hat \tau$ is a $\sigma(\hat X)$-stopping time there exists, by the Doob measurability theorem again, a $\B(\C[0,t])$-measurable function $\Psi_t$ such that $\one_{\{t \ge T \wedge \hat \tau\}}=\Psi_t(\hat X^t_{\cdot})$. Thus it holds
\begin{equation}\notag
\one_{\{\tau_n \ge T \wedge \hat \tau\}} = \Psi_{\tau_n}(\hat X^{\tau_n}_{\cdot}), \quad n \in \bN .
\end{equation}
Analogously, by the way we have constructed $\tau$, we have 
\begin{equation}\notag
\one_{\{\hat \tau_n \ge T \wedge \tau\}} = \Psi_{\hat \tau_n}(B^{1,\hat \tau_n}_{\cdot}),\quad n \in \bN,
\end{equation}
where $(\hat{\tau}_n)_{n \in \bN}$, are stopping times for the natural filtration of $\hat v$, defined by
$$
\hat{\tau}_n=\inf \left\{t \in \reals^+: |\sigma_1+\sigma_2|^2\int_0^s \hat{v}_u^{2\alpha}du \ge n\right\}, \quad n \ge 1.
$$
Since on $[0, \tau_n \wedge \hat \tau]$, $(\hat X,v)$ has the same law under $Q_n$ as $(B^1,\hat{v})$ under $P$, we have that 
$\Psi_{\tau_n}(\hat X^{\tau_n}_{\cdot})$ has the same law under $Q_n$ as $\Psi_{\hat \tau_n}(B^{1,\hat \tau_n}_{\cdot})$ under $P$.
Thus, from \eqref{eq:limitsin} we get 
\begin{align}
\mathbb{E}[S_{T \wedge \hat \tau}]&=S_0\lim_{n \to \infty} Q_n\left(\tau_n \ge T \wedge \hat \tau \right) 
\notag \\&=
 S_0\lim_{n \to \infty} \bE^{Q_n}\left[ \Psi_{\tau_n}(\hat X^{\tau_n}_{\cdot}) \right] 
\notag \\&=
S_0\lim_{n \to \infty} \bE^P\left[\Psi_{\hat \tau_n}(B^{1,\hat \tau_n}_{\cdot}) \right] 
\notag \\&=
 S_0\lim_{n \to \infty} P\left(\hat \tau_n \ge T \wedge \tau \right)
 \notag \\&=
 S_0 P\left(\hat\tau_n \ge T \wedge  \tau \text{ for some $n$}\right) 
 \notag \\&= 
 S_0P\left(\hat v \text{ does not explode before time $T \wedge \tau$}\right),\notag
\end{align}
and the proof is complete.
 \end{proof}

We are now ready to give the following

\begin{theorem}\label{thm:nolocalized}
Consider the stochastic volatility process $S$ defined by
\begin{align}
&dS_t=\sigma_1v_t^{\alpha}S_tdB^1_t + \sigma_2v^{\alpha}_tS_tdB_t^2, \quad t \ge 0, \quad S_0 = s>0, \label{eq:S2} \\
&dv_t=a_2v_tdB_t^2 + \rho(L-v_t)dt, \quad t \ge 0, \quad v_0 = 1,\label{eq:v2}
\end{align}
i.e., the model introduced in \eqref{eq:S}-\eqref{eq:v} with $a_1=a_3=0$, and suppose that $a_2\sigma_2>0$.
Consider the filtration $\hat \bF  \subset \bG$, generated by the process $\hat X$ defined in \eqref{eq:hatXsin}. 
Then the $P$-optional projection of $S$ into $\hat \bF$ is not an $\hat \bF$-local martingale.
\end{theorem}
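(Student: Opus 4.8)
The plan is to argue by contradiction, using Proposition~\ref{prop:stopped} to turn a putative $\hat\bF$-local martingale property of the optional projection into a statement forbidding the explosion of $\hat v$, thereby contradicting Lemma~\ref{lemma:sin2}. Write $\op S$ for the $P$-optional projection of $S$ into $\hat\bF$ and suppose it is an $\hat\bF$-local martingale. Since $S>0$ we have $\op S>0$ and $\op S_0=S_0$. As $a_1\sigma_1+a_2\sigma_2=a_2\sigma_2>0$, Lemma~\ref{lemma:sin2} lets us fix a horizon $T>0$ with $P(\{\hat v\text{ explodes before }T\})>0$. Let $(\hat\tau_k)_{k\ge1}$ be an increasing reducing sequence of $\hat\bF$-stopping times, $\hat\tau_k\uparrow\infty$ $P$-a.s., with each $(\op S)^{\hat\tau_k}$ a true martingale; then $\bE[\op S_{T\wedge\hat\tau_k}]=\op S_0=S_0$. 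Since $T\wedge\hat\tau_k$ is a finite $\hat\bF$-stopping time and $\op S$ is the optional projection of $S$, we have $\op S_{T\wedge\hat\tau_k}=\bE[S_{T\wedge\hat\tau_k}\mid\hat\F_{T\wedge\hat\tau_k}]$, so taking expectations gives $\bE[S_{T\wedge\hat\tau_k}]=S_0$ for every $k$.

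\textbf{Invoking Proposition~\ref{prop:stopped}.} Next I would apply Proposition~\ref{prop:stopped} to each $\hat\tau_k$: it produces an $\bF$-stopping time $\tau_k$ with $\bE[S_{T\wedge\hat\tau_k}]=S_0\,P(\{\hat v\text{ does not explode on }[0,T\wedge\tau_k]\})$. Combined with the previous paragraph, this forces $P(\{\hat v\text{ explodes on }[0,T\wedge\tau_k]\})=0$ for all $k$. Hence it suffices to show $T\wedge\tau_k\uparrow T$ $P$-a.s.; passing to the limit then yields $P(\{\hat v\text{ explodes before }T\})=0$, contradicting the choice of $T$ and completing the proof.

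\textbf{The key transfer step.} To control the $\tau_k$, recall from the proof of Proposition~\ref{prop:stopped} that $\tau_k$ is obtained from $\hat\tau_k$ via the Doob functional representation with the \emph{same} functional: if $\hat\tau_k=h_k(\hat X_\cdot)$ then $\tau_k=h_k(B^1_\cdot)$, and after replacing $h_k$ by $\max_{j\le k}h_j$ we may take $h_k$, hence both $\hat\tau_k$ and $\tau_k$, increasing in $k$. Thus $T\wedge\hat\tau_k=g_k(\hat X|_{[0,T]})$ and $T\wedge\tau_k=g_k(B^1|_{[0,T]})$ for common increasing path functionals $g_k$ on $C([0,T])$, and $g_k(\hat X|_{[0,T]})\uparrow T$ $P$-a.s. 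The crucial observation is that the law of $(\hat X_s)_{s\le T}$ under $P$ is equivalent to Wiener measure on $C([0,T])$. Indeed, the candidate density $\mathcal E_T:=\mathcal E\!\left(\sigma_1\int_0^{\cdot}v_s^{\alpha}\,dB^1_s\right)_T$ is a true $P$-martingale — and this is exactly where $a_1=0$ enters: applying Lemma~\ref{lemma:sin1} to the one-dimensional model $d\tilde S_t=\sigma_1 v_t^{\alpha}\tilde S_t\,dB^1_t$ (with the same $v$ of \eqref{eq:v2}, i.e.\ \eqref{eq:S}--\eqref{eq:v} with $\sigma_2=0$ and $a_1=a_3=0$) gives $\bE[\tilde S_t]=\tilde S_0\,P(\{\hat w\text{ does not explode on }[0,t]\})$, where $\hat w$ solves $d\hat w_t=a_2\hat w_t\,dB^2_t+\rho(L-\hat w_t)\,dt+(a_1\sigma_1)\hat w_t^{\alpha+1}\,dt$; since $a_1=0$ this is precisely \eqref{eq:v2}, so $\hat w$ never explodes and $\mathcal E$ is a martingale. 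Then $d\tilde P:=\mathcal E_T\,dP$ defines a probability on $\G_T$ equivalent to $P$, and Girsanov's theorem together with L\'evy's characterization make $\hat X$ a $(\tilde P,\bG)$-Brownian motion on $[0,T]$; hence the law of $(\hat X_s)_{s\le T}$ under $\tilde P$, and therefore (by equivalence) under $P$, is comparable with Wiener measure. Consequently the $P$-a.s.\ path property $g_k(\hat X|_{[0,T]})\uparrow T$ transfers to $g_k(B^1|_{[0,T]})\uparrow T$ $P$-a.s., i.e.\ $T\wedge\tau_k\uparrow T$ $P$-a.s., which is what was needed.

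\textbf{Main obstacle.} The hard part is precisely this last transfer step: extracting from the hypothetical $\hat\bF$-reducing sequence some $\bF$-stopping times that still diverge. It rests entirely on the equivalence between the law of $\hat X$ under $P$ and Wiener measure on path space — a property that would fail for general coefficients, but holds here because $a_1=0$ forces $\mathcal E$ to be a genuine martingale (not a strict local martingale). The remaining ingredients — the reduction to a reducing sequence with true-martingale stopped processes, the monotonization of $(\hat\tau_k)$ and $(\tau_k)$, and the negligible effect of an explosion occurring exactly at the single instant $T$ — are routine.
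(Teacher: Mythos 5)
Your proof is correct, and it reaches the same contradiction from the same two ingredients (Proposition~\ref{prop:stopped} and Lemma~\ref{lemma:sin2}), but the mechanism for producing the contradiction is genuinely different from the paper's. The paper argues that $S$ itself cannot be localized by $\hat\bF$-stopping times (invoking the converse direction of Theorem~3.7 of \citet{follmer2011local}), and for a single putative localizing time $\hat\tau$ it uses the independence of $\hat v$ from $B^1$ (which holds because $a_1=a_3=0$) to conclude $P(\hat v$ explodes on $[0,T\wedge\tau])>0$ whenever $\tau$ is non-degenerate. You instead work directly with a reducing sequence $(\hat\tau_k)$ for $\op S$, pass the conclusion of Proposition~\ref{prop:stopped} to the limit $k\to\infty$, and identify as the essential missing ingredient the \emph{transfer step}: that $\hat\tau_k\uparrow\infty$ a.s.\ forces $\tau_k\uparrow\infty$ a.s., which you establish by showing the law of $\hat X|_{[0,T]}$ under $P$ is equivalent to Wiener measure (via the true-martingale property of $\mathcal E(\sigma_1\int v^\alpha\,dB^1)$, which you correctly trace back to $a_1=0$). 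This is a real contribution: the paper's proof implicitly needs to pass from the $P$-a.s.\ non-degeneracy of $\hat\tau_n=h_n(\hat X)$ to that of $\tau_n=h_n(B^1)$ but does not spell this out (and its stated non-degeneracy condition ``$P(\eta=\infty)<1$'' should really read ``$P(\eta=0)<1$''), so your explicit change-of-measure argument closes a gap. The trade-off is that the paper's route is shorter once the independence of $\hat v$ from $B^1$ and Föllmer's equivalence are granted, whereas yours avoids the converse of Föllmer's theorem entirely and stays closer to first principles. A couple of very minor points: the ``monotonization'' of the functionals $h_k$ is best phrased not as replacing $h_k$ by $\max_{j\le k}h_j$ (which could disturb the true-martingale property of the stopped process) but simply as noting that the path-space event $\{w: g_k(w)\uparrow T\}$ has full measure under the law of $\hat X$ and hence, by equivalence, full Wiener measure; and the ``explosion at exactly $T$'' issue is cleanest to preempt by choosing $T$ with $P(\hat v$ explodes \emph{strictly} before $T)>0$, which is possible since $P(\hat v$ explodes in finite time$)>0$ by Lemma~\ref{lemma:sin2}.
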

\begin{proof}
The process $S$ in \eqref{eq:S2} is a strict local martingale by Proposition \ref{prop:sin}. By Proposition \ref{prop:stopped}, for every $\hat \bF$-stopping time $\hat \tau$ there exists a $\sigma(B^1)$-stopping time $\tau$ such that  
\begin{equation}\label{eq:Snotlocalized}
\bE[S_{T\wedge \bar \tau}]=S_0P(\{\hat v \text{ does not explode on }[0,T \wedge \tau]\}).
\end{equation}
where $\hat v$ is now given by
\begin{equation}\notag
d\hat v_t=a_2\hat v_tdB_t^2 +\rho(L-\hat v_t)dt + a_2 \sigma_2\hat v^{\alpha+1}_tdt , \quad t \ge 0, \quad \hat v_0 = 1.
\end{equation}
Since $a_1\sigma_1+a_2\sigma_2 = a_2\sigma_2  > 0$, Lemma \ref{lemma:sin2} implies that
$$
P(\{\hat v \text{ does not explode on }[0,t]\})<1 \quad \text{for all $t > 0$}.
$$ 
In particular, 
$$
P(\{\hat v \text{ does not explode on }[0,T \wedge \eta]\})<1
$$
 for every $\sigma(B^1)$-stopping time $\eta$ with $P(\eta = \infty) <1$, because $\hat v$ is independent of $B^1$.
Together with \eqref{eq:Snotlocalized}, this implies that $S$ cannot be localized by any sequence of $\hat \bF$-stopping times. Consequently, the optional projection of $S$ into $\hat \bF$ cannot be an $\hat \bF$-local martingale by Theorem 3.7 of \citep{follmer2011local}.
\end{proof}
By Proposition \ref{prop:sin} and Theorem \ref{thm:nolocalized}, we obtain a further example of two probability measures $P$ and $Q$, of a $P$-local martingale $S$ and of a non trivial filtration $\hat \bF \subset \bG$, such that the optional projection of $S$ into $\hat \bF$ under $P$ is not a $P$-local martingale but the optional  projection of $S$ into $\hat \bF$ under $Q$ is a $Q$-martingale. 


 \appendix
 
\section{Optional projections and optimal transport}\label{app:opttransport}

Consider a probability space $(\Omega, \F, \bP)$ equipped with filtrations $\bF=(\F_t)_{t \ge 0}$ and $\bG=(\G_t)_{t \ge 0}$, $\bF \subset \bG$, both satisfying the usual hypothesis of right-continuity and completeness. 

For $\bG$-adapted c\`adl\`ag processes $X$ and $Z$, we denote $X\ll_\bG Z$ if $Z-X$ is a nonnegative $\bG$-supermartingale. 
\begin{proposition}\label{prop:locmartsupermart}
Let $X$ be a nonnegative, c\`adl\`ag $\bG$-supermartingale. Then $X$ is a $\bG$-local martingale if and only if $X\ll_\bG Z$ for all $\bG$- supermartingales $Z\ge X$. 
\end{proposition}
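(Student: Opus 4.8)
\emph{Proof strategy.} The plan is to prove the two implications separately: the forward one (local martingale $\Rightarrow$ domination property) by localization, and the reverse one via the Doob--Meyer decomposition.

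For the forward implication, I would fix a localizing sequence $(\tau_n)_n$ with $\tau_n\uparrow\infty$ such that each $X^{\tau_n}$ is a true $\bG$-martingale, and take an arbitrary $\bG$-supermartingale $Z\ge X$ (automatically nonnegative), so $Z-X\ge 0$. For $0\le s\le t$, the stopped process $Z^{\tau_n}$ is again a supermartingale and $X^{\tau_n}$ a martingale, hence $\bE[(Z-X)_{t\wedge\tau_n}\mid\G_s]=\bE[Z_{t\wedge\tau_n}\mid\G_s]-\bE[X_{t\wedge\tau_n}\mid\G_s]\le Z_{s\wedge\tau_n}-X_{s\wedge\tau_n}=(Z-X)_{s\wedge\tau_n}$, i.e. $(Z-X)^{\tau_n}$ is a nonnegative supermartingale. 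Letting $n\to\infty$, using right-continuity (so $(Z-X)_{t\wedge\tau_n}\to(Z-X)_t$ a.s.) together with the conditional Fatou lemma (legitimate since $Z-X\ge0$), I would obtain $\bE[(Z-X)_t\mid\G_s]\le(Z-X)_s$; taking expectations in the same estimate gives $(Z-X)_t\in L^1$. Hence $Z-X$ is a nonnegative $\bG$-supermartingale, i.e. $X\ll_\bG Z$.

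For the reverse implication, I would invoke the Doob--Meyer decomposition. A nonnegative c\`adl\`ag supermartingale is locally of class (D) (one may localize at the first passage times of $X$ above each level $k$, past which $X$ is dominated by an integrable random variable), hence a special semimartingale, so $X=X_0+N-A$ with $N$ a $\bG$-local martingale, $N_0=0$, and $A$ predictable nondecreasing with $A_0=0$. The key move is to test the hypothesis against $Z:=X+A=X_0+N$: since $X\ge0$ and $A\ge0$, this $Z$ is a nonnegative local martingale, hence a $\bG$-supermartingale, and $Z\ge X$. The assumption then forces $Z-X=A$ to be a nonnegative $\bG$-supermartingale, so in particular $A_t\in L^1$. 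As $A$ is nondecreasing, $A_s\le\bE[A_t\mid\G_s]\le A_s$ for $s\le t$, so $A$ is a martingale with $\bE[A_t]=\bE[A_0]=0$ and $A_t\ge0$, whence $A\equiv0$. Therefore $X=X_0+N$ is a $\bG$-local martingale.

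I expect the reverse direction to be the substantive part. Its two ingredients are (i) the availability of the Doob--Meyer decomposition for a merely nonnegative supermartingale, which rests on the class-(D) localization, and (ii) the idea of choosing $Z=X+A$ (the supermartingale obtained by adding back the compensator) as the test process, after which the conclusion $A\equiv0$ is immediate. The forward direction should be a routine localization-plus-Fatou argument.
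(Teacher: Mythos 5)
Your proof is correct and follows essentially the same route as the paper's: a localization-plus-Fatou argument for the forward implication, and the Doob--Meyer decomposition with the compensator added back as the test supermartingale for the converse. The only cosmetic difference is that the paper phrases the reverse direction as a contradiction, whereas you argue directly that the compensator $A$ must vanish.
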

\begin{proof}
Assume that $X$ is a local martingale, and consider a supermartingale $Z \ge X$. Let $(\tau_n)_{n \ge 0}$ be a localizing sequence for $X$. By Fatou's lemma, 
\[
\bE[Z_t-X_t | \F_s] \le \liminf_{n\to \infty} \bE[Z_{t \wedge \tau_n} - X_{t \wedge \tau_n} | \F_s]  \le Z_s -  \liminf_{n\to \infty} \bE[ X_{t \wedge \tau_n} | \F_s] = Z_s - X_s
\]
for every $0 \le s \le t$, so $Z-X$ is a supermartingale.

Suppose now that $Z-X$ is a nonnegative supermartingale for every supermartingale $Z \ge X$. Assume also that $X$ is not a local martingale, i.e., that it is a strict supermartingale. Then $X$ has the Doob-Meyer decomposition 
\[
X_t = M_t - Y_t, \quad t \ge 0,
\]
where  $M \ge X$ is a nonnegative local martingale and $Y\ne 0$ is a nondecreasing  process. Thus $M\ge X$ is a supermartingale for which $M-X = Y$ is not a supermartingale, which is a contradiction.
\end{proof}

Proposition \ref{prop:locmartsupermart} can be used to characterize when the optional projection of a local martingale remains a local martingale, and to provide a sufficient and necessary condition for this property via optimal transport.

\begin{theorem}
For a nonnegative $\bG$-local martingale $X$,  $\op X$ is a $\bF$-local martingale if and only if, for every $\bF$-supermartingale $Y\ge \op X$,  there is a $\bG$-supermartingale $Z\ge X$ with $\op Z \ll_\bF Y$.
\end{theorem}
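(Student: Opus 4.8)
The plan is to reduce both implications to Proposition~\ref{prop:locmartsupermart}, exploiting that the optional projection $\op{(\cdot)}$ into $\bF$ is linear and order-preserving and maps nonnegative $\bG$-supermartingales to nonnegative $\bF$-supermartingales; in particular $\op{(Z-X)}=\op Z-\op X$ whenever both projections are defined, and $\op X$ is itself a nonnegative c\`adl\`ag $\bF$-supermartingale, since $X$, being a nonnegative $\bG$-local martingale, is a $\bG$-supermartingale (Fatou), and the optional projection of a supermartingale is a supermartingale.

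For the forward implication, suppose $\op X$ is an $\bF$-local martingale and let $Y\ge\op X$ be an arbitrary $\bF$-supermartingale. I would simply take $Z:=X$: it is a nonnegative $\bG$-local martingale, hence a $\bG$-supermartingale with $Z\ge X$, and $\op Z=\op X$. By Proposition~\ref{prop:locmartsupermart} applied to $\op X$ in the filtration $\bF$, the fact that $\op X$ is an $\bF$-local martingale gives $\op X\ll_\bF Y$, i.e.\ $\op Z\ll_\bF Y$, which is exactly what is required.

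For the converse, assume that for every $\bF$-supermartingale $Y\ge\op X$ there is a $\bG$-supermartingale $Z\ge X$ with $\op Z\ll_\bF Y$. By Proposition~\ref{prop:locmartsupermart} it suffices to check $\op X\ll_\bF Y$ for every such $Y$. Fix $Y$ and take the corresponding $Z$. Since $X$ is a $\bG$-local martingale and $Z\ge X$ is a $\bG$-supermartingale, the easy direction of Proposition~\ref{prop:locmartsupermart} gives $X\ll_\bG Z$, i.e.\ $Z-X$ is a nonnegative $\bG$-supermartingale; projecting onto $\bF$ shows $\op Z-\op X=\op{(Z-X)}$ is a nonnegative $\bF$-supermartingale. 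On the other hand $Y-\op Z$ is a nonnegative $\bF$-supermartingale because $\op Z\ll_\bF Y$. Adding, $Y-\op X=(Y-\op Z)+(\op Z-\op X)$ is a nonnegative $\bF$-supermartingale, so $\op X\ll_\bF Y$; as $Y$ was arbitrary, Proposition~\ref{prop:locmartsupermart} yields that $\op X$ is an $\bF$-local martingale.

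The argument is essentially bookkeeping once Proposition~\ref{prop:locmartsupermart} is available; the only points needing care are the standard facts that a supermartingale admits a c\`adl\`ag optional projection which is again a supermartingale and that this projection is linear, so that $\op{(Z-X)}=\op Z-\op X$ — these I would simply cite rather than prove. The one mild subtlety worth making explicit is that in the forward direction the choice $Z:=X$ is legitimate precisely because ``$Z\ge X$'' is meant non-strictly, so no genuine optimization over $Z$ is needed there; the real content of the theorem is carried by the converse.
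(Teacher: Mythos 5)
Your proof is correct and follows exactly the paper's argument: the forward direction takes $Z:=X$ and invokes the easy half of Proposition~\ref{prop:locmartsupermart} for $\op X$, while the converse decomposes $Y-\op X=(Y-\op Z)+(\op Z-\op X)$ into a sum of two nonnegative $\bF$-supermartingales, using that $Z-X$ is a $\bG$-supermartingale (again by Proposition~\ref{prop:locmartsupermart}) and that optional projection preserves nonnegative supermartingales. You merely spell out a couple of steps the paper leaves implicit; the route is identical.
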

\begin{proof}
If $\op X$ is a local martingale, we may choose $Z=X$. To prove the converse, $Z-X$ is a nonnegative $\bG$-supermartingale, $\op Z-\op X$ is a nonnegative $\bF$-supermartingale, so $Y-\op X = Y-\op Z+\op Z-\op X$ is a nonnegative $\bF$-supermartingale.  
\end{proof}

Given laws $\nu_s,\nu_t$ on $\mathbb R_+$, we denote $\nu_s \ll_{cdo} \nu_t$ if $\int f d \nu_s \le \int f d \nu_t$ for all $f\in C_d$, where $C_d$ is the set of real-valued convex and decreasing functions on $\reals_+$. Let now $X$ be a $\bG$-adapted nonnegative process with law $\nu$ and $\nu_s$ be the law of $X_s$, $s \ge 0$. If $\nu_s \ll_{cdo} \nu_t$ for all $s \le t$,  we say that the law $\nu$ is convex decreasing. In this case, there exists a Markov process with law $\nu$ which is a $\bG$-supermartingale; see Theorem 3 of \citet{kellerer1972markov}. Also note that if a process is a $\bG$-supermartingale, its law is convex decreasing.

We denote by $S(\nu)$ the set of joint laws of $(Z,X)$, where $Z$ ranges over all supermartingales dominating $X$. For $\pi \in S(\nu)$, $\pi_t$ denotes the joint law of $(Z_t,X_t)$. An application of Proposition \ref{prop:locmartsupermart} leads to the following result.

\begin{proposition}
Given a nonnegative $\bG$-adapted process $X$ with the law $\nu$, $X$ is a $\bG$-local martingale if and only if for every $ t \ge 0$,
\begin{equation}\label{eq:conditionlaws}
\sup_{\pi \in S(\nu)} \sup_{s< t} \sup_{f\in C_d} \left[\int f(z -x)d\pi_t(z,x) - \int f(z -x)d\pi_s(z,x) \right]\le 0.
\end{equation}
\end{proposition}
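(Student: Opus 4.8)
The plan is to read the Proposition off Proposition~\ref{prop:locmartsupermart}, which already characterizes the $\bG$-local martingale property of a nonnegative $\bG$-supermartingale $X$ as: $Z-X$ is a nonnegative $\bG$-supermartingale for every $\bG$-supermartingale $Z\ge X$. It then remains to translate this path-level statement into a statement about the time-marginals $\pi_s$ of the joint laws of $(Z_s,X_s)$, $\pi\in S(\nu)$. The bridge is the correspondence, recalled just before the statement, between $\bG$-supermartingales and families of marginals that are monotone in the order $\ll_{cdo}$: on the one hand, if $D$ is a $\bG$-supermartingale then for $f\in C_d$ conditional Jensen together with monotonicity of $f$ gives $\bE[f(D_t)|\G_s]\ge f(\bE[D_t|\G_s])\ge f(D_s)$, so $f(D)$ is a submartingale and $s\mapsto \int f(z-x)\,d\pi_s(z,x)$ is $\ll_{cdo}$-monotone; on the other hand, the Kellerer-type result of \citet{kellerer1972markov} produces, conversely, a supermartingale out of any $\ll_{cdo}$-monotone family of marginals. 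Together these make the reduction to \eqref{eq:conditionlaws} tight.

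Concretely, for the direction ``$X$ a $\bG$-local martingale $\Rightarrow$ \eqref{eq:conditionlaws}'' I would fix $\pi\in S(\nu)$ realized by a $\bG$-supermartingale $Z\ge X$; Proposition~\ref{prop:locmartsupermart} makes $D:=Z-X$ a nonnegative $\bG$-supermartingale, so its time-marginals satisfy the $\ll_{cdo}$ comparison between $s$ and $t$ for all $s<t$ and $f\in C_d$, which is precisely the comparison appearing inside the supremum in \eqref{eq:conditionlaws}; taking suprema over $\pi$, $s$ and $f$ closes this direction. For the converse I would argue by contraposition. If $X$ is not a $\bG$-local martingale it is a strict $\bG$-supermartingale, hence, as in the proof of Proposition~\ref{prop:locmartsupermart}, it admits a Doob--Meyer decomposition $X=M-Y$ with $M\ge X$ a nonnegative (local, in particular super-) martingale and $Y$ a nonzero nondecreasing process; since $X$ is not a local martingale, $Y$ genuinely increases, i.e. $\bE[Y_s]<\bE[Y_t]$ for some $s<t$. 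The joint law of $(M,X)$ lies in $S(\nu)$, and $M-X=Y$ being nondecreasing, its marginals violate the $\ll_{cdo}$-monotonicity encoded in \eqref{eq:conditionlaws} --- already visible on the test function $f(x)=-x\in C_d$ --- so \eqref{eq:conditionlaws} fails.

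The step I expect to be the main obstacle is the converse, because \eqref{eq:conditionlaws} only constrains one-dimensional marginals while supermartingality of $Z-X$ is a genuinely dynamic property; the argument works because the failure of the local martingale property is already witnessed by the single, highly structured supermartingale $M$ from the Doob--Meyer decomposition, for which $Z-X$ is monotone in time and hence detectable at the level of marginals. Two bookkeeping points to pin down along the way: the statement only has content when $X$ is a $\bG$-supermartingale (equivalently $S(\nu)\neq\emptyset$, equivalently $\nu$ convex decreasing), and one should fix the usual normalization of the Doob--Meyer decomposition (or otherwise note that $M_0\ge X_0$ together with a strict increase of $Y$ already forces the required discrepancy between $\pi_s$ and $\pi_t$).
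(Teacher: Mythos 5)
Your proposal follows the same route as the paper: read the claim off Proposition~\ref{prop:locmartsupermart}, use that supermartingales have convex decreasing laws for the forward implication, and argue the converse by contraposition via the Doob--Meyer decomposition. If anything you are slightly tighter than the paper in the converse direction: the paper asserts ``$Z-X$ is not a supermartingale, then the law of $Z-X$ is not convex decreasing,'' which is not automatic for an arbitrary $Z$, whereas you correctly single out the specific $Z=M$ from Doob--Meyer, for which $Z-X=Y$ is nondecreasing and the failure of convex decreasing monotonicity is visible already on the test function $f(x)=-x\in C_d$.
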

\begin{proof}
If \eqref{eq:conditionlaws} does not hold, there exists a supermartingale $Z \ge X$ such that the law of $Z - X$ is not convex decreasing. Then $Z - X$ is not a supermartingale, so $X$ is not a local martingale by Proposition \ref{prop:locmartsupermart}. Suppose now that $X$ is not a local martingale. By Proposition \ref{prop:locmartsupermart}, there exists a supermartingale $Z \ge X$ such that $Z-X$ is not a supermartingale. Then the law of $Z-X$ is not convex decreasing, and \eqref{eq:conditionlaws} fails.
\end{proof}

\bibliography{bib}

\begin{thebibliography}{}

\bibitem[Biagini et~al., 2014]{Biagini}
Biagini, F., F{\"o}llmer, H., and Nedelcu, S. (2014).
\newblock {Shifting martingale measures and the slow birth of a bubble}.
\newblock {\em Finance and Stochastics}, 18(2):297--326.

\bibitem[Cetin et~al., 2004]{cetin2004modeling}
Cetin, U., Jarrow, R., Protter, P., and Y{\i}ld{\i}r{\i}m, Y. (2004).
\newblock Modeling credit risk with partial information.
\newblock {\em The Annals of Applied Probability}, 14(3):1167--1178.

\bibitem[Cox and Hobson, 2005]{CoxHobson}
Cox, A. and Hobson, D. (2005).
\newblock Local martingales, bubbles and option prices.
\newblock {\em Finance Stochastics}, 9(4):477--492.

\bibitem[Delbaen and Schachermayer, 1994]{delbaen1994arbitrage}
Delbaen, F. and Schachermayer, W. (1994).
\newblock Arbitrage and free lunch with bounded risk for unbounded continuous
  processes.
\newblock {\em Mathematical Finance}, 4(4):343--348.

\bibitem[F{\"o}llmer and Protter, 2011]{follmer2011local}
F{\"o}llmer, H. and Protter, P. (2011).
\newblock Local martingales and filtration shrinkage.
\newblock {\em ESAIM: Probability and Statistics}, 15:S25--S38.

\bibitem[Jarrow et~al., 2011]{JarrowKchiaProtter}
Jarrow, R., Kchia, Y., and Protter, P. (2011).
\newblock { How to detect an asset bubble}.
\newblock {\em SIAM Journal on Financial Mathematics}, 2:839--865.

\bibitem[Jarrow and Protter, 2004]{jarrow2004structural}
Jarrow, R. and Protter, P. (2004).
\newblock Structural versus reduced form models: a new information based
  perspective.
\newblock {\em Journal of Investment management}, 2(2):1--10.

\bibitem[Jarrow and Protter, 2013]{jarrow2013positive}
Jarrow, R. and Protter, P. (2013).
\newblock Positive alphas, abnormal performance, and illusory arbitrage.
\newblock {\em Mathematical Finance: An International Journal of Mathematics,
  Statistics and Financial Economics}, 23(1):39--56.

\bibitem[Jarrow et~al., 2007a]{JarrowProtter2007}
Jarrow, R., Protter, P., and Shimbo, K. (2007a).
\newblock {Asset price bubbles in complete markets}.
\newblock {\em Advances in Mathematical Finance}, In Honor of Dilip B.
  Madan:105--130.

\bibitem[Jarrow et~al., 2010]{JarrowProtter2010}
Jarrow, R., Protter, P., and Shimbo, K. (2010).
\newblock {Asset price bubbles in incomplete markets}.
\newblock {\em Mathematical Finance}, 20(2):145--185.

\bibitem[Jarrow et~al., 2007b]{jarrow2007information}
Jarrow, R.~A., Protter, P., and Sezer, A.~D. (2007b).
\newblock Information reduction via level crossings in a credit risk model.
\newblock {\em Finance and Stochastics}, 11(2):195--212.

\bibitem[Kallenberg, 2006]{kallenberg2006foundations}
Kallenberg, O. (2006).
\newblock {\em Foundations of modern probability}.
\newblock Springer Science \& Business Media.

\bibitem[Karatzas and Ruf, 2016]{karatzas2016distribution}
Karatzas, I. and Ruf, J. (2016).
\newblock Distribution of the time to explosion for one-dimensional diffusions.
\newblock {\em Probability Theory and Related Fields}, 164(3-4):1027--1069.

\bibitem[Kellerer, 1972]{kellerer1972markov}
Kellerer, H.~G. (1972).
\newblock Markov-komposition und eine anwendung auf martingale.
\newblock {\em Mathematische Annalen}, 198(3):99--122.

\bibitem[Larsson, 2014]{larsson2014filtration}
Larsson, M. (2014).
\newblock Filtration shrinkage, strict local martingales and the f{\"o}llmer
  measure.
\newblock {\em The Annals of Applied Probability}, 24(4):1739--1766.

\bibitem[Loewenstein and Willard, 2000]{LoewensteinWillard}
Loewenstein, M. and Willard, G. (2000).
\newblock Rational equilibrium asset-pricing bubbles in continuous trading
  models.
\newblock {\em Journal of Economic Theory}, 91(1):17--58.

\bibitem[Mijatovic and Urusov, 2012]{MijatovicUrusov}
Mijatovic, A. and Urusov, M. (2012).
\newblock On the martingale property of certain local martingales.
\newblock {\em Probability Theory and Related Fields}, 152(1-2):1--30.

\bibitem[Protter, 2013]{Protter2013}
Protter, P. (2013).
\newblock {\em {A mathematical theory of financial bubbles}}, volume 2081 of
  Lecture Notes in Mathematics of {\em {V. Henderson and R. Sincar editors,
  Paris-Princeton Lectures on Mathematical Finance}}.
\newblock Springer.

\bibitem[Revuz and Yor, 1999]{RevuzYor}
Revuz, D. and Yor, M. (1999).
\newblock {\em Continuous Martingales and Brownian Motion, Third Edition}.
\newblock Springer-Verlag, New York.

\bibitem[Sin, 1998]{sin1998complications}
Sin, C.~A. (1998).
\newblock Complications with stochastic volatility models.
\newblock {\em Advances in Applied Probability}, 30(1):256--268.

\bibitem[Yang and Chu, 2017]{yang2017approximating}
Yang, Z.-H. and Chu, Y.-M. (2017).
\newblock On approximating the modified bessel function of the second kind.
\newblock {\em Journal of Inequalities and Applications}, 2017(1):41.

\end{thebibliography}
\end{document}